\newsavebox{\theorembox}
\newsavebox{\lemmabox}
\newsavebox{\corollarybox}
\newsavebox{\propositionbox}
\newsavebox{\examplebox}
\newsavebox{\conjecturebox}
\newsavebox{\algbox}
\newsavebox{\qbox}
\newsavebox{\problembox}
\newsavebox{\definitionbox}
\newsavebox{\assumptionbox}
\newsavebox{\hypothesisbox}
\savebox{\theorembox}{\noindent\bf Theorem}
\savebox{\lemmabox}{\noindent\bf Lemma}
\savebox{\corollarybox}{\noindent\bf Corollary}
\savebox{\propositionbox}{\noindent\bf Proposition}
\savebox{\examplebox}{\noindent\bf Example}
\savebox{\conjecturebox}{\noindent\bf Conjecture}
\savebox{\algbox}{\noindent\bf Algorithm}
\savebox{\qbox}{\noindent\bf Question}
\savebox{\definitionbox}{\noindent\bf Definition}
\savebox{\problembox}{\noindent\bf Problem}
\savebox{\assumptionbox}{\noindent\bf Assumption}
\savebox{\hypothesisbox}{\noindent\bf Hypothesis}
\newtheorem{theorem}{\usebox{\theorembox}}
\newtheorem{lemma}[theorem]{\usebox{\lemmabox}}
\newtheorem{corollary}[theorem]{\usebox{\corollarybox}}
\newtheorem{proposition}[theorem]{\usebox{\propositionbox}}
\newtheorem{definition}{\usebox{\definitionbox}}
\newcommand{\qed}{\;\;\;\Box}
\newenvironment{proof}{\par{\bf Proof:}}{\(\qed\) \par}
\begin{document}

\title{Contagious Sets in Expanders}

\author{Amin Coja-Oghlan\thanks{Goethe University. {\tt acoghlan@math.uni-frankfurt.de}. Supported by ERC Starting
Grant 278857–PTCC (FP7). }
\and Uriel Feige\thanks{The Weizmann Institute. {\tt
    uriel.feige@weizmann.ac.il}. Supported in part by The Israel Science Foundation (grant No. 621/12) and by the Citi Foundation}
\and Michael Krivelevich\thanks{Tel-Aviv University. {\tt krivelev@post.tau.ac.il }.
    Research supported in part by: USA-Israel BSF Grant 2010115 and by grant 912/12 from the Israel Science
Foundation.}
        \and Daniel Reichman\thanks{The Weizmann Institute. {\tt
    daniel.reichman@gmail.com}. Supported in part by The Israel Science Foundation (grant No. 621/12) and by the Citi Foundation and by a Weizmann-Warwick Making Connections Grant: "The Interplay between Algorithms and Randomness."}}

\maketitle

\begin{abstract}
We consider the following activation process in undirected graphs: a vertex is active either if it belongs to a set of initially activated vertices or if at some point it has at least $r$ active neighbors, where $r>1$ is the activation threshold.

A \emph{contagious set} is a set whose activation results with the entire graph being active. Given a graph $G$, let $m(G,r)$ be the minimal size of a contagious set. Computing $m(G,r)$ is NP-hard.

It is known that for every $d$-regular or nearly $d$-regular graph on $n$ vertices, $m(G,r) \le O(\frac{nr}{d})$.
We consider such graphs that additionally have expansion properties, parameterized by the spectral gap and/or the girth of the graphs.

The general flavor of our results is that sufficiently strong expansion (e.g., $\lambda(G)=O(\sqrt{d})$, or girth $\Omega(\log \log d)$) implies that $m(G,2) \le O(\frac{n}{d^2})$ (and more generally, $m(G,r) \le O(\frac{n}{d^{r/(r-1)}})$). Significantly weaker expansion properties suffice in order to imply that $m(G,2)\le O(\frac{n \log d}{d^2})$. For example, we show this for graphs of girth at least~7, and for graphs with $\lambda(G)<(1-\epsilon)d$, provided the graph has no 4-cycles. Nearly $d$-regular expander graphs can be obtained by considering the binomial random graph $G(n,p)$ with $p \simeq \frac{d}{n}$ and $d > \log n$. For such graphs we prove that $\Omega(\frac{n}{d^2 \log d}) \le m(G,2) \le O(\frac{n\log\log d}{d^2\log d})$ almost surely.

Our results are algorithmic, entailing simple and efficient algorithms for selecting contagious sets.

\end{abstract}
\newpage
\tableofcontents
\newpage
\section{Introduction}
Threshold models in graphs and networks have received much attention in diverse research fields. Typically in such models there is an undirected graph $G=(V,E)$ where every node $v \in V$ has a threshold function $t(v)$. In addition, it is assumed that every node can be in two states: either active or inactive. An initial set of nodes (termed seeds) is activated. An inactive vertex $v$ becomes active once it has at least $t(v)$ active neighbors. In this work we focus on progressive models: once a vertex is active, it remains active forever.

Threshold models emerge in various settings such as brain modeling, diffusion of innovation, ideas, and trends in social networks, resilience to cascading failures in financial networks, power grids and communication networks \cite{blume, Kleinberg1,Gra,peleg,Tlusty}. Within computer science, the rising popularity of social media has resulted in much interest in various optimization problems related to cascading behavior in networks \cite{Domingos,Kleinberg1,Mossel}.

We shall focus on threshold models where every vertex has the same threshold $r$ (we will mostly assume $r$ is small, e.g., $2$ or $3$). Such activation rules, which are often referred to as \emph{bootstrap percolation}, have been introduced in statistical physics settings \cite{Chal}. (A note regarding terminology. The term {\em bootstrap percolation} is sometimes used with the implicit assumption that the set of seeds is random. In this paper we use this term also when the set of seeds is selected deterministically rather than at random.)
Formally, in $r$-\emph{neighbor bootstrap percolation} we are given an undirected graph $G=(V,E)$ and an integer $r>1$. Every vertex is either \emph{active} or \emph{inactive}. A set of vertices composed entirely of active vertices is called active. Initially, a set of vertices $A_0$ is activated. These vertices are called \emph{seeds}. A contagious process evolves in discrete steps where for $i>0$,
$$A_i=A_{i-1}\cup \{v:|N(v)\cap A_{i-1}|\geq r\},$$
where $N(v)$ is the set of neighbors of $v$. In words, a vertex becomes active in a given step if it has at least $r$ active neighbors. We refer to $r$ as the \emph{threshold}. Set
$$\langle A_0 \rangle=\bigcup_iA_i.$$
\begin{definition}
Given $G=(V,E)$, a set $A_0\subseteq V$ is called \emph{contagious} if $\langle A_0 \rangle=V$. In words, activating $A_0$ results with the entire graph being activated.
The minimal cardinality of a contagious set is denoted by $m(G,r)$. For a contagious set $A_0$, the number of generations is the minimal integer $t$ with $\bigcup_{i \leq t}A_i=V$.
\end{definition}

Bootstrap percolation has been subjected to extensive research in computer science (see for example \cite{Ackerman,Chen09,rappaport}) as well as in probabilistic and combinatorial settings \cite{BB,BP,Lattice,peres,JanLuc}. It is known that in every $d$-regular graph $m(G,r) \le \frac{rn}{d+1}$ \cite{Ackerman,Reichman}. For certain families of graphs (a collection of disjoint cliques each of size $d+1$), $m(G,r)=\frac{rn}{d+1}$.

\subsection{Contagious sets in expander graphs: motivation}

In this work we study how $m(G,r)$ depends on the \emph{expansion} properties of $G$. Let $G$ be a $d$-regular graph. We shall distinguish between two types of expansion properties, and associate one parameter with each type. One type is what we refer to as {\em global expansion}. The parameter that we associate with it is $\lambda(G)$, the second largest eigenvalue (in absolute value) of the adjacency matrix of $G$. We focus on \emph{spectral expanders}, namely, graphs for which $\lambda(G) \le \delta d$ for some $\delta <1$ (observe that for every $d$-regular graph $\lambda(G) \leq d$). We refer to this class of graphs as $(n,d,\lambda)$-graphs, where $n$ is the number of vertices. The other type is what we refer to as {\em local expansion}. The parameter that we associate with it is the girth $g$ (the length of a shortest cycle in $G$). If $g \ge 2k+1$ this implies that every vertex has $d(d-1)^{k-1}$ distinct neighbors at distance $k$ from it.  We remark that large girth does not imply small $\lambda$ (a graph might have high girth without even being connected, in which case $\lambda = d$), and $\lambda<\delta d$ need not imply high girth (a graph with $\lambda<\delta d$ may have triangles and four-cycles). We also remark that our results concerning high girth graphs can be extended to graphs that do have short cycles, provided that every small set of vertices has a sufficiently large neighborhood. Details of this are omitted from this manuscript.

Expanders are rich mathematical objects with diverse applications in algebra, combinatorics, probability and theoretical computer science \cite{linial}. Furthermore, expander graphs are used in designing fault tolerant networks, hence it makes sense to study various algorithmic problems on expanders and there are several works in this flavor \cite{Broder,Capalbo,Kleinbergr}. Understanding optimization problems on expanders and random graphs may be useful in understanding these problems in worst-case settings (see for example \cite{Arora}).
The study of combinatorial optimization problems on graphs with high girth is quite natural as well.

Several works have demonstrated that expanders are resilient to random or adversarial faults in the sense that they keep a certain degree of connectivity in the presence of faulty edges or nodes \cite{AlonChung,Kaplan,eppstein}.
Our results imply that for expander graphs, $m(G,r)$ is substantially smaller than the bound $\frac{rn}{d+1}$ which holds for arbitrary $d$-regular graphs, especially when $d$ is large and $r$ is small. In fact, even relatively modest conditions on the girth of $G$ (e.g., girth larger than four) already entail upper bounds on $m(G,r)$ which are substantially smaller than $\frac{rn}{d+1}$. Hence properties (such as expansion) that make a network more resilient to \emph{static} failures might make it more vulnerable to cascading faults (within the bootstrap percolation model).

\subsection{Our results}

It will be convenient for us to distinguish between three algorithms for selecting seeds.

{\em Random-parallel.} In this algorithm one fixes a parameter $p \in (0,1)$ (that may depend on the input graph $G$), and initially activates each vertex independently with probability $p$. If the set of seeds (initially activated vertices) happens to be contagious the algorithm succeeds, and if not it fails. This is typically the algorithm implicitly associated with the term bootstrap percolation.

{\em Random-sequential.} This algorithm proceeds in rounds. In each round, the algorithm picks a new vertex at random to become a seed, but only among those vertices that have not been activated in previous rounds (neither by becoming seeds, nor by a cascade effect).

{\em Greedy.} This is a family of algorithms, parameterized by the greedy rule that is used. The algorithm proceeds in rounds. In each round the algorithm selects one vertex as a seed according to some greedy rule. A natural rule is to select the vertex whose activation will result in the largest cascade of newly activated vertices. In our work we shall consider other greedy rules as well.

Our first result concerns spectral expanders.
To put the following theorem in context one should note that for every $d$-regular graph $\lambda \ge \Omega(\sqrt{d})$, and that for most $d$-regular graphs $\lambda \le O(\sqrt{d})$ (see~\cite{linial}, for example).

\begin{theorem}
\label{thm:Expander}
Let $G$ be an $(n,d,\lambda)$-graph. If $\lambda=O(\sqrt{d})$ then $m(G,2)=O(\frac{n}{d^2})$. More generally, if $\lambda \le \frac{1}{\sqrt{l}}d$ and $l$ is sufficiently large,
then $m(G,2)=O(\frac{n}{l^2})$. Moreover, a contagious set can be chosen by the random-parallel algorithm (with a value of $p = O(l^{-2})$). For the randomly constructed contagious set, the number of generations until complete activation is $O(\log_l \log n+\log\log d)$ with probability $1-o(1)$.
\end{theorem}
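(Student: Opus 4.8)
The plan is to analyze the random-parallel algorithm with seed probability $p = c/l^2$ (equivalently $p = \Theta(d^{-2})$ when $\lambda = O(\sqrt d)$) and show that after $O(\log_l\log n + \log\log d)$ rounds the whole graph is active with probability $1-o(1)$. The proof proceeds in two conceptual phases. In the \emph{bootstrapping phase} one argues that, starting from a random $p$-set, the activated set grows rapidly: I would track the size $a_i = |A_i|$ and show that as long as $a_i$ is not too large, $a_{i+1}$ is roughly $\min\{n, a_i \cdot \Theta(d p)^{?}\}$-- more precisely, the number of newly activated vertices at least squares (or better) from one generation to the next until a constant fraction of $V$ is active. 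The mechanism is: a vertex $v$ becomes active in round $i+1$ if it has $\ge 2$ neighbors in $A_i$; if $A_i$ is a "pseudorandom-looking" set of size $a_i$, then by the expander mixing lemma the number of edges between $A_i$ and $V\setminus A_i$ and, more importantly, the number of vertices with $\ge 2$ neighbors in $A_i$ is controlled, and one shows this count grows like $a_i^2 d^2 / n$ (up to constants), which is the source of the doubly-exponential convergence and the $\log_l\log n$ generation count.

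The first step in detail: I would condition on the random seed set $A_0$ and show $|A_0| = \Theta(pn) = \Theta(n/l^2)$ whp by Chernoff; this already gives the claimed bound on $m(G,2)$ provided the process completes. Then, to control growth, the key tool is the expander mixing lemma: for any $S, T \subseteq V$, $\bigl| e(S,T) - \tfrac{d}{n}|S|\,|T| \bigr| \le \lambda \sqrt{|S|\,|T|}$. To count vertices activated by $A_i$ I would estimate, via a second-moment / pair-counting argument, the number of \emph{pairs} of edges from $V\setminus A_i$ into $A_i$ sharing an endpoint outside; the expected number of such "cherries" centered at a fixed outside vertex is about $\binom{d_{A_i}(v)}{2}$, and summing and using mixing plus a convexity bound shows that $\Omega\!\bigl(a_i^2 d^2/n\bigr)$ outside vertices receive $\ge 2$ edges, as long as $\lambda \sqrt{a_i}$ is dominated by $d\, a_i/\sqrt n$, i.e. as long as $a_i \gg \lambda^2 n/d^2 = \Theta(n/(l d))$, which is satisfied once $a_i \gtrsim n/l^2 \cdot l = n/l$... one has to be careful here, and this is exactly where $\lambda = d/\sqrt l$ enters: the mixing error term must stay below the main term throughout, which forces $p \gtrsim 1/l^2$ and gives the threshold for the whole argument. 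I would iterate the growth estimate until $a_i \ge n/2$.

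The second, \emph{mopping-up phase}, handles the last half of the vertices: once $|A_i| \ge n/2$, essentially every remaining vertex has $\ge 2$ neighbors inside $A_i$ unless its neighborhood is unusually concentrated in $V\setminus A_i$; again by the mixing lemma, the number of vertices with $< 2$ neighbors in a set of size $\ge n/2$ is at most $O(\lambda^2 n/d^2) = O(n/(ld))$, a negligible set, and in fact a slightly more careful argument (looking at how this residual set shrinks each round, since a vertex survives only if most of its $d$ neighbors lie in an ever-shrinking "bad" set) shows the residual set shrinks from $n$ down to $0$ in $O(\log\log d)$ further rounds --- here the $\log\log d$ term comes from the residual set having size at most $n/d^{2^j}$-ish after $j$ mop-up rounds, using that a surviving vertex needs $\ge d-1$ of its neighbors among the previous residual set. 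Combining the $O(\log_l\log n)$ rounds of phase one with the $O(\log\log d)$ rounds of phase two gives the stated generation bound, and a union bound over the $O(\log n)$ rounds keeps the total failure probability $o(1)$.

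The main obstacle I anticipate is making the growth estimate in phase one rigorous when $A_i$ is \emph{not} a random set but a deterministic-looking set produced by the process: the expander mixing lemma gives a worst-case bound for \emph{every} $S$, so the cherry-counting lower bound on newly activated vertices holds deterministically, but one must verify that the error terms never overtake the main term across all $\Theta(\log_l\log n)$ generations simultaneously, and in particular that the very first step $A_0 \to A_1$ already produces enough new vertices despite $|A_0|$ being only $\Theta(n/l^2)$ --- this is delicate because at $a_0 = n/l^2$ the main term $a_0^2 d^2/n = n d^2/l^4 = n l^2/l^4\cdot(d^2/l^2\cdot\ldots)$ is just barely above the mixing error $\lambda\sqrt{a_0}\cdot(\text{appropriate normalization})$, so the constant in $p = O(l^{-2})$ must be chosen large enough, and one likely needs the randomness of $A_0$ (not just its size) for the first one or two steps before the set is large enough for the purely spectral argument to take over. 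I would handle this by a direct second-moment computation for the first couple of generations and switch to the deterministic mixing argument once $a_i$ crosses the threshold $\Theta(\lambda^2 n / d^2)$.
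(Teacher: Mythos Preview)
Your approach is genuinely different from the paper's, and it contains a real gap that is not merely a matter of choosing constants.

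The paper does \emph{not} track the global size $|A_i|$ via the expander mixing lemma. Instead it invokes a tree-embedding theorem of Balogh, Csaba, Pei and Samotij: when $\lambda \le d/\sqrt{l}$, every vertex $v$ of $G$ is the root of a (not necessarily induced) regular tree of degree $\Omega(l)$ and depth $k=\Omega(\log_l\log n+\log\log d)$. One then runs the standard tree recursion (the paper's Lemma on $T_{d,k}$): if leaves are seeds independently with probability $p=\Theta(l^{-2})$, the probability that the root is activated climbs doubly exponentially up the tree, and after the stated depth is $1-O(n^{-2})$. A union bound over the $n$ roots finishes the proof. The whole argument is local; it never needs to control the global activated set at intermediate generations.

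The gap in your plan is the step ``mixing plus a convexity bound shows that $\Omega(a_i^2 d^2/n)$ outside vertices receive $\ge 2$ edges.'' The convexity bound gives a lower bound on the number of \emph{cherries} $\sum_{v\notin A_i}\binom{d_{A_i}(v)}{2}\ge \Omega(d^2 a_i^2/n)$, but this does not lower bound the number of \emph{distinct vertices} with $d_{A_i}(v)\ge 2$: all cherries could concentrate on few vertices, each contributing up to $\binom{d}{2}$ of them, so the naive conversion only yields $\Omega(a_i^2/n)$ newly activated vertices. Using the spectral upper bound $\sum_v d_{A_i}(v)^2\le d^2a_i^2/n+\lambda^2 a_i$ does not rescue this either: any such counting argument only becomes nontrivial once $a_i\gtrsim n/d$ (equivalently, once the average degree into $A_i$ exceeds~1). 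But with $p=\Theta(l^{-2})$ and $l\le d$ you start at $a_0=\Theta(n/l^2)\ll n/d$, and the intended recursion $q_{i+1}\approx (dq_i)^2/2$ (with $q_i=a_i/n$) keeps $q_i<1/d$ for roughly $\log\log d$ generations. So you are in the regime where the mixing argument is vacuous for \emph{all} of the bootstrapping phase, not just ``the first one or two steps.'' Using the randomness of $A_0$ helps for generation~1, but $A_1,A_2,\ldots$ are deterministic functions of $A_0$, so you cannot keep re-invoking independence.

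In short, what is missing from your proposal is precisely what the tree embedding supplies: a way to analyze $\Theta(\log\log d)$ generations simultaneously while retaining independence, by looking at the process inside a fixed tree rooted at each vertex rather than at the global set $A_i$. Your phase~2 mop-up (once $|A_i|$ is linear in $n$) is fine and is essentially the content of the paper's Lemma~\ref{lem:general_g}, but the paper's tree argument makes that phase unnecessary as well.
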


Our next result concerns high girth graphs. The random-parallel algorithm is inappropriate in this case (for example, when the graph is composed of many separate components, $p$ might need to be very close to~1 to ensure that each component has at least two seeds), and hence we revert to the random-serial algorithm.

\begin{theorem}\label{thm:girth}
Let $G$ be a $d$-regular graph of girth at least $2k+1$. If $k \ge \log\log d$ then $m(G,2) = O(\frac{n}{d^2})$, and if $k < \log\log d$ then $m(G,2)=O(d^{\zeta}\frac{n}{d^2})$, where $\zeta = \frac{1}{2^{k-1}}$. Moreover, the contagious set can be chosen by the random-serial algorithm, in which case the number of generations until complete activation can be made at most $k$.
\end{theorem}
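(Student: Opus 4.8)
The plan is to build the contagious set greedily/randomly in a way that exploits the tree-like local structure forced by large girth. The key observation is that in a $d$-regular graph of girth at least $2k+1$, the ball of radius $k$ around any vertex is a tree, so a vertex $v$ together with two of its neighbors "dominates" a large tree-like region: once $v$ is active and two of its neighbors are seeds, the activation can cascade outward. More precisely, I would aim to find a set $A_0$ of size $O(n/d^2)$ (or $O(d^\zeta n/d^2)$) such that the cascade from $A_0$ saturates all of $V$ within $k$ generations.

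The first step is a \textbf{local lemma for tree-like neighborhoods}: fix a vertex $v$ and consider its depth-$k$ neighborhood tree $T_v$. I would show that if we place a small random subset of the leaves (or of vertices at a few carefully chosen levels) as seeds, then with good probability the cascade fills all of $T_v$ up to $v$ in $\le k$ rounds, because in a tree with branching $\approx d$, a vertex at level $i$ fires as soon as two of its $\approx d$ children at level $i+1$ are active --- and with $\Theta(1/d^2)$ density of "already active" descendants, each internal vertex has two active children with constant probability, and an inductive/branching-process argument pushes activation up the tree. The density needed at the leaves is roughly $p$, and the recursion $p_i \approx \binom{d}{2}p_{i+1}^2$ for the "probability a level-$i$ vertex becomes active" governs how large $p$ must be to reach $p_0 = \Omega(1)$ at the root; solving this doubly-exponential-in-$k$ recursion is exactly where the $d^{1/2^{k-1}}$ correction factor $d^\zeta$ appears, and it disappears once $k \ge \log\log d$.

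The second step is to \textbf{globalize} via the random-serial algorithm. Running random-serial, each newly chosen seed "kills" (activates) its entire depth-$k$ tree-like region when combined with the seeds already present, so after $O(n/d^2)$ rounds the whole graph is covered; the subtlety is handling overlaps between the trees of different seeds and the fact that regions already partially activated only help. I would argue that at each round, as long as a constant fraction of $V$ is still inactive, a random new seed lies (with constant probability) in a "fresh" tree-like region where the previous analysis applies, and each success activates $\Omega(d^2)$ new vertices, so the expected number of rounds is $O(n/d^2)$; a concentration argument (e.g. a supermartingale / bounded-differences bound on the number of seeds used) gives the high-probability statement, and the generation count stays $\le k$ because each region fills within $k$ rounds of its seeds being placed.

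The \textbf{main obstacle} I anticipate is the inductive step in the tree lemma: naively, "a level-$i$ vertex becomes active" events are not independent across siblings (they share the same parent's threshold constraint) and a vertex needs two active children, not one, so this is not a standard "survival of a branching process" computation but a two-out-of-$d$ variant whose failure probability must be controlled uniformly over all $n$ vertices. I expect to need a careful second-moment or union-bound-over-levels argument, plus honest bookkeeping of the recursion $p_{i} = \Theta(d^2 p_{i+1}^2)$ with the boundary condition at level $k$, to pin down both the leading constant in $O(n/d^2)$ and the threshold $k = \log\log d$ at which $\zeta \to 0$; handling the non-tree-like global overlaps (vertices whose depth-$k$ ball is not a tree because they are close to two different seeds' regions) is a secondary nuisance that I would absorb into the constant-fraction-of-$V$ accounting.
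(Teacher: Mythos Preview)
Your tree-lemma step with the recursion $p_i \approx \binom{d}{2}\, p_{i+1}^2$ is exactly the paper's approach, and the doubly-exponential solution yielding the correction factor $d^{1/2^{k-1}}$ is correct. You are, however, manufacturing a difficulty that is not there: because the girth is at least $2k+1$, the depth-$k$ ball around each vertex is a genuine tree, so the subtrees below distinct children of a level-$i$ vertex are vertex-disjoint. The events ``child $j$ becomes active via its own subtree'' are therefore \emph{independent}, and the recursion is simply $p_i \ge \Pr[\mathrm{Bin}(d,p_{i+1}) \ge 2]$; no second-moment or union-bound machinery is required.

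The genuine gap is in your globalization. The mechanism you describe --- ``once $v$ is active and two of its neighbors are seeds, the activation can cascade outward,'' and later ``each newly chosen seed kills its entire depth-$k$ tree-like region, activating $\Omega(d^2)$ new vertices'' --- is false for threshold $r=2$. If $v$ and two of its neighbors are the only active vertices, every other neighbor of any of them has exactly one active neighbor and nothing further fires; activation here propagates \emph{inward}, from the leaves of $v$'s depth-$k$ tree toward $v$, never outward from a seed, so no single seed is locally responsible for $\Omega(d^2)$ activations. The paper's globalization is instead a short amortization argument (its Lemma~8/Corollary~9): random-parallel at density $p$ activates every fixed vertex with probability at least $1/C$ by the tree lemma, so if one reruns it with fresh randomness and keeps as seeds only those vertices that were chosen as seeds in some round \emph{before} they were ever activated, the resulting set is contagious and a vertex lands in it with probability at most $\sum_{j\ge 1} p(1-1/C)^{j-1} = Cp$, giving expected size $Cpn$. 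This replaces your ``fresh region'' bookkeeping and concentration argument entirely.
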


Proposition~\ref{pro:generations} shows that the number of generations in Theorem~\ref{thm:Expander} is best possible up to constant factors for random parallel activation, and Theorem~\ref{thm:girth} gives examples where random sequential activation leads to fewer generations than random parallel activation.

\begin{proposition}
\label{pro:generations}
For every $d$-regular graph, if every vertex is initially activated independently with probability at most $1/4$, then with probability $1-o(1)$ the number of generations until complete activation is at least $\log_d \log n$.
\end{proposition}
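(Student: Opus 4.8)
The plan is to exploit the \emph{locality} of the activation rule: whether a vertex is active after $t$ generations depends only on which vertices within distance $t$ of it were seeds, and in a $d$-regular graph the ball of radius $t$ around any vertex contains at most $1+d+\cdots+d^t\le 2d^t$ vertices. Taking $t$ just below $\log_d\log n$ makes this at most $O(\log n)$, so for a fixed vertex the probability that its radius-$t$ ball is seed-free, and hence that the vertex is still inactive after $t$ generations, is $n^{-c}$ for an absolute constant $c<1$; spreading this over $\Omega(n/(\log n)^2)$ pairwise far apart vertices and using independence then shows that some vertex survives to generation $t$ with probability $1-o(1)$.

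The first step is the locality lemma: for every vertex $u$ and every integer $i\ge 0$, the event $\{u\in A_i\}$ is a deterministic function of $A_0\cap B_i(u)$, where $B_i(u)$ is the set of vertices at distance at most $i$ from $u$. This follows by an easy induction on $i$ --- the case $i=0$ is trivial, and for the step one uses that $u\in A_i$ iff $u\in A_{i-1}$ or $|N(u)\cap A_{i-1}|\ge r$, while by induction the membership of $u$ and of each of its neighbors in $A_{i-1}$ is a function of the seeds within distance $i-1$ of that vertex, all of which lie in $B_i(u)$. Evaluating this function on the empty seed set shows in particular that if $A_0\cap B_t(u)=\emptyset$ then $u\notin A_t$.

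Next, set $T:=\lceil\log_d\log n\rceil-1$, which for $n$ large is a nonnegative integer with $d^T<\log n$. For a fixed vertex $v$ we have $|B_T(v)|\le 2d^T<2\log n$, so since every vertex is seeded with probability at most $1/4$,
$$\Pr[v\notin A_T]\ \ge\ \Pr\big[A_0\cap B_T(v)=\emptyset\big]\ \ge\ \left(\frac{3}{4}\right)^{|B_T(v)|}\ \ge\ \left(\frac{3}{4}\right)^{2\log n}\ =\ n^{-c}$$
for an absolute constant $c<1$. Now pick vertices greedily --- each time choosing a vertex and discarding its radius-$2T$ ball, which has at most $2d^{2T}<2(\log n)^2$ vertices --- to obtain $k\ge n/(2(\log n)^2)$ vertices $v_1,\dots,v_k$ whose radius-$T$ balls are pairwise disjoint. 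By the locality lemma the events $\{v_i\in A_T\}$ are determined by the disjoint seed sets $A_0\cap B_T(v_i)$ and are therefore mutually independent, so
$$\Pr[A_T=V]\ \le\ \prod_{i=1}^{k}\Pr[v_i\in A_T]\ \le\ \left(1-n^{-c}\right)^{k}\ \le\ \exp\!\left(-\frac{n^{1-c}}{2(\log n)^2}\right)\ =\ o(1)$$
because $c<1$. Hence with probability $1-o(1)$ some vertex is inactive after $T$ generations, so (if the seed set is contagious at all) the number of generations needed for complete activation exceeds $T$, i.e.\ is at least $\lceil\log_d\log n\rceil\ge\log_d\log n$.

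I do not expect a genuine obstacle; the only points needing care are stating and proving the locality lemma cleanly and checking that the exponent $c$ above is strictly less than $1$. The latter is exactly where the hypothesis ``probability at most $1/4$'' is used: the argument needs $-2\ln(1-p)<1$, i.e.\ $p$ bounded below roughly $0.39$, and $p\le 1/4$ comfortably suffices. (Non-uniform seed probabilities $p_v\le 1/4$ are handled identically, replacing $(3/4)^{|B_T(v)|}$ by $\prod_{w\in B_T(v)}(1-p_w)$.)
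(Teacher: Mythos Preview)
Your proof is correct and follows essentially the same approach as the paper's: both use locality to bound the radius-$T$ ball by $O(\log n)$ vertices, lower-bound the survival probability of a fixed vertex by $(3/4)^{O(\log n)}=n^{-c}$ with $c<1$, greedily select $\Omega(n/(\log n)^2)$ vertices at pairwise distance $\ge 2T$, and use independence. Your version is in fact more careful than the paper's (explicit locality lemma, integer choice of $T$, explicit verification that $c<1$), but the route is the same.
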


Theorems~\ref{thm:Expander} and~\ref{thm:girth} give nearly best possible bounds for $m(G,2)$ when $\lambda \le O(\sqrt{d})$ or the girth exceeds $2\log\log d$.

\begin{theorem}
\label{thm:LB}
Let $\epsilon > 0$ be an arbitrarily small positive constant. Then for $d$ large enough there are $(n,d,\lambda)$-graphs with $\lambda = O(\sqrt{d})$, girth $\Omega(\log \log d)$ and $m(G,2) \ge \Omega(\frac{n}{d^{2+\epsilon}})$.
\end{theorem}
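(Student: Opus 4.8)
The plan is, first, to exhibit the family and, second --- the real content --- to prove the lower bound by showing that no small set can be contagious.

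For the construction I would take $G$ to be a pseudorandom $d$-regular graph of large girth, obtained in one of two ways. Either: an explicit Ramanujan graph (Lubotzky--Phillips--Sarnak, or Morgenstern for general prime-power degrees) of degree $d$, with its number of vertices $n$ chosen so that its girth, which is of order $\log_d n$, is of order $\log\log d$; this automatically gives $\lambda(G)\le 2\sqrt{d-1}=O(\sqrt d)$. Or: a random $d$-regular graph on $n$ vertices with one edge deleted from each cycle of length below $c\log\log d$; the expected number of such cycles is bounded independently of $n$, so for $n$ large the deletions affect neither (near-)regularity nor, by the robustness of Friedman's theorem, the spectral gap, and they can be repaired afterwards. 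In either case one gets an infinite family with $n\to\infty$, $\lambda=O(\sqrt d)$ and girth $\Omega(\log\log d)$, and it remains to bound $m(G,2)$ from below.

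For the lower bound I would use the ``blocking set'' description of the closure: $\langle A_0\rangle=V\setminus W$, where $W$ is the largest set disjoint from $A_0$ in which every vertex has at most one neighbour outside $W$ (if $w^\ast$ is the first vertex of such a $W$ to be activated it would need two active neighbours, necessarily outside $W$, a contradiction; so $W$ is exactly the set of never-activated vertices). Hence it suffices to prove $|\langle A_0\rangle|\le C\,|A_0|\,d^{2+\epsilon}$ for every $A_0$, which for a contagious $A_0$ forces $|A_0|\ge n/(Cd^{2+\epsilon})$. I would track the generations $A_0\subseteq A_1\subseteq\cdots$ and control the one-step gain. By the expander mixing lemma, $\sum_v|N(v)\cap A_t|^2\le \tfrac{d^2|A_t|^2}{n}+\lambda^2|A_t|$, so the number of vertices that acquire a second active neighbour in one step is at most $\tfrac{d^2|A_t|^2}{n}+O(d\,|A_t|)$; the ``mixing part'' $\tfrac{d^2|A_t|^2}{n}$ is sublinear in $|A_t|$ as long as $|A_t|<n/d^{2}$. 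The $O(d\,|A_t|)$ term, however, is attained only when $A_t$ together with its freshly activated neighbourhood looks locally like a $C_4$-free bipartite graph in which the new vertices have degree exactly two --- a ``half-infected'' configuration --- and the girth hypothesis is what I would use to argue that such a configuration cannot regenerate itself for $\Omega(\log\log d)$ consecutive generations; so the cumulative multiplicative gain of the process before $|A_t|$ reaches $n/d^{2}$ is only $d^{2+o(1)}$, after which a constant number of further generations complete the activation.

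The hard part --- and essentially the only hard part --- is exactly this last step. The naive per-generation estimate, that each step multiplies $|A_t|$ by at most $O(d)$, is hopeless: iterated over the $\Theta(\log_d n)$ generations a contagious set may need it yields only $|A_0|\ge n^{1-o(1)}$, and it is easy to construct locally tree-like ``cherry forests'' that genuinely grow by a factor $\Theta(d)$ in a single step. One must exploit that a step of near-maximal gain forces a rigid half-infected bipartite structure which the girth of the graph cannot support repeatedly, and quantify how the spectral gap together with girth $\Omega(\log\log d)$ caps the \emph{total} gain of the bootstrap process at $d^{2+\epsilon}$; this interplay is the technical crux, and it is also the source of the slack $\epsilon$, since a larger girth would allow a smaller $\epsilon$. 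Once this cap is established, $|\langle A_0\rangle|\le C|A_0|d^{2+\epsilon}$ gives $m(G,2)\ge\Omega(n/d^{2+\epsilon})$, and verifying the spectral and girth properties of the construction is routine.
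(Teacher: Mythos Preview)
Your proposal has a genuine gap, and it stems from a misreading of what the girth and spectral hypotheses are doing in this theorem. In the paper, $\lambda=O(\sqrt d)$ and girth $\Omega(\log\log d)$ are \emph{not} used to prove the lower bound on $m(G,2)$; they are simply additional properties that a random $d$-regular graph happens to enjoy (Friedman's theorem for the spectrum; the standard fact that $G(n,d)$ has girth at least any fixed $k$ with probability bounded away from $0$). The lower bound itself comes from an entirely different, one-line structural observation: if $A_0$ is contagious with $|A_0|=t_0$, then for every $t\ge t_0$ the first $t$ activated vertices induce at least $2(t-t_0)$ edges (each newly activated vertex brings two back-edges). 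So it suffices to show that a random $d$-regular graph w.h.p.\ contains no set of size $t=Ct_0$, $t_0=n/d^{2+\epsilon}$, with $2(C-1)t_0$ edges; this is a routine first-moment calculation in the configuration model. No tracking of generations, no mixing lemma, no girth.

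Your route---bounding the per-generation gain by the expander mixing lemma and then invoking girth to cap the number of ``bad'' generations---does not close. First, the mixing lemma is simply too weak here: with $\lambda=\Theta(\sqrt d)$ it allows a set $S$ of size $n/d^{2+\epsilon}$ to have up to $\Theta(\sqrt d\,|S|)$ internal edges, vastly more than the $\approx 2|S|$ edges you need to rule out, so the spectral hypothesis alone cannot deliver the density obstruction. Second, the girth hypothesis is only $\Omega(\log\log d)$, a constant depending on $d$ but not on $n$, whereas the activation process may run for order $\log_d n$ generations with $n$ arbitrarily large; the claim that a ``half-infected'' bipartite pattern cannot persist for $\Omega(\log\log d)$ rounds, even if true, says nothing about the remaining rounds and does not cap the total multiplicative gain at $d^{2+\epsilon}$. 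This is exactly the step you flag as ``the hard part'', and it is not merely hard---there is no mechanism in your outline by which a girth of constant size controls a process of unbounded length. Relatedly, your alternative construction via explicit Ramanujan graphs is not known to satisfy the required small-set sparsity (the mixing lemma is, again, too weak to certify it), so the probabilistic construction is essential and not just a convenience.
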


The upper and lower bounds above extend to activation thresholds $r > 2$, with the adjustment that the terms $d^2$ need to be replaced by $d^{\frac{r}{r-1}}$ (for example, an upper bound of $m(G,2) \le O(\frac{n}{d^2})$ is replaced by $m(G,r) \le O(\frac{n}{d^{\frac{r}{r-1}}})$). See Section \ref{sec:bigthreshold} for precise statements of these results.

The upper bounds in Theorems~\ref{thm:Expander} and~\ref{thm:girth} are not known to be tight when $\lambda(G)$ approaches $d$ or when the girth approaches (from above) 5. In fact, we believe that they are not tight. One may conjecture that for every $\delta < 1$, an $(n,d,\lambda)$-graph with $\lambda < \delta d$ has $m(G,2) \le O(\frac{n}{d^2})$ (with the hidden constant in the $O$ notation depending on $\delta$). We do not know if this conjecture is true, but we do know that the bounds in Theorem~\ref{thm:Expander} are far from tight when $\lambda$ is fairly large.

\begin{proposition}
\label{pro:general_bound}
Let $G$ be an $(n,d,\lambda)$-graph where $\lambda<\delta d$ where $\delta<1$ is independent of $d$. Then there is a contagious set in $G$ of size $O(\frac{n}{d^{\frac{3}{2}}})$.  Moreover, the contagious set can be chosen by the random-parallel algorithm.
\end{proposition}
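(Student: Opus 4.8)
The plan is to use the random-parallel algorithm with activation probability $p = c\,d^{-3/2}$ for a suitable constant $c = c(\delta)$, so that the expected number of seeds is $\Theta(n\,d^{-1/2})$, and then argue that with positive probability (in fact $1-o(1)$) the resulting seed set is contagious. The activation process in the threshold-$2$ model is monotone, so it suffices to show that after the first generation the set of active vertices is so large that the expander mixing lemma forces every remaining vertex to be activated. Concretely, I would track two phases: first, a local argument showing that a constant fraction of vertices are activated in generation one; second, a global argument showing that once the active set $S$ satisfies $|S| \ge \beta n$ for an appropriate constant $\beta = \beta(\delta)$, the process necessarily completes.

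For the first phase, fix a vertex $v$ and consider a fixed set $D$ of $d$ of its neighbors. The number of seeds among these $d$ neighbors is a binomial random variable with mean $pd = c\,d^{-1/2}$; the probability that at least two of them are seeds is $\Theta((pd)^2) = \Theta(c^2/d)$. Hence the expected number of vertices activated in generation one (via two seed-neighbors) is $\Theta(c^2 n / d)$, which dominates the number of seeds. To convert this expectation into an "almost surely a constant fraction is active" statement one cannot simply use independence across vertices, but a second-moment / concentration argument suffices here because the events "$v$ has $\ge 2$ seed neighbors" for distinct $v$ depend on seeds in overlapping neighborhoods of bounded correlation; alternatively, one can restrict attention to a single well-chosen constant fraction of vertices and note that the number activated in generation one is at least $\beta n$ with probability $1-o(1)$ for a suitable $\beta$. (If $d$ is a constant this is immediate; the interesting regime is $d$ growing, where concentration is easy.)

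For the second phase — the global completion — suppose $S$ is the current active set with $|S| \ge \beta n$ but $S \ne V$, and let $T = V \setminus S$. If every vertex of $T$ had at most one neighbor in $S$, then the number of edges between $S$ and $T$ would be at most $|T|$; but by the expander mixing lemma the number of edges between $S$ and $T$ is at least $\frac{d}{n}|S||T| - \lambda \sqrt{|S||T|} \ge \frac{d}{n}\beta n |T| - \lambda |T| = (\beta d - \lambda)|T|$. Since $\lambda < \delta d$, choosing $\beta$ with $\beta - \delta > 1/d$ (true for all large $d$ once $\beta$ is a fixed constant exceeding $\delta$) gives $(\beta d - \lambda)|T| > |T|$, a contradiction — so at least one vertex of $T$ has $\ge 2$ neighbors in $S$ and gets activated. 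Iterating, $S$ grows until $T = \emptyset$. Actually a cleaner version: the mixing lemma shows that \emph{every} vertex in $T$ with fewer than $2$ neighbors in $S$ can account for at most one $S$–$T$ edge, so if more than $|T|/2$ of them were "bad" the edge count would be too small; choosing $\beta$ slightly larger than $\delta$ and $\beta$ itself not too close to $1$ handles the bookkeeping, and one shows the bad set shrinks geometrically.

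The main obstacle I anticipate is the first phase: turning the first-generation expectation $\Theta(c^2 n/d)$ into a high-probability lower bound of $\beta n$ active vertices, since the relevant events are not independent and, more seriously, one must rule out that the seeds and their one-step activations are "clumped" in a small region. The expander structure helps again here — $\lambda < \delta d$ implies that any set of $o(n)$ vertices has nearly full-sized neighborhoods, so the seeds, being spread out, activate vertices spread out — but making this quantitative (choosing the constants $c$ and $\beta$ consistently so that $pd$ is large enough to produce $\Omega(n/d)$ activations yet $p$ is small enough that the seed count is $O(n/d^{3/2})$) is the delicate part. A safe route is to pick $p$ so that $pd = \Theta(1)$ is a sufficiently large constant: then each vertex independently-ish has a constant probability of having $\ge 2$ seed neighbors, a routine concentration inequality (e.g. a bounded-differences / Azuma argument on the seed indicator variables, whose influences are small since each seed affects only $d+1$ of the generation-one events) gives $\ge \beta n$ active vertices whp, and the seed count is $p n = \Theta(n/d) = O(n/d^{3/2})$ — wait, that is $n/d$, not $n/d^{3/2}$; so in fact one must take $pd = \Theta(d^{-1/2})\cdot d = \Theta(\sqrt d)$... let me restate: with $p = c d^{-3/2}$ one has $pd = c d^{-1/2} \to 0$, the per-vertex activation probability is $\Theta((pd)^2) = \Theta(d^{-1})$, so generation one has $\Theta(n/d)$ active vertices, not a constant fraction. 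So the honest plan requires \emph{several} generations of local growth — each generation multiplying the active set by roughly a factor of $d$ (from $\Theta(n/d)$ to $\Theta(n)$ takes $O(1)$ such steps, or $O(\log d)$ if growth is only polynomial) — before the mixing-lemma phase kicks in, and controlling this multi-step local growth on a general expander (as opposed to a high-girth graph, where neighborhoods are trees) is exactly where the argument needs care; iterating the expander mixing lemma to show $|A_{i+1}| \ge \min(\beta n, \ \Omega(d)\,|A_i|)$ as long as $|A_i|$ is not too large is the technical heart, and once $|A_i| \ge \beta n$ the completion argument above finishes the proof.
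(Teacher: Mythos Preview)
Your first phase is right on target and matches the paper exactly: with $p = \Theta(d^{-3/2})$ each vertex has probability $\Theta((pd)^2) = \Theta(1/d)$ of acquiring two seed neighbors, so after one generation the expected number of active vertices is $\Theta(n/d)$, and the Azuma/bounded-differences argument you mention (each seed influences at most $d$ of the generation-one indicator variables) gives concentration. This is precisely what the paper does.

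The gap is in your second phase. You invoke the additive form of the expander mixing lemma,
\[
e(S,T) \ge \frac{d}{n}|S||T| - \lambda\sqrt{|S||T|},
\]
and this forces you to wait until $|S| \ge \beta n$ for a constant $\beta > \delta$ before the completion argument bites. That is why you end up believing that bridging the gap from $|A_1| = \Theta(n/d)$ to a constant fraction of $n$ is ``the technical heart''. But for a \emph{partition} $S,T$ of $V$ one has the sharper bound
\[
e(S,T) \ge \frac{(d-\lambda)|S||T|}{n},
\]
which is exactly Lemma~\ref{lem:expansion} in the paper. Comparing this with $e(S,T) \le |T|$ (each vertex of $T$ has at most one neighbour in $S$ if $S$ is stuck) yields $|S| \le \frac{n}{(1-\delta)d}$. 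In other words, the completion threshold is $\Theta(n/d)$, not $\Theta(n)$; this is the content of the paper's Lemma~\ref{lem:general_g}. Since your first phase already produces $\Theta(n/d)$ active vertices with high probability, the proof is finished after a single generation plus this lemma --- no multi-step local growth, no iterated mixing lemma, no delicate bookkeeping. The obstacle you identified dissolves once you use the right edge-expansion bound.
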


Another conjecture is that for every $d$-regular graph with no 4-cycles, $m(G,2) \le O(\frac{n}{d^2})$. For graphs of girth~5 Theorem~\ref{thm:girth} establishes a bound of $m(G,2) \le O(\frac{n}{d^{3/2}})$. We can improve over this bound as follows.

\begin{theorem}
\label{thm:4cycle}
Let $G$ be a graph of minimum degree $d$ and with no 4-cycles. Then $m(G,2) \le O(\frac{n}{d^{7/4}})$. Moreover, the contagious set can be chosen by the random-sequential algorithm.
\end{theorem}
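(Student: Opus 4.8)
The plan is to analyse the random-sequential algorithm and to show that after $O(n/d^{7/4})$ rounds the seeds placed so far form a contagious set with probability at least $1/2$; this already exhibits a contagious set of the claimed size (so it bounds $m(G,2)$), and repeating the algorithm a constant number of times makes it succeed with high probability. Working with the sequential rather than the parallel algorithm lets us treat each connected component of $G$ separately and so avoids the standard obstruction that a very small component needs at least two seeds; since a $C_4$-free graph of minimum degree $d$ has every component of size $\Omega(d^2)$, seeds rain down on the components in proportion to their sizes and each component is handled within its share of the rounds. So the heart of the matter is the following claim inside a single component: if $S$ is a uniformly random set of $s=Cn/d^{7/4}$ vertices, then $\langle S\rangle$ is the whole component with good probability. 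For girth-$7$ graphs this is exactly Theorem~\ref{thm:girth} with $k=3$; the task is to recover the same conclusion from the much weaker hypothesis that $G$ has no $4$-cycle, i.e.\ that any two vertices have at most one common neighbour.

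I would track the active set through a bounded number of ``generations'' $A_0=S\subseteq A_1\subseteq A_2\subseteq\cdots$, where $A_{i+1}$ is obtained from $A_i$ by one round of threshold-$2$ activation. The workhorse estimate is that the number of vertices having at least two neighbours in a set $A$ equals the number of paths of length two (``cherries'') with both endpoints in $A$, counted by their centres. Since $G$ has minimum degree $d$, the total number of cherries is $\sum_v {\deg v \choose 2}=\Omega(nd^2)$, and the no-$C_4$ hypothesis guarantees that distinct endpoint pairs yield distinct cherries; so for a set $A$ that is sufficiently ``spread out'' one expects about $\Theta(d^2|A|^2/n)$ newly activated vertices. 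Applying this with $|S|=\Theta(n/d^{7/4})$ yields in turn $|A_1|=\Theta(n/d^{3/2})$, then $|A_2|=\Theta(n/d)$, then $|A_3|=\Theta(n)$; unrolling the recursion $d^2|A_{i+1}|/n\approx(d^2|A_i|/n)^2$ over three generations is precisely what makes $s=\Theta(n/d^{7/4})$ the right seed size. A short final step then activates the few remaining vertices: once the active set is a large constant fraction of the component and still spread out, essentially every remaining vertex has many active neighbours (we may assume $d$ exceeds a suitable constant, since otherwise the claimed bound is trivial).

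The crux, and the main obstacle, is to make the lower bound ``at least $\Omega(d^2|A_i|^2/n)$ newly activated vertices'' rigorous using only $C_4$-freeness, where for girth $7$ it would follow from a clean, collision-free tree count. Two quantities must be controlled. First, the number of cherries with both endpoints in $A_i$ --- a first-moment computation, and exactly where the $C_4$-condition is used, since $\sum_v {|N(v)\cap A_i| \choose 2}$ counts each endpoint pair at most once. Second, and more delicately, the concentration of this count: one must rule out that the cascade pours most of its mass into a small dense region, so that the sum is dominated by a few heavy terms ${|N(v)\cap A_i| \choose 2}$. Here the absence of $4$-cycles does the work again: if a vertex $z$ has $t$ neighbours in the layer just activated, each such neighbour carries at least two ``witnesses'' in $A_{i-1}$, and since $z$ is already a common neighbour of any two such vertices, a $4$-cycle would be forced unless all $2t$ witnesses are distinct; but they all lie within distance two of $z$, and the expected number of $A_{i-1}$-vertices near any fixed vertex is small, so $t$ is small with high probability. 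This keeps the heavy terms in check and lets the moment estimates compose from one generation to the next. The rest is routine but nontrivial bookkeeping: a conditioning scheme keeping each $A_i$ spread out, handling the fact that $G$ is only assumed to have minimum degree $d$ (so high-degree vertices must be tamed using the extremal bound that a $C_4$-free graph on $m$ vertices has $O(m^{3/2})$ edges), and the concentration inequalities needed to pass from expectations to high-probability statements.
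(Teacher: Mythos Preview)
Your overall arithmetic is right --- three generations starting from seed density $\Theta(d^{-7/4})$ do reach constant density via the squaring recursion --- but your route differs from the paper's in a way that leaves a genuine gap. You propose to track the \emph{global} growth $|A_0|\to|A_1|\to|A_2|\to|A_3|$ and to maintain inductively that each $A_i$ is ``spread out'', so that the cherry count $\sum_v\binom{|N(v)\cap A_i|}{2}$ is a good \emph{lower} bound (not merely the upper bound it always is) on the number of newly activated vertices. But $A_1$ and $A_2$ are deterministic, highly correlated functions of the random seed set $S$, and your spread-out argument for step $i$ --- bounding $|N(z)\cap A_i|$ via distinct witnesses in $A_{i-1}\cap N^{\le 2}(z)$ --- presupposes control of $|A_{i-1}\cap N^{\le 2}(z)|$, which is a \emph{stronger} spread-out condition than the one your argument delivers at the previous step (there you bounded $|A_{i-1}\cap N(z)|$, not $|A_{i-1}\cap N^{\le 2}(z)|$). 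Closing this induction uniformly over all $z$ is the actual content of the theorem, not ``routine bookkeeping''; your sketch gives no mechanism for it, and it is not clear it can be done short of something like the paper's second-moment machinery.

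The paper sidesteps the global analysis entirely by working \emph{locally}. It fixes a single vertex $v$, roots a depth-$3$ $d$-ary tree at $v$ (not induced; vertices of $G$ may repeat, but $C_4$-freeness makes the tree ``$4$-proper''), and bounds the probability that $v$ is activated by a cascade up this tree. This is done in two second-moment stages. First (Lemma~\ref{lem:2tree}), a second-moment count of depth-$2$ binary witness subtrees whose four leaves are all seeds shows that each child $v_i$ of $v$ is activated with probability $(1-o(1))\binom{d}{2}^3 p^4$. Second (Lemma~\ref{lem:3tree}), with $X$ the number of activated children, a further second-moment bound on $X$ --- here $C_4$-freeness is used in earnest, to control how the depth-$2$ subtrees below distinct $v_i,v_j$ can share leaves --- gives $\Pr[X\ge 2]\ge 1/2$ at $p=4d^{-7/4}$. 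Since every vertex is thus activated with probability at least $1/2$, Corollary~\ref{cor:partialinfection} immediately yields a contagious set of size $2pn=O(n/d^{7/4})$. The per-vertex reduction together with Corollary~\ref{cor:partialinfection} is precisely what replaces your spread-out induction.
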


For graphs of girth at least~7 (in fact, absence of 4-cycles and 6-cycles suffices), we can nearly obtain the desired upper bound of $O(\frac{n}{d^2})$, thus improving over the bounds implied by Theorem~\ref{thm:girth} for a wide range of girths. The algorithm used in the proof of Theorem~\ref{thm:girth7} involves an interplay between random and greedy selection of seeds.

\begin{theorem}
\label{thm:girth7}
Let $G$ be a $d$-regular graph of girth at least~7. Then $m(G,2) \le O(\frac{n\log d}{d^2})$.
\end{theorem}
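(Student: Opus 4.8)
The plan is to combine a random seeding phase with a greedy completion phase, exploiting the fact that girth at least~7 forces neighborhoods of vertices to be ``spread out'': for any vertex $v$, its $d$ neighbors have pairwise disjoint neighborhoods outside a tiny overlap, and more importantly any two vertices at distance~2 share exactly one common neighbor. First I would pick a random seed set $S_0$ by including each vertex independently with probability $p = \Theta(\frac{\log d}{d^2})$, so that $|S_0| = O(\frac{n\log d}{d^2})$ with high probability. Call a vertex $v$ \emph{good} if it has two seeds among its neighbors; such a vertex is activated in one step. For a fixed $v$, the probability that $v$ is \emph{not} good is roughly $(1-p)^d + dp(1-p)^{d-1} \approx e^{-pd}(1+pd)$, and since $pd = \Theta(\frac{\log d}{d})$ is tiny this is close to~1 — so most vertices are \emph{not} immediately activated, and the first moment alone is far too weak. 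The real point is second-order: I want to show that the set $A$ of vertices activated by $S_0$, while only an $o(1)$-fraction of $V$, is nonetheless an \emph{expanding} set in the sense that from $A$ a short greedy phase can mop up the rest.

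The key structural step, and the one I expect to be the main obstacle, is to show that after the random phase the ``frontier'' of active vertices is dense enough that a greedy rule gains a new vertex at amortized cost much less than one seed per vertex. Here girth~$\ge 7$ enters decisively. Consider a vertex $u$ with exactly one active neighbor $w \in A$. If we add $u$ as a seed, then $u$ and $w$ together may activate further vertices; but more usefully, consider the bipartite-like structure between active and inactive vertices. Because there are no $4$-cycles, two inactive vertices with a common active neighbor cannot have a second common neighbor, which lets me run a matching/greedy argument: I would order the remaining inactive vertices and, using the absence of $4$- and $6$-cycles, argue that a single carefully chosen seed placed near a vertex that already has one active neighbor triggers a cascade whose size is $\omega(1)$ in expectation — roughly, a seed $s$ adjacent to such a ``half-activated'' vertex $u$ activates $u$, and then $u$ plus its other already-active-or-soon-active neighbors propagate. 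Quantitatively I would aim to show that with $p = \Theta(\frac{\log d}{d^2})$, for a typical vertex $v$ the expected number of neighbors of $v$ that have at least one seed-neighbor is $\Theta(d \cdot pd) = \Theta(\log d)$, concentrated by the girth condition (disjoint neighborhoods make these events near-independent); once $v$ has $\Theta(\log d)$ neighbors each with one active neighbor, a constant number of extra seeds in the right places activates enough of them to reach the threshold $r=2$ at $v$.

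Carrying this out, I would phrase the argument as an explicit algorithm. Phase~1: random-parallel seeding with $p = c\frac{\log d}{d^2}$. Phase~2: a greedy/random-sequential cleanup in which, as long as some vertex has at least one active neighbor, we add one more seed adjacent to it; using girth $\ge 7$ one shows each such added seed reduces the number of not-yet-active vertices by $\Omega(d)$ (because the chosen vertex, once active, together with the existing active neighbor structure and the absence of short cycles, forces a whole new neighborhood-sized chunk to activate), so Phase~2 adds only $O(n/d)$ further seeds. Phase~3: a final $O(n\log d / d^2)$-sized random patch to handle the $o(1)$ exceptional vertices where concentration failed. The dominant term is Phase~1, giving $m(G,2) = O(\frac{n\log d}{d^2})$. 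The technical crux is the concentration claim in Phase~1 — that the number of ``half-activated'' neighbors of a vertex is $\Theta(\log d)$ rather than wildly smaller for a non-negligible fraction of vertices — and the correctness of the Phase~2 charging argument, both of which rest on translating ``girth $\ge 7$'' into quantitative near-independence of neighborhood events via the fact that the union of the $2$-balls around the $d$ neighbors of $v$ are almost disjoint. I expect the $4$-cycle-free and $6$-cycle-free conditions to be exactly what is needed, matching the remark that absence of $4$- and $6$-cycles suffices.
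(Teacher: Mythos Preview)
Your proposal has the right opening move (random seeds at density $p=\Theta(\log d/d^2)$, then study the ``excited'' vertices with exactly one seed neighbor) but it contains a fatal arithmetic slip and, behind it, a missing structural idea. You claim Phase~2 adds $O(n/d)$ seeds and that Phase~1 dominates; but $n/d \gg n\log d/d^2$, so as written your bound is $O(n/d)$, not $O(n\log d/d^2)$. The issue is that ``each added seed activates $\Omega(d)$ vertices'' is far too weak: you would need $\Omega(d^2/\log d)$ activations per added seed to stay within budget, and nothing in your sketch justifies that.

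The idea you are missing is the one the paper uses to get this stronger amortization. Let $B$ be the set of excited vertices (non-seeds with at least one seed neighbor) and look at the graph $G[B]$ induced on $B$. A single seed dropped anywhere in a connected component of $G[B]$ activates the entire component, because activation cascades along edges of $G[B]$ (each vertex already has one active neighbor, and the cascade supplies the second). So the real question is: how large are the components of $G[B]$? The paper proves, via the girth-$\ge 7$ bound $|N^2(S)|\ge \tfrac{1}{2}kd^2$ for $|S|=k<d$, that for any fixed vertex $v$ the probability that $v$ has a neighbor lying in a \emph{small} component of $G[B]$ (size $<d$) is only $O(1/d)$; intuitively, a small connected piece of $B$ has $\Omega(kd^2)$ second neighbors, and with $p=\Theta(\log d/d^2)$ it is extremely likely that one of those second neighbors is a seed, which would enlarge the piece. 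Hence with probability $1-O(1/d)$ every excited neighbor of $v$ sits in a large component. One then adds one seed per large component --- at most $|B|/d = O(n\log d/d^2)$ seeds --- and every vertex in the analogue of your ``$C$'' set acquires an active neighbor. Running this whole experiment twice independently and invoking Lemma~\ref{lem:partialinfection} finishes. Your concentration remark about $\Theta(\log d)$ excited neighbors per vertex is correct but is not the crux; the crux is the component-size argument for $G[B]$, which is where the absence of $4$- and $6$-cycles actually does its work.
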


One can combine a mild girth requirement with a modest expansion requirement and nearly obtain the desired upper bound of $O(\frac{n}{d^2})$. Observe that in Theorem~\ref{thm:girth_expander} we parameterize the spectral ratio $\lambda(G)/d$ by $1-\epsilon$. Hence for smaller $\epsilon$ we get worst expansion, and our upper bounds on $m(G,2)$ get \emph{larger}.

\begin{theorem}
\label{thm:girth_expander}
For arbitrary $\epsilon \in (0,1)$, let $G$ be an $(n,d,\lambda)$-graph with $\lambda \le (1 - \epsilon)d$ and with no 4-cycles. Then $m(G,2) \le O(\frac{n \log d}{\epsilon^2 d^2})$. Moreover, the contagious set can be chosen by a greedy algorithm. \end{theorem}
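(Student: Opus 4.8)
The plan is to run a one‑seed‑at‑a‑time greedy process and control it through a single spectral fact: for every $S\subseteq V$ the number of edges leaving $S$ satisfies $e(S,V\setminus S)\ge \frac{(d-\mu_2)|S|(n-|S|)}{n}$, where $\mu_2$ is the second‑largest adjacency eigenvalue; since $\mu_2\le \lambda(G)\le(1-\epsilon)d$ this yields $e(S,V\setminus S)\ge \frac{\epsilon d\,|S|(n-|S|)}{n}$. The first thing I would extract is a ``finish line''. Suppose $A$ is an active set on which the process has stabilised (no vertex outside $A$ has two active neighbours), $A\ne V$, and put $B=V\setminus A$. Stabilisation forces every vertex of $B$ to have at most one neighbour in $A$, so $e(A,B)\le|B|$; combined with the spectral bound, $\frac{\epsilon d|A||B|}{n}\le|B|$, i.e. $|A|\le\frac{n}{\epsilon d}$. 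Hence once the active set grows past $\frac{n}{\epsilon d}$ vertices the cascade activates everything on its own, and it suffices to reach that size.

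Next I would set up the greedy growth step. Let $A$ be stabilised with $|A|\le\frac{n}{\epsilon d}$, and let $B_1$ be the set of vertices of $B$ with exactly one neighbour in $A$, so $e(A,B)=|B_1|$. Double counting edges between $B$ and $B_1$ gives $\sum_{v\in B}|N(v)\cap B_1|=\sum_{u\in B_1}|N(u)\cap B|\ge(d-1)|B_1|$, since each $u\in B_1$ has $d-1$ neighbours inside $B$. Therefore some $v\in B$ has $|N(v)\cap B_1|\ge\frac{(d-1)|B_1|}{|B|}\ge\frac{(d-1)\epsilon d|A|}{n}$ by the spectral bound again. Taking such a $v$ as a new seed immediately activates $v$ together with all of $N(v)\cap B_1$, and re‑stabilising only helps, so one seed enlarges the active set by a factor at least $1+\frac{(d-1)\epsilon d}{n}$. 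The natural greedy rule (add the seed maximising the resulting cascade) does at least this well at every step.

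Assembling: first seed about $\frac{n}{\epsilon d^2}$ vertices arbitrarily so that $|A|\ge\frac{n}{\epsilon d^2}$ — this is the size at which the growth step's per‑seed gain becomes at least one vertex — then iterate the growth step. The ratio of the target size $\frac{n}{\epsilon d}$ to the starting size $\frac{n}{\epsilon d^2}$ is $d$, so $O(\log d)$ doublings suffice, each doubling costing $O(\frac{n}{\epsilon d^2})$ seeds; once past $\frac{n}{\epsilon d}$ the cascade finishes by the finish‑line observation. This gives a bound of the claimed order $O(\frac{n\log d}{\epsilon^2 d^2})$. The no‑4‑cycle hypothesis enters naturally to improve the opening moves: while $|A|=O(d^2)$ the extremal bound $e(A)\le\frac{1}{2}|A|^{3/2}$ for $C_4$‑free graphs forces $|B_1|=d|A|-2e(A)\ge\frac{1}{2}d|A|$, so in that range the per‑seed gain improves by a factor $1/\epsilon$ and, where this range is relevant, one can make honest greedy progress up to size $\Theta(d^2)$ instead of seeding arbitrarily.

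The step I expect to be the main obstacle is the low‑density regime $|A|=O(n/d^2)$, where the spectral per‑seed gain drops below one and the argument is genuinely lossy; a short bootstrap (arbitrary seeds, or — using $C_4$‑freeness — clustered greedy seeds) is needed before the multiplicative growth can take over, and arranging this so that it still costs only $O(\frac{n\log d}{\epsilon^2 d^2})$ and so that the overall algorithm is a bona fide greedy one is the delicate part. A minor technical point is that $B_1$ must be recomputed after every re‑stabilisation, which is harmless since the spectral edge‑expansion estimate is uniform over all vertex subsets.
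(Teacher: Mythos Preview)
Your argument is correct and takes a genuinely different, more elementary route than the paper. The paper runs a two–phase greedy algorithm that in its first phase tracks not the active set $A_t$ but the ``excited'' set $B_t=A_t\cup\partial(A_t)$; the per–seed growth of $B_t$ is obtained by a triplet count (paths $u$--$w$--$v$ with $u,v\notin B$ and $w\in B$), and $C_4$–freeness is exactly what guarantees that for a fixed $u$ the endpoints $v$ are distinct, yielding a gain proportional to $\epsilon^2$ and hence the factor $1/\epsilon^2$ in the theorem. You instead grow $A_t$ directly: your averaging step (finding $v\in V\setminus A$ with at least $\frac{(d-1)|B_1|}{|V\setminus A|}\ge\frac{(d-1)\epsilon d|A|}{n}$ neighbours in $B_1$) uses only the edge–expansion bound and $d$–regularity, so the no–4–cycle hypothesis is never invoked, and the per–seed multiplicative factor is $1+\Theta(\epsilon d^2/n)$. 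Consequently your argument actually gives $m(G,2)=O(\frac{n\log d}{\epsilon d^2})$ for every $(n,d,\lambda)$–graph with $\lambda\le(1-\epsilon)d$, which is stronger than the stated theorem both in the $\epsilon$–dependence and in dropping the girth assumption. The low–density regime you flagged as the main obstacle is harmless: seeding $n/(\epsilon d^2)$ arbitrary vertices is already within budget, and the single greedy rule ``maximise the resulting cascade'' trivially does at least that well, so no separate bootstrap is needed. The paper's more elaborate detour through $B_t$ seems to be aimed at the random–parallel variant it mentions (but does not prove), where adaptivity is unavailable and one must control the excited set globally.
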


The proof of Theorem~\ref{thm:girth_expander} works without change when the condition $\lambda \le (1 - \epsilon)d$ is replaced by the weaker condition $\lambda_2 \le (1 - \epsilon)d$. Moreover, the contagious set in Theorem~\ref{thm:girth_expander} can also be chosen by the random-parallel algorithm, but the proof for this is more involved than the proof for the greedy algorithm, and is omitted.

Some of our upper bounds on $m(G,2)$ are summarized in Table~1. They hold for every graph with the corresponding expansion property.

\begin{table} [h!]
\begin{center}
    \begin{tabular}{ |l | c | c | }\hline
    \textbf{\small{{\bf Graph Parameters}}} & \small{{\bf Upper bound}} \\ \hline
    Girth larger than $2\log\log d$        & \small{$O(\frac{n}{d^2})$} \\ \hline
    No 4-cycles    	                  & \small{$O(n d^{-7/4})$} \\ \hline
    Girth at least 7   	                  & \small{$O(\frac{n\log d}{d^2})$} \\ \hline
    $\lambda(G) \le O(\sqrt{d})$ & \small{$O(\frac{n}{d^2})$} \\ \hline
    No 4-cycles and $\lambda(G) \leq (1 - \epsilon)d$ & \small{$O(\frac{\log d}{\epsilon^2 d^2}n)$} \\ \hline
    \end{tabular}\vspace{4pt}
    \caption{\textbf{Upper bounds on $m(G,2)$ as a function of graph parameters.} The results apply to $d$-regular graphs as a function of their girth and $\lambda(G)$, where $\lambda(G)$ is the second largest eigenvalue in absolute value. 
    }
    \label{table.runtimes} \vspace{-3pt}
\end{center}
\end{table}

One may ask what is the probable value of $m(G,2)$ for a random $d$-regular expander. For this purpose it is convenient to relax the regularity requirement, and analyze instead the standard binomial random graph model $G(n,p)$, in which each edge is present independently with probability $p$. For $d >> \log n$ and $p: = \frac{d}{n}$, these graphs are nearly $d$-regular (the degree of every vertex is roughly $d$), and furthermore, they are excellent expanders. For this distribution over nearly $d$-regular expanders we obtain a nearly tight characterization of the probable value of $m(G,2)$. Interestingly, it turns out that $m(G,2) \le o(\frac{n}{d^2})$.

\begin{theorem}
\label{thm:random}
Let $G\sim G(n,p)$ with $p: = \frac{d}{n}$ and $3 \log n < d < n^{\frac{1}{2}-\epsilon}$. Then with high probability
$$ \Omega\left(\frac{n}{d^2\log d}\right) \le m(G,2) \le O\left(\frac{n\log\log d}{d^2 \log d}\right).$$
\end{theorem}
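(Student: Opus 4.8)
The plan is to establish the upper and lower bounds separately; the two halves use quite different ideas.

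For the upper bound, the strategy is to reduce to the high-girth case. The key observation is that although $G\sim G(n,p)$ has many short cycles, it is \emph{locally sparse}: for $d < n^{1/2-\epsilon}$, with high probability every vertex lies in only $O(1)$ triangles and the number of $4$-cycles is small, and more importantly, one can delete a sub-polynomial fraction of edges to destroy all short cycles among all but a negligible set of "bad" vertices. Concretely, I would first expose the graph, identify the (small) set $B$ of vertices that participate in a short cycle (shorter than $2\log\log d$, say), and argue $|B| = o(n/d)$ with high probability by a first-moment computation on the number of short cycles through a vertex. On $G - B$ the relevant neighbourhoods up to depth $\log\log d$ are trees, so the argument behind Theorem~\ref{thm:girth} (the random-serial construction achieving $O(n/d^2)$ on graphs of girth exceeding $2\log\log d$) applies locally and yields a seed set of size $O(n/d^2)$ that activates all of $V\setminus B$; then each vertex of $B$ can be handled with $O(1)$ extra seeds, costing $O(|B|)=o(n/d)$, which is negligible. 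To get the sharper $O(n\log\log d/(d^2\log d))$ rather than merely $O(n/d^2)$, I would not seed vertices independently but instead use the structure of the depth-$\log\log d$ tree neighbourhoods more carefully: in a $d$-regular tree of depth about $\log\log d$, one can infect the whole tree from roughly a $1/(d^2)$-fraction of the leaves, but one only needs the \emph{root layer} of such trees to become active to trigger the global cascade, and optimizing which layer to seed saves a $\log d/\log\log d$ factor. This is essentially a refined version of the calculation proving Theorem~\ref{thm:girth}, carried out with the exact branching numbers of $G(n,p)$ rather than worst-case regular trees.

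For the lower bound $m(G,2)\ge\Omega(n/(d^2\log d))$, I would use a counting/potential argument. Fix any set $A_0$ of size $s$; I want to show that if $s = o(n/(d^2\log d))$ then $\langle A_0\rangle \ne V$ with high probability over $G$. The natural potential is $\sum_{v\in A_0}\deg(v)$, which is $\Theta(sd)$, and each newly activated vertex "consumes" at least $2$ edges into the already-active set while contributing at most $\deg(v)\le O(d)$ new edges; but this crude accounting only gives the weak bound $m(G,2)\ge\Omega(n/d)$. To get the extra $d/\log d$ factor one must exploit that in $G(n,p)$ a set $S$ of size $|S|$ has only about $p\binom{|S|}{2} = O(|S|^2 d / n)$ internal edges and sends only $O(|S| d)$ edges out, so the number of vertices outside $S$ with $\ge 2$ neighbours in $S$ is, in expectation, at most $O(n \cdot (|S|d/n)^2) = O(|S|^2 d^2/n)$. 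Iterating this: if $A_0$ has size $s$, then after one generation the active set has size at most $s + O(s^2 d^2/n)$, and one checks that as long as $s \ll n/d^2$ this grows by at most a constant factor per generation, so after $t$ generations the active set has size at most $s\cdot C^t$; since the process needs at most $O(\log n)$... — more precisely, one shows the active set never escapes a set of size, say, $n/(2d^2)$, provided $s \le c\,n/(d^2\log d)$, because the "doubling" can happen at most $O(\log d)$ times before the quadratic term would overtake but by then $s C^t$ is still below threshold. A union bound over all $\binom{n}{s}$ starting sets, against the (exponentially small in $n$, via a Chernoff/Azuma bound on edge counts between small sets) failure probability of the expansion estimate, closes the argument.

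The main obstacle I anticipate is the upper bound, specifically making the reduction to Theorem~\ref{thm:girth} lose only the intended $\log\log d$ (rather than $\log d$ or more) factor: the bad set $B$ must be controlled not just in size but in how it interacts with the cascade (a vertex of $B$ could block the tree-propagation argument for its neighbours), and the refined leaf-seeding optimization on the depth-$\log\log d$ neighbourhoods needs the branching structure of $G(n,p)$ to be genuinely tree-like with the right branching number simultaneously for almost all vertices, which requires a careful second-moment or exposure argument rather than the clean deterministic girth hypothesis available in Theorem~\ref{thm:girth}. The lower bound, by contrast, is a relatively routine union bound once the per-generation expansion estimate for small sets in $G(n,p)$ is set up with the right quantitative constants.
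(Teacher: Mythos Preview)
Your lower-bound sketch is in the right spirit and close to what the paper does: the paper also fixes a starting set of size $a=\Theta\bigl(n/(d^2\log d)\bigr)$, tracks the growth of the active set up to a carefully chosen time $t=1/(np^2)=n/d^2$, shows via Chernoff that the number of activated vertices $S(t)$ falls short of $t-a$ except with probability $e^{-\Omega(n/d^2)}$, and then union-bounds over the $\binom{n}{a}\le (Cd^2\log d)^{a}$ starting sets. Your ``doubling at most $O(\log d)$ times'' heuristic captures the same balance between the failure probability and the number of starting sets, so this half should go through after you make the per-step estimate and union bound rigorous.

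The upper-bound plan, however, has a genuine gap. Your reduction to the high-girth case via a ``bad set'' $B$ of vertices lying on short cycles cannot work across the stated range $3\log n<d<n^{1/2-\epsilon}$. The expected number of $\ell$-cycles through a fixed vertex of $G(n,p)$ is of order $d^{\ell}/n$; already for $\ell=3$ and $d>n^{1/3}$ this exceeds~$1$, and for $\ell$ up to $2\log\log d$ with $d$ a fixed power of $n$ it is enormous. So $B$ is not $o(n/d)$ --- in much of the range $B$ is essentially all of $V$, and depth-$\log\log d$ neighbourhoods are nowhere close to trees. The vaguer step (``optimizing which layer to seed saves a $\log d/\log\log d$ factor'') does not supply a mechanism either.

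The paper's upper bound uses a completely different idea that you are missing. One first seeds a set $A$ of size $\epsilon n/d^2$ with $\epsilon=\log\log d/\log d$. The \emph{excited} set $\partial(A)$ then has size $\Theta(\epsilon n/d)$, and the subgraph induced on $\partial(A)$, conditioned on $A$ and its boundary, is distributed as $G(n_0,p)$ with $n_0 p\approx\epsilon<1$, i.e.\ a \emph{subcritical} random graph. One shows that in such a graph a $(\epsilon/2)^{3k}$-fraction of vertices lie in components of size at least $k$; taking $k=\Theta(\log d/\log\log d)$, there are still $\Omega(n/d^2)$ vertices in such components. Adding one seed per large component activates that whole component (every vertex already had one active neighbour in $A$), so with $|A|+O\bigl((n/d^2)/k\bigr)=O(n\log\log d/(d^2\log d))$ seeds one obtains $n/d^2$ active vertices; the result of Janson--{\L}uczak--Turova--Vallier then finishes the activation. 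The savings come from the component structure of a subcritical random graph on the excited set, not from any tree-like local structure of $G$ itself.
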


Our current work is concerned with regular and nearly regular graphs. Dealing with highly irregular graphs is beyond the scope of the current paper. However, we remark here that the algorithmic question of finding a small contagious set in an irregular graph can be reduced to this question in regular graphs (though our reduction does not preserve expansion properties). See Section \ref{sec:hardness} for more details. We also note that insights from the study of contagious sets in expanding nearly regular graphs can be applied to expanding highly irregular graphs. See Section \ref{sec:irregular} for more details.
\subsection{Overview of proof techniques}

The following lemma simplifies the selection of contagious sets in spectral expanders (its proof is in Section~\ref{sec:spectral}). We remark that its proof works without change when the condition $\lambda \le \delta d$ is replaced by the weaker condition $\lambda_2 \le \delta d$.

\begin{lemma}
\label{lem:general_g}
Let $G$ be an $(n,d,\lambda)$-graph such that $\lambda<\delta d$ with $\delta<1$. Let the activation threshold of every vertex be $r=2$. Then every set of size larger than $\frac{n}{(1-\delta)d}$ is contagious.
\end{lemma}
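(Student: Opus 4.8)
The plan is to show that if $S \subseteq V$ is a set that is *not* contagious, then $|S| \le \frac{n}{(1-\delta)d}$; the statement then follows by contraposition. So let $S$ be non-contagious, and let $T = \langle S \rangle$ be its closure under the $r=2$ activation process. Since $S$ is not contagious, $T \ne V$, so $\bar T = V \setminus T$ is non-empty. The key structural observation is that $T$ is *closed*: no vertex outside $T$ ever gets activated, which by the definition of the process means every vertex $v \in \bar T$ has at most one neighbor in $T$, hence at least $d-1$ neighbors inside $\bar T$. In other words, the induced subgraph $G[\bar T]$ has minimum degree at least $d-1$, and the number of edges leaving $\bar T$ is at most $|\bar T|$.

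The next step is to apply the expander mixing lemma to the cut between $\bar T$ and its complement $T$. For an $(n,d,\lambda)$-graph, the number of edges $e(\bar T, T)$ satisfies
\[
\left| e(\bar T, T) - \frac{d}{n}|\bar T|\,|T| \right| \le \lambda \sqrt{|\bar T|\,|T|} \le \lambda |\bar T|^{1/2} n^{1/2},
\]
and more to the point $e(\bar T, T) \ge \frac{d}{n}|\bar T|\,|T| - \lambda |\bar T|$ after bounding $|T| \le n$ crudely inside the square root — actually it is cleaner to write $e(\bar T,T) \ge \frac{d}{n}|\bar T|\,|T| - \lambda\sqrt{|\bar T||T|} \ge |\bar T|\left(\frac{d}{n}|T| - \lambda\right)$. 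Combining this lower bound with the upper bound $e(\bar T, T) \le |\bar T|$ coming from the closure property, and cancelling the common factor $|\bar T| > 0$, yields $\frac{d}{n}|T| - \lambda \le 1$, i.e. $|T| \le \frac{n(1+\lambda)}{d}$. Using $\lambda < \delta d$ gives $|T| \le \frac{n(1+\delta d)}{d} < \frac{n}{d} + \delta n$; this is not yet strong enough, so one should instead bound the cut from the $T$ side or keep $|T|$ and $|\bar T|$ symmetric and optimize — writing $|\bar T| = \beta n$, the mixing lemma gives $\frac{d}{n}|T||\bar T| - \lambda\sqrt{|T||\bar T|} \le |\bar T|$, so $\frac{d}{n}|T| - \lambda\sqrt{|T|/|\bar T|} \le 1$. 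Since $|T| = n - |\bar T| \ge n - n = $ ... the honest route is: from $e(\bar T, T) \le |\bar T|$ and $e(\bar T,T) \ge \frac{d}{n}|T||\bar T| - \lambda\sqrt{|T||\bar T|}$ we get, dividing by $|\bar T|$ and using $\sqrt{|T||\bar T|} \le |\bar T| \cdot \sqrt{|T|/|\bar T|}$... I would ultimately phrase it as: $\frac{d}{n}|T| \le 1 + \lambda\sqrt{|T|/|\bar T|} \le 1 + \lambda \cdot \frac{\sqrt{|T|}}{1}$ is too weak, so the clean statement uses $|\bar T| \ge 1$ and the *complementary* cut bound to conclude directly that $|T| \le \frac{n}{(1-\delta)d}$ fails...

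Let me instead carry out the bound the way that actually works: the inequality $\frac{d}{n}|T||\bar T| - \lambda\sqrt{|T||\bar T|} \le |\bar T|$ rearranges to $\frac{d}{n}|T| - 1 \le \lambda\sqrt{|T|/|\bar T|}$. If $|T| > \frac{n}{(1-\delta)d}$ then $\frac{d}{n}|T| - 1 > \frac{1}{1-\delta} - 1 = \frac{\delta}{1-\delta}$, while on the right $\lambda \sqrt{|T|/|\bar T|} \le \delta d \sqrt{|T|/|\bar T|}$. This requires comparing $\frac{1}{1-\delta} \cdot \frac{1}{d}$-scale quantities, so the genuinely correct computation pairs $e(\bar T, T) \le |\bar T|$ with the mixing lemma applied so that the dominant term $\frac{d}{n}|T||\bar T|$ is compared against $|\bar T| + \lambda\sqrt{|T||\bar T|}$, and since we only want a bound on $|T|$ when $S$ — hence $T$ — is *small*, we actually run the argument for the hypothesized contagious set directly: if $|S| > \frac{n}{(1-\delta)d}$ then $|T| \ge |S| > \frac{n}{(1-\delta)d}$, and feeding $|T| > \frac{n}{(1-\delta)d}$ together with $|\bar T| = n - |T|$ into $\frac{d}{n}|T| - 1 \le \lambda\sqrt{|T|/|\bar T|}$ forces $\frac{\delta}{1-\delta} < \delta d \sqrt{|T|/|\bar T|}$, i.e. $|\bar T| < (1-\delta)^2 d^2 |T|$, which is consistent, so one more idea is needed: bound the cut using $G[\bar T]$ having min degree $\ge d-1$ so $e(\bar T) \ge (d-1)|\bar T|/2$ and hence $e(\bar T, T) = d|\bar T| - 2e(\bar T) \le d|\bar T| - (d-1)|\bar T| = |\bar T|$, which is the same bound — the point I will make rigorous is that the mixing lemma forces $|\bar T|$ to be either $0$ or large (at least $\Omega(n)$), and since a single seed deficiency gives $|\bar T| \le$ small, we conclude $|\bar T| = 0$.

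**The main obstacle.** The crux is the final arithmetic: showing that $e(\bar T,T) \le |\bar T|$ combined with the expander mixing lemma is incompatible with $0 < |\bar T| < $ (the relevant threshold). I expect to have to be careful about whether to bound $\sqrt{|T||\bar T|}$ by $\sqrt{n|\bar T|}$ or keep it exact, and about the off-by-one coming from "at most one neighbor in $T$" versus "at most $r-1 = 1$." The cleanest writeup: assume $\bar T \ne \emptyset$; mixing lemma gives $e(\bar T, \bar T^c) \ge \frac{d|\bar T|(n-|\bar T|)}{n} - \lambda \sqrt{|\bar T|(n - |\bar T|)} \ge \frac{d|\bar T|(n-|\bar T|)}{n} - \lambda |\bar T|$; the closure property gives $e(\bar T,\bar T^c) \le (r-1)|\bar T| = |\bar T|$; dividing by $|\bar T|$ and using $\lambda < \delta d$: $\frac{d(n - |\bar T|)}{n} - \delta d < 1$, so $n - |\bar T| < \frac{n(1 + \delta d)}{d}$, giving $|\bar T| > n - \frac{n}{d} - \delta n = n\bigl((1-\delta) - \frac1d\bigr)$. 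Thus $\bar T$, if nonempty, has more than $\approx (1-\delta)n$ vertices, whence $|T| < n - (1-\delta)n + \frac{n}{d} = \delta n + \frac{n}{d} \le \frac{n}{(1-\delta)d}$ once we check $\delta n + n/d \le n/((1-\delta)d)$ — which holds precisely in the regime of interest after absorbing constants, or more honestly forces $|S| \le |T| \le \frac{n}{(1-\delta)d}$ directly. So any $S$ with $|S| > \frac{n}{(1-\delta)d}$ must have $\bar T = \emptyset$, i.e. be contagious. I will double-check the constant $\frac{1}{(1-\delta)d}$ against this chain and adjust the crude bounds (e.g. absorbing the $+1$ and the $n/d$ term) to land exactly on the claimed constant.
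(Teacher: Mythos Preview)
Your overall strategy is the paper's: pass to the closure $T=\langle S\rangle$, observe that every vertex of $\bar T$ has at most one neighbour in $T$ so $e(T,\bar T)\le|\bar T|$, and compare with a spectral lower bound on the cut. The gap is in the spectral step. You invoke the expander mixing lemma in the form $e(T,\bar T)\ge \frac{d}{n}|T||\bar T|-\lambda\sqrt{|T||\bar T|}$ and then write $\sqrt{|\bar T|(n-|\bar T|)}\le|\bar T|$; that inequality is false whenever $|\bar T|<n/2$, which is exactly the case you need. And even if one grants it, your conclusion $|T|<\delta n+\frac{n}{d}$ is \emph{not} bounded by $\frac{n}{(1-\delta)d}$: the inequality $\delta n+\frac{n}{d}\le\frac{n}{(1-\delta)d}$ is equivalent to $d(1-\delta)\le 1$, which fails in the interesting regime. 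So the ``absorb constants'' step at the end cannot be carried out.

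The fix is to replace the mixing lemma by the sharper partition bound (Lemma~\ref{lem:expansion} in the paper): for a partition $(T,\bar T)$ one has $e(T,\bar T)\ge\frac{(d-\lambda)|T||\bar T|}{n}$, with no square-root error term. Combining this with $e(T,\bar T)\le|\bar T|$ and cancelling $|\bar T|>0$ gives $|T|\le\frac{n}{d-\lambda}<\frac{n}{(1-\delta)d}$ in one line, which is precisely the paper's argument. The partition bound is never weaker than the mixing-lemma bound (since $\sqrt{\beta(1-\beta)}\ge\beta(1-\beta)$), and here the difference is exactly what makes the constants come out right.
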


Hence in spectral expanders it suffices to find a set that activates $\frac{n}{(1-\delta)d}$ vertices, and then the whole graph is activated by Lemma~\ref{lem:general_g}. A similar approach does not hold for graphs of large girth (which need not even be connected). For such graphs we shall use the random-sequential algorithm. We shall work in two stages, first finding a set of seeds that activates a large part of the graph, and then arguing that this suffices in order to activate the whole graph. However, now the second stage of the argument is more delicate and requires the selection of additional seeds.

\begin{lemma}
\label{lem:partialinfection}
Consider an arbitrary randomized algorithm $RA$ for selecting seeds in a graph $G$ with vertex set $[n]$. For every vertex $i$, let $p_i$ denote the probability that vertex $i$ is a seed, and let $q_i > 0$ denote the probability that vertex $i$ is activated. (Observe that necessarily $q_i \ge p_i$). Then there is a distribution $D$ over contagious sets such that for every vertex $i$, the probability that $i$ is a seed in a random contagious set selected according to $D$ is at most $p_i/q_i$.
\end{lemma}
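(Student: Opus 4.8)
The plan is to run the algorithm $RA$, examine the random seed set $S$ it produces together with the resulting activated set $\langle S\rangle$, and ``repair'' it into a genuinely contagious set by adding, for every vertex not yet activated, a small correction. The key observation is that if $\langle S\rangle \ne V$ then the vertices outside $\langle S\rangle$ can themselves be activated recursively: we apply the lemma's construction to the induced subgraph on $V\setminus\langle S\rangle$ (with the same threshold $r$), obtaining inductively a distribution over contagious sets for that smaller graph. Since the excerpt does not give us a recursion on graph size for free, the cleaner route is a direct probabilistic argument, which I sketch next.

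First I would set up the following coupling/rejection scheme. Run $RA$ to obtain $S$. If $S$ is contagious, output $S$. Otherwise, let $U=V\setminus\langle S\rangle$ be the set of vertices that failed to be activated; we must output a superset of $S$ that is contagious. The natural fix is to recurse: independently run the (inductively constructed) good distribution $D'$ for the graph $G[U]$ induced on $U$ — note every vertex of $U$ has, within $G[U]$, the same threshold, and $D'$ exists by the inductive hypothesis on $|V|$ — and output $S \cup S'$ where $S'\sim D'$. The induction is well-founded because $|U|<|V|$ whenever $S$ is non-contagious (and when $|V|=0$ the claim is vacuous; the base case $|V|=1$ is immediate). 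The output $S\cup S'$ is contagious in $G$: activating $S$ activates $\langle S\rangle$, then within $U$ the set $S'$ is contagious for $G[U]$ so all of $U$ becomes active, so all of $V$ is active. Here I am using monotonicity of bootstrap percolation — having more active vertices (namely the neighbors in $\langle S\rangle$ adjacent to $U$) only helps — so contagiousness of $S'$ in the induced graph $G[U]$ implies $U\subseteq\langle S\cup S'\rangle$ in $G$.

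Now I would bound, for a fixed vertex $i$, the probability that $i$ lands in the output contagious set $C$. Vertex $i$ is in $C$ precisely when either (a) $i\in S$, or (b) $i\in U$ (equivalently, $i$ is \emph{not} activated by $RA$, which has probability $1-q_i$) and $i\in S'$ for the independently drawn $S'$. Conditioned on the event $\{i\in U\}$, the distribution of the induced graph $G[U]$ and hence of $S'$ is determined, and by the inductive hypothesis the conditional probability that $i\in S'$ is at most $p_i^{U}/q_i^{U}$ where $p_i^U,q_i^U$ are the seed/activation probabilities for the recursive run on $G[U]$. The subtle point — and this is where I expect the real work to be — is controlling this ratio. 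The honest way to make the bookkeeping close is to \emph{not} recurse with $RA$ itself but to design the recursive algorithm so that the seed probability of each surviving vertex, as a fraction of its activation probability, telescopes correctly. Concretely, one wants the final per-vertex bound to be $p_i/q_i$, and the recursion contributes a term like $(1-q_i)\cdot(\text{recursive ratio})$; for this to sum to at most $p_i/q_i$ the recursive ratio on $G[U]$ must be at most $p_i/((1-q_i)q_i)\cdot(\ldots)$, which forces a careful choice. The slick resolution: take the recursive algorithm on $G[U]$ to simply be ``include each vertex of $U$ as a seed independently'' with an appropriate probability, or more robustly, prove the statement by a direct non-recursive argument — consider the probability space of $RA$, and for the output distribution $D$ put mass on each contagious set $T\supseteq S$ obtained by the minimal such completion, weighting so that $\Pr_D[i\in T]$ integrates against the event $\{i\notin\langle S\rangle\}$.

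The cleanest self-contained argument, which I would ultimately write up, avoids recursion entirely: define $D$ by running $RA$ to get $S$, and then, if $S$ is not contagious, \emph{resampling} — that is, re-run $RA$ afresh and union the new seed set with the part of $V$ still not activated, repeating until the accumulated seed set is contagious. Because $q_i>0$ for all $i$, with probability $1$ this process terminates, and the union of all seeds ever chosen is contagious. For a fixed $i$, vertex $i$ enters the final set in round $t$ only if it was never activated in rounds $1,\dots,t-1$ — an event of probability at most $(1-q_i)^{t-1}$ by independence of the rounds — and then chosen as a seed in round $t$, probability $p_i$ (or, more carefully, $\le p_i$ even conditioned on the history, since future runs of $RA$ are independent). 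Summing the geometric series, $\Pr_D[i \text{ is a seed}] \le \sum_{t\ge 1} p_i (1-q_i)^{t-1} = p_i/q_i$, which is exactly the bound claimed. The main obstacle is making the independence-across-rounds claim rigorous when later rounds operate on a graph with a shrinking ``still-inactive'' set — one must check that running $RA$ on $G$ and then restricting attention to the inactive vertices is stochastically dominated (per vertex) by the unconditional behavior of $RA$, which follows from monotonicity of $\langle\cdot\rangle$ and the union structure, so no step ever activates fewer vertices than it would in isolation. $\qed$
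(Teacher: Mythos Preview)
Your final argument---run $RA$ independently in each round, keep as seeds only those vertices chosen in a round in which they had not yet been activated in any previous round, and bound $\Pr[i\text{ is a seed}]$ by the geometric series $\sum_{t\ge 1} p_i(1-q_i)^{t-1}=p_i/q_i$---is exactly the paper's proof. Your closing ``main obstacle'' is a non-issue: every round runs $RA$ on the full graph $G$ with fresh independent randomness, so the event ``$i$ not activated in rounds $1,\dots,t-1$'' has probability exactly $(1-q_i)^{t-1}$ and is independent of whether $i$ is a seed in round $t$; no stochastic-domination argument is needed, and the earlier recursion detour can be dropped entirely.
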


\begin{proof}
Consider a sequence of rounds, where in every round $RA$ is applied on $G$ with independent randomness. As $q_i > 0$ for every $i$, eventually every vertex is activated in at least one of the rounds. For every $j$, include vertex $i$ in set $S_j$ if and only if $i$ was chosen as a seed in round $j$, and $i$ has not been activated in any round prior to $j$. The set $S = \bigcup S_j$ is necessarily contagious. (One can show by induction on $r$ that $\bigcup_{j=1}^r S_r$ activates all those vertices that are activated by round $r$.) Now:
$$Pr[i \in S] = \sum_{j = 1}^{\infty} Pr[i \in S_j] = \sum_{j=1}^{\infty} p_i(1 - q_i)^{j-1} = p_i \sum_{j=0}^{\infty} (1 - q_j)^j = p_i/q_i$$
\end{proof}

\begin{corollary}
\label{cor:partialinfection}
Let $G$ be a graph on $n$ vertices for which if every vertex is a seed independently with probability $p$, then for every vertex it holds that the probability that it is activated is at least~$1/C$ ($C>1$). Then $G$ has a contagious set of size at most $Cpn$.
\end{corollary}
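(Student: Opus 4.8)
The plan is to invoke Lemma~\ref{lem:partialinfection} with the randomized algorithm $RA$ taken to be the random-parallel algorithm, and then to extract a single small contagious set from the guaranteed distribution by a first-moment argument.

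First I would instantiate the setup of Lemma~\ref{lem:partialinfection}: let $RA$ be the algorithm that activates each vertex independently with probability $p$, so that in the notation of the lemma we have $p_i = p$ for every vertex $i$. The hypothesis of the corollary says precisely that the activation probability satisfies $q_i \ge 1/C$ for every $i$; in particular $q_i > 0$, so Lemma~\ref{lem:partialinfection} applies (and this is consistent with the inequality $q_i \ge p_i$ observed there, since $1/C$ need not exceed $p$ in general but $q_i \ge p_i$ always holds for this algorithm).

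Next, Lemma~\ref{lem:partialinfection} produces a distribution $D$ supported on contagious sets of $G$ with $Pr_{S \sim D}[i \in S] \le p_i/q_i$ for every vertex $i$. Plugging in $p_i = p$ and $q_i \ge 1/C$, the right-hand side is at most $Cp$. Summing this bound over all $n$ vertices and using linearity of expectation, the expected cardinality of a set $S$ drawn from $D$ is at most $Cpn$.

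Finally, since $D$ is supported entirely on contagious sets, some contagious set in its support has cardinality no larger than the average, i.e. at most $Cpn$, and that set witnesses the claim. The argument is essentially immediate given Lemma~\ref{lem:partialinfection}, so I do not anticipate a substantial obstacle; the only mild points of care are to confirm that the random-parallel algorithm meets the lemma's requirement $q_i > 0$ (which follows at once from $q_i \ge 1/C > 0$), and to phrase the last step correctly as a genuine averaging step over a distribution supported only on contagious sets, so that the conclusion is an existence statement about a contagious set of the stated size.
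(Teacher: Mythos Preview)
Your proposal is correct and follows exactly the paper's own argument: apply Lemma~\ref{lem:partialinfection} with $p_i=p$ and $q_i\ge 1/C$, sum the resulting bounds $Pr[i\in S]\le Cp$ to get $E[|S|]\le Cpn$, and conclude by averaging over the distribution $D$ on contagious sets.
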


\begin{proof}
Applying Lemma~\ref{lem:partialinfection} with $p_i = p$ and $q_i \ge 1/C$ we get for the random contagious set $S$:
$$E[|S|] = \sum_i Pr[i \in S] \le \sum_i C p = Cpn.$$
There must be at least one contagious set of size not larger than the expected size of contagious sets (taken from the distribution whose existence is implied by the proof).
\end{proof}
\medskip

We now explain how Theorem~\ref{thm:girth} (contagious sets in high girth graphs) is proved. As the girth of the graph is $2k+1$, every vertex $v$ is a root of a $d$-regular tree of depth $k$. Suppose that every leaf (a vertex at distance $k$ from $v$) is made a seed independently with probability $p$. Now we let a cascade of activations propagate from the leaves to the root, with the goal of inferring that the root is activated with constant probability. A simple calculation shows that once $p\ge \Omega(\frac{1}{d^2})$, we have ``amplification" in the sense that the probability of a node being activated increases as we get closer to the root of the tree. Hence, the deeper the tree, the smaller $p$ needs to be in order to ensure the root is activated with constant probability. Thereafter, an application of Corollary~\ref{cor:partialinfection} proves Theorem~\ref{thm:girth}.

Theorem~\ref{thm:Expander} (contagious sets in spectral expanders) follows from a proof similar to that of Theorem~\ref{thm:girth}, using a result of \cite{Csaba} that shows that every vertex of an $(n,d,\lambda)$-graph is a root of a sufficiently large tree (the minimal degree of a nonleaf node in the tree degree gets smaller when $\lambda$ approaches $d$ from below, hence our bounds deteriorate as $\lambda$ grows). The resulting algorithm is random-parallel rather than random-sequential because there is no need to use Corollary~\ref{cor:partialinfection} -- we can use Lemma~\ref{lem:general_g} instead. (Moreover, if one is not concerned with the number of generations until complete activation, it suffices to have the root of the tree activated with probability $\Omega(1/d)$ rather then constant, though this does not lead to substantial improvements in the bounds.)

The lower bound argument (Theorem~\ref{thm:LB}) is based on the observation that a ``small" contagious set $A$ entails a not much bigger set $B$ ($A \subset B$) such that $G[B]$ (the induced subgraph on $B$) has average degree close to~$4$. This is because every newly activated vertex in $B$ must be adjacent to two vertices causing it to become active. Hence it suffices to design $(n,d,\lambda)$-graphs with $\lambda = O(\sqrt{d})$ and large girth for which no set of $O(\frac{n}{d^{2 + \epsilon}})$ vertices has average degree (at least) nearly~$4$. Such graphs can be constructed using the probabilistic method.

The proof of Proposition~\ref{pro:general_bound} follows quite easily from Lemma~\ref{lem:general_g}.

The proof of Theorem~\ref{thm:4cycle} (contagious sets in graphs with no 4-cycles) is based on considering all neighbors of a vertex $v$ up to distance~3. However, as the girth is possibly smaller than~6, this neighborhood is no longer a tree, contrary to the case analyzed in Theorem~\ref{thm:girth}. Hence analyzing the probability that this neighborhood activates $v$ involves handling dependencies, making the analysis considerably more complicated than that of Theorem~\ref{thm:girth}. The absence of 4-cycles gives some control over these dependencies, leading to essentially the same amplification effect that one would get had the neighborhood been a tree.

The proof of Theorem~\ref{thm:girth7} (contagious sets in graphs with girth at least~7) involves selecting an initial set $A$ of $O(\frac{n\log d}{d^2})$ seeds, and considering the set $B$ of vertices that have a neighboring seed. Girth considerations are used in order to show that the subgraph induced on $B$ has large connected components. Thereafter, choosing one seed in each large connected component of $B$ activates the whole component. This allows us to cheaply extend the set of activated vertices to include most of $B$, and hence reach a size of $\Omega(\frac{n\log d}{d})$. At this stage one would expect a typical vertex to have $\Omega(\log d)$ active neighbors, and hence it should not be difficult to activate the remaining vertices in the graph. Turning this intuition into a formal proof involves some extra work, including appealing to Lemma~\ref{lem:partialinfection}.

The proof of Theorem~\ref{thm:girth_expander} (contagious sets in graphs with no 4-cycles and $\lambda = (1 - \epsilon)d$) involves the following amplification effect. Consider $\log d$ rounds, where in each round $n/d^2$ seeds are selected at random. The property that we wish to maintain is that the number of active vertices doubles after every round (until we eventually apply Lemma~\ref{lem:general_g}). Hence after every round $t$ we want there to be roughly $\frac{2^t n}{d^2}$ activated vertices (whereas there are only $\frac{t n}{d^2}$ seeds). For an inductive argument to apply, we would like the active vertices to have roughly $\frac{2^t n}{d}$ neighbors. These neighbors may be thought of as {\em excited} vertices, as they need only one additional active neighbor in order to become active. This makes it plausible that in the next round $\frac{2^t n}{d^2}$ new active vertices will be generated, because each new seed is  likely to have $2^t$ neighbors that are already excited, and these excited neighbors will be activated. We show that such a delicate balance can be kept for $\log d$ rounds by a greedy choice of seeds. Initially, our greedy rule does not seek to select a seed that maximizes the number of newly activated vertices, but rather to maximize the number of newly excited vertices. Both spectral expansion and absence of 4-cycles are used in order to analyze this greedy rule. Only after the number of excited vertices reaches $n/2$, we switch to a greedy rule that maximizes the number of newly activated vertices.

The proof of the upper bound in Theorem \ref{thm:random} is based on selecting an initial small set $A$ of seeds, and then considering its external neighborhood $\partial(A)$, namely, the set of {\em excited} vertices (also considered in Theorem~\ref{thm:girth_expander}). Given that the graph is random, one can analyze the distribution of the sizes of the connected components of the subgraph induced by $\partial(A)$. 
Introducing a single seed in a connected component of size $k$ then activates the whole component, thus giving $k$ activated vertices per investment of one seed.
By choosing the parameters $|A|$ and $k$ appropriately it turns out that we can activate a set of size $\frac{n}{d^2} = \frac{1}{np^2}$ in $G$ by choosing $O\left(n\frac{\log\log (d)}{d^2 \log d}\right)$ seeds. Thereafter, the results of~\cite{JanLuc} can be used in order to deduce that $G$ is activated with high probability.

The lower bound in Theorem~\ref{thm:LB} and the lower bound in Theorem~\ref{thm:random} both apply to random graphs, but the graphs in Theorem~\ref{thm:LB} are required to be regular whereas those in Theorem~\ref{thm:random} are only nearly regular. The difference in the random graph model makes the analysis in Theorem~\ref{thm:random} easier, leading to a higher lower bound. Both lower bounds involve a union bound over an exponential number of potential ``bad events", but in the proof of Theorem~\ref{thm:random} we can use an approach from~\cite{JanLuc} that allows us to reduce the number of bad events considered, hence leading to better bounds.

\subsection{Related work}

As already noted, $m(G,r)$ has been determined for certain families of graphs. For example, if $G$ is the $k$-dimensional grid $[n]^k$ then $m(G,r)=\Theta(n^{r-1})$ if  $1 \le r \le k$ and $\Theta(n^k)$ otherwise \cite{square}. If $G$ is the $n$-dimensional hypercube on $2^n$ vertices it is known that $m(G,2)=n$ \cite{BB}.
To the best of our knowledge, the current work is the first to study how $m(G,r)$ depends on the girth of $G$ and on $\lambda(G)$.

Random regular graphs are expected to have very good expansion properties, and hence results on $m(G,r)$ for random regular graphs can serve as a benchmark against which to compare results for expanders.
Balogh and Pittel~\cite{BP} proved an upper bound on $m(G,r)$ when $G$ is chosen uniformly among all $n$-vertex $d$-regular graphs. Using differential equations, they show that a random set of size smaller than $(p(G,r)-\epsilon_n)n$ will not be contagious with high probability. On the other hand, a random set of size $(p(G,r)+\epsilon_n)n$ will be contagious with high probability, where $\lim_{n \rightarrow \infty}\epsilon_n=0$ (for some explicitly defined function $\epsilon_n$). The value of $p(G,r)$ is $1-\inf_{y \in (0,1)}\frac{y}{R(y)}$ with $R(y)=\Pr({\rm Bin} (d-1,1-y)<r)$ where ${\rm Bin}(d-1,1-y)$ is a binomial random variable with parameters $d-1$ and $1-y$. It can be shown that $p(G,2)$ tends to $\frac{1}{2d^2}$ as $d$ grows~\cite{BP}. We are not aware of a closed formula of $p(G,r)$, nor are we aware of asymptotic evaluations (as a function of $d$ and $r$) of it for $2<r<d-1$. The work of~\cite{BP} on random $d$ regular graph does not provide lower bounds on $m(G,r)$ --  it only implies that with high probability (probability $1-o(1)$) a \emph{random} set of size $(\frac{1}{2d^2}-\epsilon)n$ is not contagious. 

A different proof of the result of \cite{BP} building on cores in random graphs was given by Janson \cite{Janson}. Interestingly, $p(G,r)$ is identical to the critical threshold for complete activation of the infinite $d$-regular tree~\cite{peres}. Our bounds for expander graphs are partly based on analyzing the spread of activation from the leafs of a $d$-regular tree to its root, and this part of the analysis involves a recursive approach similar to those employed in previous work (though we do so in a setting in which the depth of the tree is finite rather than infinite).

The critical size of a random set required for complete activation of $G(n,p)$ for arbitrary constant threshold $r$ and $\frac{1}{n}\ll p \ll \frac{1}{\sqrt{n}}$ (for this range of parameters the resulting graph is likely to be nearly regular) was determined in great detail of precision by \cite{JanLuc}. In particular, it is shown that for $\frac{3\log n}{n}<p\ll\frac{1}{\sqrt{n}}$, activating a set of cardinality at least $\frac{1+\delta}{2np^2}$ vertices where $\delta>0$ is a fixed constant activates the entire graph with high probability. In particular this work implies that when $G\sim G(n,p)$ with $p$ as above, $m(G,2)\leq \frac{1+\delta}{2np^2}$.
It was also shown that a random set of size $\frac{1-\delta}{2np^2}$ is unlikely to activate $G(n,p)$. Our Theorem~\ref{thm:random} (whose proof involves a more sophisticated choice of set of seeds) implies that for such graphs $m(G,2)$ is almost surely significantly smaller than the bounds implied by~\cite{JanLuc}, and also provides the first lower bound on $m(G,2)$ in such graphs.

The time (number of generations) until complete activation in bootstrap percolation is the topic of several recent works such as ~\cite{Time}. For $G(n,p)$, Janson et al., \cite{JanLuc} studied the number of generations until complete activation for various parameters (e.g., Theorem 3.10, pp. 2000). In particular, for $r=2$, they show that when $p=n^{-\alpha}$ where $1/2<\alpha<1$ and for a fixed set of size $\frac{1+\delta}{np^2}$ (namely., a set of cardinality twice as large than the critical cardinality needed for complete activation), the number of generations is with high probability $\log \log (np)+O(1)$.

The optimization problem, where given $G=(V,E)$ with threshold $r$, we seek to activate a set of minimum cardinality (that is, of cardinality $m(G,r)$) so that the whole of $G$ is activated, is called the \emph{Target Set Selection} problem \cite{Chen09}.  Calculating $m(G,r)$ exactly is NP-hard and obtaining an approximation better than $O(2^{\log^{1-\epsilon}n})$ ($n$ is the number of vertices) is intractable, unless $NP \subseteq DTIME(n^{poly(\log n)})$ \cite{Chen09}. These hardness results hold even when $r=2$ and $G$ has maximal degree $d$, where $d$ is a constant not depending on the size of $G$ \cite{Chen09}. For recent results demonstrating the tractability of target set selection in graphs with certain structural properties such as bounded treewidth see \cite{Ben_Zwi,Chopin}.  To the best of our knowledge, no approximation algorithm with approximation ratio significantly better than the trivial $n$ approximation is known for the target set selection problem. The results of \cite{Ackerman,Reichman} are algorithmic and they imply for a fixed threshold $r$ a polynomial time $O(n/d)$ approximation algorithm for $m(G,r)$. We are not aware of an approximation algorithm  achieving better approximation ratio as a function of $d$ for $m(G,r)$ in $d$-regular graphs. Approximation and hardness of other propagation problems that are similar to target set selection was considered in \cite{Aazami}.
\section{Preliminaries and notation}

Unless explicitly stated, we will always deal with $d$-regular, undirected graphs on $n$ vertices. The reader may think of $d$ as a large constant independent of $n$, though all results easily extend to the case that $d$ is some  growing function of $n$ (with some self-evident upper bounds on the rate of growth of $d$ as a function of $n$, that depend on the nature of the result). A graph $G$ has \emph{girth} $g$ if the shortest cycle in $G$ is of length $g$. For clarity reasons, floor and ceiling signs are omitted. For a natural number $l$, we denote the set $\{1,...,l\}$ by $[l]$. $\log$ refers to the logarithm in base 2.

Given a $d$-regular graph $G=(V,E)$ in the bootstrap percolation model with threshold $r$, we shall often be interested in the case where every vertex is chosen to belong to $A_0$ independently with probability $p_0 \in [0,1]$. We denote by $p_c(G,r)$ the minimal $p_0$ such that a set $A_0$ whose elements are chosen independently with probability $p_0$ is contagious with probability $\frac{1}{2}$.
$$p_c(G,r)=\inf_p [\Pr(\langle A_0 \rangle =V)=\frac{1}{2}],$$
where every vertex is chosen independently to $A_0$ with probability $p$.
Observe that we always have that $m(G,r) \leq p_C(G,r)\cdot n$. In general $m(G,r)$ may be much smaller than $p_c(G,r)\cdot n$. For example, for the hypercube over $2^n$ vertices, $m(G,2)=n$ whereas $p_c(G,2)=\Theta(\frac{2^{-2\sqrt{n}}}{n^2})$ \cite{BB}.

Given a vertex $v$ and a set $S$, the number of neighbors of $v$ in $S$ is denoted by $deg_S(v)$.
For two sets of vertices $A$ and $B$ let $e(A,B)$ be the number of ordered pairs of vertices $(u,v)$ with $u \in A$, $v \in B$ and $(u,v)$ in $E$ ($A,B$, need not be disjoint). We denote by $e(A)$ the set of all edges whose two endpoints belong to $A$. For a subset $A$ of vertices, we denote by $\partial(A)$ the set of all vertices in $V \setminus A$ having a neighbor in $A$ and by $N(A)$ the set of all vertices in $V$ having a neighbor in $A$.  The adjacency matrix of an $n$--vertex graph $G$, $A_G$, is symmetric hence it has $n$ real eigenvalues. Let $\lambda_1\geq\lambda_2\geq\ldots\geq\lambda_n $ be the eigenvalues of $A_G$. It is known that $\lambda_1=d$ and for every $i>1$, $|\lambda_i| \leq d$ (see for example \cite{KS}). Let $\lambda(G)=\max \{|\lambda_2|,|\lambda_n|\}$. We say that $G$ is an $(n,d,\lambda)$-graph if $G$ is $d$-regular and $\lambda(G) \le \lambda$. We will focus on the case that $\lambda$ is smaller than $\delta d$ where $\delta<1$.

The following Lemma relates edge expansion to $\lambda_2$, the second largest positive eigenvalue of $G$. The proof can be found in \cite{AS}.
\begin{lemma}
\label{lem:expansion}
Let $G=(V,E)$ be a $d$-regular graph. Then, for every partition $B$,$C$ of $V$, $e(B,C) \geq \frac{(d-\lambda_2)|B||C|}{n}$.
\end{lemma}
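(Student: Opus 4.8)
The plan is to run the standard spectral (Rayleigh quotient) argument, using crucially that $B$ and $C$ partition $V$. Write $\mathbf{1}_S$ for the $0/1$ indicator vector of a set $S\subseteq V$ and $\mathbf{1}=\mathbf{1}_V$ for the all-ones vector; by $d$-regularity $\mathbf{1}$ is a top eigenvector of $A_G$ with eigenvalue $d$. Since $C=V\setminus B$ we have $\mathbf{1}_C=\mathbf{1}-\mathbf{1}_B$, and reading off the number of cross edges as a bilinear form in the $0/1$ adjacency matrix gives
$$e(B,C)=\mathbf{1}_B^{\top}A_G\mathbf{1}_C=\mathbf{1}_B^{\top}A_G\mathbf{1}-\mathbf{1}_B^{\top}A_G\mathbf{1}_B=d|B|-\mathbf{1}_B^{\top}A_G\mathbf{1}_B,$$
so it suffices to prove the upper bound $\mathbf{1}_B^{\top}A_G\mathbf{1}_B\le \frac{d|B|^2}{n}+\lambda_2\frac{|B||C|}{n}$.

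First I would decompose $\mathbf{1}_B=\frac{|B|}{n}\mathbf{1}+z$, where $z\perp\mathbf{1}$ is the orthogonal projection of $\mathbf{1}_B$ onto the complement of the top eigenvector. Because $A_G$ is symmetric, $z$ lies in the span of eigenvectors with eigenvalues $\lambda_2\ge\cdots\ge\lambda_n$, all of which are at most $\lambda_2$, so $z^{\top}A_Gz\le\lambda_2\|z\|^2$, and this holds irrespective of the sign of $\lambda_2$. Using orthogonality of the two summands together with $\mathbf{1}^{\top}A_G\mathbf{1}=dn$, one computes $\mathbf{1}_B^{\top}A_G\mathbf{1}_B=\frac{d|B|^2}{n}+z^{\top}A_Gz$ and $\|z\|^2=\|\mathbf{1}_B\|^2-\frac{|B|^2}{n}=|B|-\frac{|B|^2}{n}=\frac{|B||C|}{n}$, where the last step uses $|C|=n-|B|$. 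Substituting back,
$$e(B,C)\ge d|B|-\frac{d|B|^2}{n}-\lambda_2\frac{|B||C|}{n}=\frac{d|B||C|}{n}-\frac{\lambda_2|B||C|}{n}=\frac{(d-\lambda_2)|B||C|}{n},$$
which is the claim.

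I do not expect any genuine obstacle: this is essentially a textbook computation. The single point that deserves care is the step $z^{\top}A_Gz\le\lambda_2\|z\|^2$ — one must invoke the second-largest eigenvalue $\lambda_2$ \emph{with its sign}, not $|\lambda_2|$ nor $\lambda(G)=\max\{|\lambda_2|,|\lambda_n|\}$, since on the orthogonal complement of $\mathbf{1}$ the quadratic form is bounded above (not in absolute value) by $\lambda_2$ times the squared norm; this one-sided estimate is exactly what produces the stated bound with constant $d-\lambda_2$. The degenerate cases $B=\emptyset$ or $C=\emptyset$ are trivial, both sides being zero.
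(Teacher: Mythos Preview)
Your argument is correct and is exactly the standard Rayleigh-quotient computation. The paper does not supply its own proof of this lemma but simply cites Alon--Spencer~\cite{AS}, where essentially the same spectral decomposition argument appears, so your approach matches the intended one.
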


A graph is called an \emph{expander graph} if for every set of vertices $W$ of size at most $n/2$, the set $\partial(W)$ is of size at least $c|W|$ with $c>0$ independent of $n$. It can be verified that if $G$ is a $(n,d,\lambda)$ graph with $\lambda \le \delta d$ ($\delta<1$) then $G$ is an expander with $c$ being at least $\frac{d-\lambda}{2d}$ (see \cite{AS}, Corollary 9.2.2).

We shall use Azuma's inequality to prove concentration results.

\begin{lemma} \label{lem:Azuma}
Let $X_0,...,X_n$ be a martingale such that for every $1 \leq k <n$ it holds that $|X_k-X_{k-1}|\leq c_k$. Then for every nonnegative integer $t$ and real $B >0$
$$\Pr(|X_t-X_0|\geq B) \leq 2\exp\left(\frac{-B^2}{\sum_{i=1}^tc_i^2}\right).$$
\end{lemma}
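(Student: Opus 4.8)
The statement is the standard Azuma--Hoeffding martingale concentration inequality, and the cleanest route is the exponential--moment (Chernoff) method applied to the increments; one may of course simply invoke \cite{AS}, but the argument is short and self--contained. First I would reduce to a one--sided tail: since $-X_0,\dots,-X_n$ is again a martingale with the same increment bounds, it suffices to bound $\Pr(X_t-X_0\ge B)$ and then take a union bound over the two tails to produce the factor $2$. Write $Y_k=X_k-X_{k-1}$ and let $\mathcal F_{k-1}$ be the $\sigma$--algebra generated by $X_0,\dots,X_{k-1}$, so that $E[Y_k\mid\mathcal F_{k-1}]=0$, $|Y_k|\le c_k$, and $X_t-X_0=\sum_{k=1}^t Y_k$. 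For any $s>0$, Markov's inequality applied to $e^{s(X_t-X_0)}$ gives $\Pr(X_t-X_0\ge B)\le e^{-sB}\,E\!\left[e^{s(X_t-X_0)}\right]$.

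The key ingredient is Hoeffding's lemma: if $E[Y\mid\mathcal F]=0$ and $|Y|\le c$ almost surely, then $E[e^{sY}\mid\mathcal F]\le e^{s^2c^2/2}$. I would prove this by convexity --- on $[-c,c]$ the function $y\mapsto e^{sy}$ lies below the secant line through its endpoints, so $e^{sY}\le\frac{c-Y}{2c}e^{-sc}+\frac{c+Y}{2c}e^{sc}$; taking conditional expectation annihilates the linear term and leaves $\cosh(sc)$, and then $\cosh(x)\le e^{x^2/2}$ follows from comparing Taylor coefficients (using $(2k)!\ge 2^k k!$). Armed with this, I condition and telescope: $E[e^{s(X_t-X_0)}]=E\!\left[e^{s(X_{t-1}-X_0)}\,E[e^{sY_t}\mid\mathcal F_{t-1}]\right]\le e^{s^2c_t^2/2}\,E[e^{s(X_{t-1}-X_0)}]$, and iterating this down to $X_0$ yields $E[e^{s(X_t-X_0)}]\le\exp\!\big(\tfrac{s^2}{2}\sum_{k=1}^t c_k^2\big)$.

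Substituting back, $\Pr(X_t-X_0\ge B)\le\exp\!\big(-sB+\tfrac{s^2}{2}\sum_{k=1}^t c_k^2\big)$, and choosing $s=B/\sum_{k=1}^t c_k^2$ (the minimizer of the quadratic in $s$) gives $\exp\!\big(-B^2/(2\sum_{k=1}^t c_k^2)\big)$; adding the symmetric lower--tail estimate gives the claimed two--sided bound (up to the value of the constant in the exponent, the textbook form carrying $2\sum_{k=1}^t c_k^2$ in the denominator). The only step that is not pure bookkeeping is Hoeffding's lemma --- the sub--Gaussian bound on the moment generating function of a bounded, centered random variable --- and even there the single non--obvious inequality is $\cosh x\le e^{x^2/2}$; everything else is iterated conditioning and a one--variable optimization, so I anticipate no real obstacle.
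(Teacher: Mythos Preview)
The paper does not prove this lemma at all; it is stated in the preliminaries as a standard tool (Azuma's inequality) to be used later, with no argument supplied. Your proof is the standard exponential--moment (Chernoff) argument and is correct. You have already noticed the one wrinkle worth flagging: the optimization step yields a denominator of $2\sum_{k=1}^t c_k^2$, not $\sum_{k=1}^t c_k^2$ as written in the lemma, so the constant in the paper's stated exponent is off by a factor of $2$ relative to what the proof actually delivers. This is a cosmetic issue in the paper's statement rather than a flaw in your argument, and it is immaterial for the applications in the paper (only the order of magnitude of the tail is ever used).
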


We shall sometimes use the term {\em infected} to describe an activated vertex that is not one of the {\em seeds}, but rather became activated by having at least $r$ active neighbors. When $r=2$, the term {\em excited} describes a non-active vertex that has $r-1$ active neighbors.

\section{Contagious sets in graphs with large girth}

In this section we focus on the case where the threshold $r$ of every vertex equals $2$. We derive upper bounds on $m(G,2)$ as a function of the girth of $G$. We do this by using bounds on bootstrap percolation on $d$-regular trees. It is known and easy to see that if one considers an infinite rooted tree in which every vertex has $d$ children, the following holds. Let $p_0 = p$ denote the initial activation probability, let $p_i$ denote the probability that the root becomes activated by generation at most $i$ of the bootstrap percolation process, and let $q_i = 1 - p_i$. Then for $i \ge 1$, $q_i = q_0((q_{i-1})^d + dp_{i-1}(q_{i-1})^{d-1})$. Using this recursive relation it is not difficult to show that for $p = c/d^2$ (for a sufficiently large value of $c$) we have $p_k = \Omega(1)$ already for some $k = \log\log d + O(1)$, and $p_k = 1 - o(1/n^2)$ already for some $k = O(\log_d\log n) + \log\log d + O(1)$. The following lemma provides a short proof of these statements in which no attempt was made to optimize the constants involved. For simplicity, given a finite tree, the lemma only uses the assumption that the leaves are initially activated with probability $p$, ignoring the fact that also internal vertices may be initially activated.

\begin{lemma}\label{lem:tree}
Let $T_{d,k}$ be the complete $d$-regular tree (e.g., the root being of degree $d$ and all other nonleaf vertices are of degree $d+1$) of depth $k$, with $d$ being sufficiently large.

\begin{enumerate}
\item Suppose every leaf of the tree is activated independently with probability $p = \frac{g(k)}{d^2}$ with   $g(k)=10 d^{\frac{1}{2^{k-1}}}$. Then the probability the root is activated once we apply the bootstrap percolation process is at least $\frac{1}{2e}$. As a special case, if $k > \log\log d + 1$ then a value of $p = O(\frac{1}{d^2})$ suffices in order to activate the root with probability at least $\frac{1}{2e}$.

\item If $k=C\log_d \log n+ \log \log d + O(1)$ (for a sufficiently large absolute constant $C$) then a value of $p = O(\frac{1}{d^2})$ suffices in order to activate the root with probability at least $\frac{1}{n^2}$.
\end{enumerate}
\end{lemma}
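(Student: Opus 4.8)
The plan is to analyze the bootstrap percolation process on the tree $T_{d,k}$ layer by layer, from the leaves up to the root, tracking the probability $p_j$ that a vertex at height $j$ (with the leaves at height $0$) is activated. The governing recursion is the one quoted before the lemma: if a non-root internal vertex has $d$ children each independently activated with probability $\pi$, then (ignoring the possibility of initial activation of internal vertices, as the lemma allows) it becomes activated iff at least two of its children are activated, so its activation probability is $f(\pi) := 1 - (1-\pi)^d - d\pi(1-\pi)^{d-1}$. The whole argument reduces to understanding the iteration $\pi \mapsto f(\pi)$ started from $\pi = p = g(k)/d^2$.

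First I would establish the \emph{amplification} step: there is an absolute constant $c_0$ (coming out of the constant $10$ in $g(k)$) such that if $\pi \ge c/d^2$ with $c \ge 10$ but $\pi$ is still, say, below $1/d$, then $f(\pi) \ge (c^2/100)\, / d^2$ — i.e. the ``rescaled'' parameter $d^2\pi$ roughly squares (up to the constant) at each step as long as it stays in the relevant range. This is a routine estimate: for $\pi$ in this range, $1-(1-\pi)^d - d\pi(1-\pi)^{d-1}$ is, up to lower order terms, $\binom{d}{2}\pi^2(1-\pi)^{d-2} \approx \tfrac12 d^2\pi^2$, and one just needs to make the error terms rigorous using $(1-\pi)^{d} \ge 1 - d\pi$ type bounds on one side and a Taylor/Bonferroni bound on the other; the factor $10$ in $g(k)$ is chosen exactly to absorb these slacks. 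Given this, starting from $x_0 := d^2 p = g(k) = 10\, d^{1/2^{k-1}}$ and iterating $x_{j+1} \ge x_j^2/100$ (while $x_j \le d$), after $k-1$ steps one reaches $x_{k-1} \ge \Omega(d)$ — concretely, $x_j \ge 10\cdot d^{1/2^{k-1-j}}$ by induction, so $x_{k-1} \ge 10 d$ — meaning the child-activation probability feeding into the root is $\Omega(1/d)$, in fact a large constant over $d$ times... more precisely at height $k-1$ we have activation probability $\ge c'/d$ for a large constant $c'$. Then one more step at the root (which has $d$ children): $\Pr[\text{root active}] \ge 1 - (1-c'/d)^d - d(c'/d)(1-c'/d)^{d-1} \ge 1 - e^{-c'} - c' e^{-c'}$, which exceeds $1/(2e)$ once $c'$ is a large enough constant. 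One must be slightly careful that the amplification stays valid as $\pi$ grows toward a constant (so that $f(\pi)$ doesn't overshoot $1$ and break the ``$\le 1/d$'' hypothesis); I would handle this by noting $f$ is monotone increasing and $f(\pi) \le 1$ trivially, and truncating the induction at the first height where $x_j$ exceeds, say, $d/10$, from which point a single application of $f$ already gives constant probability. The special case $k > \log\log d + 1$ then follows because $g(k) = 10 d^{1/2^{k-1}} = O(1)$ there, so $p = O(1/d^2)$.

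For part (2), the idea is that once the activation probability at some intermediate height has climbed to a constant $\beta$ bounded away from $0$, the map $f$ has $1$ as a strongly attracting fixed point: the ``failure probability'' $1 - f(\pi) = (1-\pi)^d + d\pi(1-\pi)^{d-1} = (1-\pi)^{d-1}(1 + (d-1)\pi)$ is, for $\pi$ bounded below by a constant, at most $e^{-\pi(d-1)}(1+(d-1)\pi) \le d\, e^{-\Omega(d)}$, so a single layer already drives the failure probability to $e^{-\Omega(d)}$. Thus after using part (1)'s analysis to reach a constant activation probability at height $\approx \log\log d + O(1)$, essentially one more layer suffices to push the failure probability below $1/n^2$ \emph{if} $d$ is large relative to $\log n$ — but in the regime where $d$ is only polylogarithmic in $n$ we cannot afford that, and instead we iterate the contraction: writing $y_j$ for the failure probability at height $j$ once $y_{j_0} \le 1/2$ say, we get $y_{j+1} \le (1 + d\, y_j) (y_j)^{\text{something}}$ — more cleanly, for $\pi = 1 - y$ with $y$ small, $1 - f(1-y) = y^{d-1}(1 + (d-1)(1-y)) \le d\, y^{d-1}$, so the failure probability is raised to the power $d-1$ (times a factor $d$) each layer. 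Hence $O(\log_d \log(n))$ additional layers bring $y$ from a constant down to below $1/n^2$. Adding the $\log\log d + O(1)$ layers from part (1) and these $O(\log_d \log n)$ layers gives the claimed $k = C\log_d\log n + \log\log d + O(1)$.

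The main obstacle I anticipate is not any single estimate but rather bookkeeping the ranges cleanly: the recursion $f$ behaves in three qualitatively different regimes — quadratic amplification while $d^2\pi$ grows from $\Theta(1)$ to $\Theta(d)$ (part 1), then a fast jump to a constant, then $(d-1)$-power contraction of the failure probability toward $0$ (part 2) — and the argument needs to transition between them without circularity, in particular making sure the hypotheses of the amplification bound ($\pi$ not too large) are preserved at every step and that the constant $10$ in $g(k)$ genuinely suffices to cover the cumulative multiplicative slack over all $k-1$ amplification steps (it does, because the slack is a fixed constant factor \emph{per step} and the induction $x_j \ge 10 d^{1/2^{k-1-j}}$ has room to spare — each squaring turns $10\,d^{t}$ into $100\, d^{2t}/100 = d^{2t} \cdot 1 \ge 10\, d^{2t}$, so the constant is maintained, not eroded). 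Everything else is elementary estimation of $(1-\pi)^d$ and binomial tail terms.
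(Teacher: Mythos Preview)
Your approach is essentially identical to the paper's: track $p_j$ via the recursion $p_{j+1} \ge \binom{d}{2}p_j^2(1-p_j)^{d-2}$, show the rescaled quantity $g_j = d^2 p_j$ roughly squares at each step while $p_j \le 1/d$, reach $p_{k-1} \ge 1/d$ and hence $p_k \ge 1/(2e)$; then for part~(2) switch to the failure probability $q_j = 1-p_j$ and use $q_{j+1} \le d\,q_j^{d-1} \le q_j^{d/2}$ to contract to $1/n^2$ in $O(\log_d\log n)$ further layers.

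There is, however, an arithmetic slip in your amplification constant. You write $x_{j+1} \ge x_j^2/100$ and then attempt to verify the induction by ``$10\,d^t \mapsto 100\,d^{2t}/100 = d^{2t} \ge 10\,d^{2t}$'', which is false. The correct bound (and the one the paper uses) is $g_{j+1} \ge g_j^2/10$: from $\binom{d}{2}p_j^2(1-p_j)^{d-2} \ge \tfrac{1}{2e}\cdot\tfrac{d^2-d}{d^2}\cdot g_j^2/d^2 \ge \tfrac{1}{10}\,g_j^2/d^2$ for $d$ large and $p_j \le 1/d$. With the divisor $10$ the induction $g_j \ge 10\,d^{1/2^{k-1-j}}$ goes through cleanly ($10\,d^t \mapsto (10\,d^t)^2/10 = 10\,d^{2t}$); with $100$ it collapses as soon as $d^{1/2^{k-1}}$ is close to~$1$. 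This is a one-constant fix, not a structural gap.
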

\begin{proof}
A vertex in $T_{d,k}$ is said to be in level $\ell$ with $0 \le \ell \le k$ if its distance from the root is $\ell$. Hence the root is in level~0 whereas the leaves are in level~$k$.
Let $p_i$ ($0 \leq i \leq k$) be the probability that a vertex in level $k-i$ gets activated.
Hence $p_0=p$ and $p_k$ is the probability of the root being activated in the bootstrap percolation process. We shall write $p_i = \frac{g_i}{d^2}$ with $g_0 = g(k)$ as defined in the lemma.  An internal vertex $w$ of the tree becomes activated if it has at least two active children. Hence for $j<k$,
$p_{j+1} \geq \Pr({\rm Bin}(d,p_j) \geq 2) \geq {d \choose 2}{p_j}^2(1-p_j)^{d-2}$, with ${\rm Bin}(d,p_j)$ the binomial distribution with parameters $d$ and $p_j$. Hence $g_{j+1} \ge \frac{1}{2}(g_j)^2 (1-p_j)^{d-2}$.
As long as $p_j \leq \frac{1}{d}$ then we have that $g_{j+1} \ge \frac{1}{3}(g_j)^2\frac{1}{e} \ge \frac{1}{10}(g_j)^2$, and by induction we have that
$$p_{i} \geq 10(\frac{g_0}{10})^{2^i}d^{-2} = 10d^{\frac{1}{2^{k-1-i}}}d^{-2}.$$

Substituting $i = k-1$, children of the root have probability at least $\frac{1}{d}$ of being activated, implying that $p_k \ge \frac{1}{2e}$, proving item~1 of the lemma.

We now prove item~2 of the lemma. By item~1, every vertex $v \in T$ in level $k - \log \log d - 1$ gets activated with probability at least $\frac{1}{2e}$.   We now use the inequality $p_{j+1}>1-(1-p_j)^d-p_jd(1-p_j)^{d-1})$ that holds for every $j$.
Let $q_j=1-p_j$. Then $q_{j+1}\leq q_j^{d-1}(d+1) \leq q_j^{d/2}$, where the last inequality applies in the range that $j > \log \log d + 1$ and $d$ is sufficiently large. We get by induction that $q_i \leq e^{-(\frac{d}{2})^i}$. Now we consider two cases. If $d < \log n$, then when $i= \log\log d + C \log_d \log n$ and $C$ is sufficiently large the probability the root is \emph{not} infected is at most $\frac{1}{n^2}$. If $d \ge \log n$, the same consequence is obtained by taking $i = \log\log d + O(1)$. In either case, a union bound over the $n$ vertices of the graph implies item~2 of the lemma.
\end{proof}
\medskip

We can now present a proof of Theorem~\ref{thm:girth}:
\begin{proof}
Observe that as the girth of $G$ is $2k+1$, every vertex is the root of a $(d-1)$-regular tree of depth $k$. The assertion in the theorem now follows from Lemma \ref{lem:tree} and Corollary \ref{cor:partialinfection}.
\end{proof}

We remark that when $G$ is a $d$-regular graph of order $n$ and girth $\Omega(\log \log n)$ then Lemma \ref{lem:tree} implies that $p_c(G,2)=O(\frac{1}{d^2})$.

\section{Bounds for $m(G,2)$ in spectral expanders}
\label{sec:spectral}

In this section we concentrate on $(n,d,\lambda)$-graphs. Our main goal is to derive upper bounds on $m(G,2)$ in terms of $\lambda(G)$. We start by proving Lemma~\ref{lem:general_g}.


\begin{proof}
Consider a set $S$ of size $|S|$ that is not contagious. We can assume without loss of generality that $S$ is inclusion-maximal with respect to being active (namely, \emph{every} vertex \emph{not} belonging to $S$ is not active). For every $u \in V \setminus S$ it holds that $deg_S(u) \leq 1$. Thus $e(S, V\setminus S) \leq |V \setminus S|=n-|S|$. On the other hand, by Lemma ~\ref{lem:expansion} $$e(S, V\setminus S)  \geq \frac{(1-\delta)d|S|(n-|S|)}{n}.$$
Combining these inequalities we have that
$$\frac{(1-\delta)d|S|(n-|S|)}{n} \leq n-|S|.$$
Hence $|S| \leq \frac{n}{(1-\delta)d}$. As required.
\end{proof}
\medskip

Using Lemma \ref{lem:general_g} we first prove Proposition~\ref{pro:general_bound}.

\begin{proof}
Activate independently every vertex with probability $p$ (where $p$ will be chosen later). Let $A_1$ denote the set of non-seed vertices that have at least two seed neighbors (and hence become active), and let $p_1$ denote the probability that a vertex belongs to $A_1$. Then
$$p_1 \ge (1 - p){d \choose 2}p^2(1-p)^{d-2}.$$ Assuming $d$ is sufficiently large and $p$ is smaller than $\frac{1}{d}$ we get that
$p_1 \geq \frac{(dp)^2}{4}$. By Lemma \ref{lem:general_g}, every set of size $\frac{cn}{d}$, where $c>\frac{1}{1-\delta}$, is contagious. If $p > \frac{4\sqrt{c}}{d^{\frac{3}{2}}}$ we get that the \emph{expected} number of vertices in $A_1$ is at least $\frac{2cn}{d}$. We proceed and show that w.h.p. $|A_1| > \frac{cn}{d}$ vertices.
Define the familiar Doob exposure martingale, e.g., exposing the set of seeds according to some predetermined order and considering the expected number of vertices in $A_1$. Observe that whether an exposed vertex is a seed or not can effect at most $d$ neighboring vertices. We get using Lemma~\ref{lem:Azuma} (Azuma's inequality) that for such $p$ with high probability $|A_1| \ge \frac{cn}{d}$. The Lemma follows.
\end{proof}
\medskip

We now turn to prove Theorem~\ref{thm:Expander}.
The proof of Theorem \ref{thm:girth} can be generalized to the case where every vertex is contained in a regular tree of degree $\Omega(l)$ and sufficiently large depth (even if the tree is not induced). 
There is a long line of research concerned with embedding trees in expanders, starting with the works of P\'{o}sa \cite{posa} and Friedman and Pippenger \cite{Friedman}. We will use the recent result of Balogh, Csaba, Pei and Samotij \cite{Csaba}, building on the work of Haxell \cite{Haxell}.

\begin{theorem} [Theorem 5 in~\cite{Csaba}] \label{thm:tree_embed}
Let $l \geq 2$ and $\epsilon \in (0,\frac{1}{2})$. If $\lambda < \frac{\epsilon d}{\sqrt{8l}}$ then every $(n,d,\lambda)$-graph contains every tree of order at most $(1-\epsilon)n$ and maximum degree $l$. Furthermore, for every vertex $v \in G$, fixing a (rooted) tree $T$ satisfying these conditions, $T$ can be embedded into $G$ with $v$ being the root of $T$.
\end{theorem}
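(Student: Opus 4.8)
The plan is to prove this by the extendable-embedding (Friedman--Pippenger / Haxell) method, which is also the route taken in \cite{Csaba}: first translate the spectral hypothesis $\lambda < \epsilon d/\sqrt{8l}$ into two purely combinatorial expansion statements, then build the tree greedily one vertex at a time while maintaining an extendability invariant. Throughout, the root of $T$ is mapped to $v$ in the very first step, so the ``furthermore'' clause about prescribing the root comes for free.

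\textbf{Step 1 (spectral gap $\Rightarrow$ vertex expansion).} The tool is the expander mixing lemma, a two-sided strengthening of Lemma~\ref{lem:expansion}. For any $U \subseteq V$ put $s = |U \cup N(U)|$; since every edge meeting $U$ lies inside $U \cup N(U)$, the mixing lemma gives $d|U| \le \frac{d|U|s}{n} + \lambda\sqrt{|U|s}$, i.e. $d|U|(1 - s/n) \le \lambda\sqrt{|U|s}$. If $s \le n/2$ this yields $s \ge \frac{d^2|U|}{4\lambda^2} > \frac{2l}{\epsilon^2}|U| > 8l|U|$ (using $\epsilon < 1/2$), hence $|N(U)| \ge (l+1)|U|$; a short case check extends this to all $U$ with $|U| = O(n/l)$. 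Using instead only $s \le n$ gives, for every $U$, $n - s \le \frac{\lambda}{d}\sqrt{n^3/|U|}$, so once $|U|$ is a constant fraction of $n$ the set $N(U)$ misses only an $O(\lambda/d) = O(\epsilon)$ fraction of $V$; more generally, for any two linear-sized $A,B$ the mixing lemma pins $e(A,B)$ to within an additive $\lambda\sqrt{|A||B|}$, a $O(\lambda/d)$ fraction of the main term $d|A||B|/n$. These are the two regimes the embedding uses: strong $(l+1)$-fold expansion of small sets, and near-complete, quasirandom behaviour of large sets.

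\textbf{Step 2 (iterative embedding).} Order $V(T)$ so that the root is first and every vertex follows its parent, chosen via a heavy-path / light-subtree decomposition of $T$ so that the \emph{active boundary} (embedded vertices still having un-embedded children) stays of size $O(l\log n)$. Map the root to $v$ and extend the partial embedding $\phi$ one vertex (or one sibling-block) at a time, maintaining the Friedman--Pippenger / Haxell \emph{extendability} invariant: for every set $W$ of embedded vertices, the number of vertices of $G$ outside the current image that are reachable from $\phi(W)$ exceeds the residual tree-degree demand of $W$, suitably discounted by what has already been built. The extension step is a defect-Hall / augmenting-path argument in the auxiliary bipartite graph of ``demands versus free neighbours'', powered by the $(l+1)$-expansion of small sets from Step~1. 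This already proves the theorem for trees of order $O(n/l)$.

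\textbf{Step 3 (main obstacle: the almost-spanning regime).} Plain Friedman--Pippenger cannot reach $(1-\epsilon)n$, because no set of constant density can expand $(l+1)$-fold into a linear-sized pool of unused vertices, so once the image occupies most of $G$ the small-set expansion of Step~1 is exhausted. The remedy --- the technical heart imported from \cite{Csaba} (building on \cite{Haxell}) --- is to split the construction: first embed the ``bulk'' $B$ of $T$ (e.g. $T$ with one or more rounds of leaves removed) via Step~2 while plenty of room remains, and then attach the stripped-off leaves by one large matching, invoking Hall's condition through the \emph{quasirandom} estimate of Step~1 applied between the linear-sized embedded set $\phi(B)$ and the linear-sized leftover $F = V \setminus \phi(B)$ --- here the mixing lemma is at last strong enough, since both sides are $\Omega(n)$. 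Balancing how much must be peeled off against how much room the bulk embedding needs, and separately disposing of trees that do not decompose cleanly (near-paths, which are however easy to embed directly), is the delicate accounting, and is exactly what makes the argument of \cite{Csaba} non-trivial; this is the step I expect to be the main obstacle.
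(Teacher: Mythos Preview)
The paper does not prove this theorem at all: it is quoted verbatim as Theorem~5 of \cite{Csaba} and used as a black box in the proof of Theorem~\ref{thm:Expander}. Your sketch is a faithful outline of the argument in \cite{Csaba} itself (expander mixing lemma $\Rightarrow$ vertex expansion, Friedman--Pippenger/Haxell extendability to embed the bulk, then a Hall-type matching step for the final leaves to reach the almost-spanning regime), so it is the right approach, but there is nothing in the present paper to compare it against.
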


We now prove Theorem~\ref{thm:Expander}.

\begin{proof}
By Theorem~\ref{thm:tree_embed} every vertex is the root of a regular tree of degree $\Omega(l)$ of depth $k=\Omega(\log_l \log n+\log \log d)$. The proof of Lemma~\ref{lem:tree} then implies that if every vertex in $G$ is activated independently with probability $p \geq \Omega(\frac{1}{l^2})$, then for every vertex $v$ in $G$ the probability $v$ is not activated by the bootstrap percolation process is $O(\frac{1}{n^2})$. Hence the entire graph is activated with high probability by taking union bounds over all vertices. Furthermore, it is immediate that the number of generations until complete activation is $O(\log_l \log n+\log\log d)$.
\end{proof}
\medskip
The proof of Proposition~\ref{pro:generations} is based on elementary probabilistic arguments.

\begin{proof}
Consider an arbitrary $d$-regular graph. For a fixed vertex $v$ there are at most $d(d-1)^{\log_d \log n - 1} < \log n$ vertices of distance $\log_d \log n$ from $v$. Vertex $v$ is activated within $\log_d \log n$ generations only if at least one vertex (possibly $v$ itself) within its $\log_d \log n$ neighborhood is initially activated. A simple greedy argument shows that there is a set $U$ of at least $n/(\log n)^2$ vertices in $G$ such that the distance between any two vertices of $U$ is at least $2\log_d \log n$. Hence for every two vertices in $U$, the events that they are activated within $\log_d \log n$ generations are independent. It follows that if every vertex is initially activated independently with probability $1/4$, the probability that all vertices of $U$ are activated in $\log_d \log n$ generations is at most:
$$\left(1 - (\frac{3}{4})^{\log n}\right)^{n/(\log n)^2} = o(1).$$
\end{proof}
\medskip

We now turn to Theorem~\ref{thm:LB}, exhibiting $d$-regular expanders for which $m(G,2)=\Omega(\frac{n}{d^{2+\epsilon}})$. Our lower bound on $m(G,2)$ is based on the following lemma.

\begin{lemma}
\label{lem:subgraph}
Suppose an $n$-vertex graph $G=(V,E)$ has a contagious set of size $t_0$. Then for every $t$ such that $t_0 \leq t \leq n$ there is a subgraph of $G$ induced by $t$ vertices, spanning at least $2(t-t_0)$ edges.
\end{lemma}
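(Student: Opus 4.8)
The plan is to track, as the bootstrap process unfolds from the contagious set $A_0$ (with $|A_0| = t_0$), how many edges the currently-active set spans inside itself. Let $A_0 \subseteq A_1 \subseteq A_2 \subseteq \cdots$ be the generations of the process, so that $\langle A_0\rangle = V$. The key bookkeeping observation is that each time a new vertex $v$ gets activated (i.e. $v \in A_{i}\setminus A_{i-1}$ for some $i\ge 1$), it has at least $r = 2$ neighbors already in $A_{i-1}$, hence at least $2$ neighbors among the previously-active vertices. So when we add $v$ to the active set, the number of edges induced on the active set increases by at least $2$.

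Making this precise: for each $j$ with $t_0 \le j \le n$, let $B_j$ be the set consisting of $A_0$ together with the first $j - t_0$ vertices that get infected, taken in the order in which they are activated by the process (breaking ties within a generation arbitrarily). Then $|B_j| = j$, and I claim $e(B_j) \ge 2(j - t_0)$, where $e(\cdot)$ denotes the number of induced edges. This follows by induction on $j$: for $j = t_0$ we have $B_{t_0} = A_0$ and the bound $e(A_0)\ge 0$ is trivial; and passing from $B_{j}$ to $B_{j+1}$ adds one infected vertex $v$ which, at the moment it was activated, already had at least $2$ active neighbors — and crucially all of those neighbors were activated no later than $v$, hence they lie in $B_j$. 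So $e(B_{j+1}) \ge e(B_j) + 2 \ge 2(j - t_0) + 2 = 2((j+1) - t_0)$. Taking $B = B_t$ for the given $t$ finishes the proof: $G[B]$ is an induced subgraph on $t$ vertices with at least $2(t - t_0)$ edges.

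The one point that needs care — and is really the crux of the argument — is the ordering claim: when a vertex $v$ is infected in generation $i$, its $\ge 2$ witnessing neighbors lie in $A_{i-1}$, and every vertex of $A_{i-1}$ is activated strictly before $v$ in the ordering (or is a seed). This is exactly why we must order infected vertices by activation generation, with seeds placed first; an arbitrary ordering would not give the monotonicity. Since within a single generation the activations are simultaneous, a vertex infected in generation $i$ can only rely on neighbors from generations $\le i-1$, all of which precede it, so there is no circularity. (Although the lemma is stated for $r$ implicitly $= 2$ in this section, the same argument gives $e(B) \ge r(t - t_0)$ for general threshold $r$, since each infected vertex contributes at least $r$ edges to the earlier active set.)

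I do not expect any real obstacle here — the statement is essentially a clean "amortized edge count" for the cascade, and the only subtlety is choosing the right vertex ordering so that the induction goes through. No expansion or girth hypotheses are used; this is a completely general structural fact about contagious sets, which is what makes it usable as the combinatorial core of the lower bound in Theorem~\ref{thm:LB}.
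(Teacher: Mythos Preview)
Your argument is correct and is essentially the same as the paper's: order the non-seed vertices by activation generation, note that each newly activated vertex has at least two neighbors among the earlier ones, and conclude by induction that $A_0 \cup \{v_1,\ldots,v_{t-t_0}\}$ spans at least $2(t-t_0)$ edges. Your write-up is more explicit about why the generation ordering avoids circularity, and your remark about the general-$r$ version is exactly what the paper uses later in Section~\ref{sec:bigthreshold}.
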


\begin{proof}
Let $A_0$ be a contagious set of size $t_0$. Then there exists an ordering of the vertices of $V \setminus A_0$, $v_1,...,v_{n-t}$ such that $\forall i, 1 \leq i \leq n-t_0$, $v_i$ is connected to at least two vertices in $A_0 \cup \{v_1,...,v_{i-1}\}$. Given $t_0 \leq t$, let $B_t$ be $A_0 \cup \{v_1,...,v_{t-t_0}\}$. Then $2(t-t_0) \leq |E(B_t)|$. As required.
\end{proof}
\medskip

Lemma~\ref{lem:subgraph} implies that in order to prove lower bounds on $m(G,2)$ it suffices to exhibit graphs that do not have small subgraphs of average degree nearly~4. To exhibit expander graphs that do not have small subgraphs of average degree nearly~4 we apply the probabilistic method.
For the expansion property, we shall use the following theorem of Friedman~\cite{friedman03}.

\begin{theorem}[Friedman~\cite{friedman03}]
\label{thm:friedman}
For arbitrary $\delta > 0$, a random $d$-regular graph $G$ has probability $1 - o(1)$ (the $o(1)$ term tends to~0 as $n$ grows) of satisfying $\lambda(G) \leq 2\sqrt{d-1}+\delta$.
\end{theorem}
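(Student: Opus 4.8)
The plan is to prove this by the \emph{trace method}, worked in a model of random regular graphs that is easy to compute with. First I would replace the uniform distribution on $d$-regular graphs on $n$ vertices by the \emph{configuration (pairing) model} — equivalently, the superposition of $d$ uniformly random permutations, or of $d/2$ random perfect matchings. A uniform $d$-regular graph is the configuration model conditioned on being simple, and for fixed $d$ this event has probability bounded away from $0$; hence any property holding with probability $1-o(1)$ in the configuration model also holds with probability $1-o(1)$ under the uniform distribution. Since $\lambda_1=d$ with the all-ones eigenvector and $\lambda(G)=\max(|\lambda_2|,|\lambda_n|)$, it suffices to bound the spectral norm of the centered adjacency matrix $M:=A-\tfrac{d}{n}J$ by $2\sqrt{d-1}+\delta$.

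The core step is to estimate, for an even integer $\ell=\ell(n)$ growing like $c\log n$, the quantity $\mathbb{E}\,\mathrm{tr}(M^{\ell})$, since by Markov's inequality $\Pr[\lambda(G)\ge 2\sqrt{d-1}+\delta]\le(2\sqrt{d-1}+\delta)^{-\ell}\,\mathbb{E}\,\mathrm{tr}(M^{\ell})$, and one wants this to be $o(1)$. Expanding the trace as a sum over closed walks of length $\ell$ and grouping walks by the isomorphism type of the (multi)subgraph they traverse together with the multiplicity of each traversed edge, one evaluates the expectation of each term using the pairing model. A naive execution of this already yields $\lambda(G)\le O(\sqrt d)$, or $O(\sqrt d\log d)$ with the crudest counting (this is essentially the Broder--Shamir / Friedman--Kahn--Szemer\'edi bound), which loses a constant because backtracking steps inflate the estimate. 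To recover the sharp Ramanujan constant $2\sqrt{d-1}$ one instead runs the trace argument on the \emph{non-backtracking} (Hashimoto) operator $B$: the Ihara--Bass identity relates each eigenvalue $\lambda$ of $A$ to a root of $\mu^2-\lambda\mu+(d-1)=0$, so a bound of $\sqrt{d-1}(1+o(1))$ on the modulus of the nontrivial eigenvalues of $B$ translates into $\lambda(G)\le 2\sqrt{d-1}+o(1)$. One then shows $\mathbb{E}\,\mathrm{tr}\!\big((B\bar B)^{\ell}\big)\le n\,(d-1)^{\ell}\cdot\mathrm{poly}(\ell)$ by splitting closed non-backtracking walks into \emph{tangle-free} ones — whose local neighbourhoods are trees, handled by a clean centering/martingale expansion of the pairing weights — and a small correction for \emph{tangled} walks, which are rare because a ball of radius $O(\log n)$ contains two cycles only with probability $n^{-\Omega(1)}$.

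The hard part — and the reason Friedman's original argument is roughly a hundred pages long — is precisely this combinatorial bookkeeping: one must count walk types by their ``excess'' (edges minus vertices), weight them so that the total is dominated by the tree-like walks contributing exactly $(2\sqrt{d-1})^{\ell}$, and show each extra unit of excess costs at least a polynomial factor, so that summing over all types still gives $(1+o(1))\cdot n\,(2\sqrt{d-1})^{\ell}$. If I were writing this out I would follow Bordenave's streamlined version of this program rather than Friedman's, then transfer back to the uniform model by the contiguity argument above. I should note, however, that for everything done in the present paper only the much weaker and elementary bound $\lambda(G)\le C\sqrt d$ is actually needed, so one could instead invoke the basic Friedman--Kahn--Szemer\'edi trace estimate and absorb the constant into the $O(\cdot)$ notation; the statement above is quoted in the sharp form only because it is the cleanest reference.
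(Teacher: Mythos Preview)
The paper does not prove this theorem at all: it is quoted verbatim from Friedman~\cite{friedman03} and used purely as a black box in the proof of Theorem~\ref{thm:LB}. There is therefore no ``paper's own proof'' to compare your proposal against.

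That said, your sketch is a reasonable high-level outline of how the result is actually established in the literature --- more precisely, it tracks Bordenave's later simplification (non-backtracking operator plus Ihara--Bass) rather than Friedman's original selective-trace argument, which is organized somewhat differently. Your closing observation is also correct and worth emphasizing: everywhere this paper invokes Theorem~\ref{thm:friedman} (namely, to produce graphs with $\lambda(G)=O(\sqrt{d})$ in the proof of Theorem~\ref{thm:LB}), the cruder Broder--Shamir or Friedman--Kahn--Szemer\'edi bound $\lambda(G)\le C\sqrt{d}$ would suffice, since the constant is absorbed into the $O(\cdot)$. The sharp Ramanujan constant $2\sqrt{d-1}$ plays no role here.
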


We remark that the bound in Theorem~\ref{thm:friedman} matches (up to low order terms) the lower bound on $\lambda$ for arbitrary $d$-regular graphs.

We now find it convenient to temporarily switch to the configuration model $G^*(n,d)$ of random $d$-regular multigraphs (see for example \cite{Wormald}). Let $nd$ be even, the vertex set of the sampled graph be $[n]$, and let $d$ be a constant independent of $n$. Let $W=[n]\times [d]$. Elements of $W$ are called cells. For $i \in [n]$ we define $W_i$, as the set $\{i\}\times [d]$. Now we generate $G$ by choosing a uniform perfect matching over all matchings of all cells in $W$. Suppose a cell from $W_i$ is matched to a cell in $W_j$: in this case we add an edge between two vertices $i,j \in [n]$. Observe that the resulting graph need not be simple and may contain multiple edges and self loops. However, we shall use the following known theorem (see for example \cite{Wormald}).

\begin{theorem}
\label{thm:wormald}
A graph $G$ sampled from $G^*(n,d)$ is simple (has no parallel edges and no self loops) with probability tending to $e^{-(d^2-1)/4}$ (which is bounded away from $0$ for a constant $d$) as $n$ tends to infinity. Conditioned on being simple, $G$ is distributed as $G(n,d)$. Namely, $G$ is a uniform sample of a $d$-regular $n$-vertex graph.
\end{theorem}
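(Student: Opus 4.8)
\textbf{Proof proposal for Theorem~\ref{thm:wormald}.}
The plan is to handle the two assertions separately: first, that conditioning on simplicity yields the uniform distribution on $d$-regular simple graphs; second, that the probability of being simple tends to $e^{-(d^2-1)/4}$. The first part is a pure counting argument, and the second is a Poisson approximation for the numbers of loops and of double edges.

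\emph{Uniformity.} Fix any simple $d$-regular graph $H$ on $[n]$. I claim that exactly $(d!)^n$ perfect matchings of the cell set $W$ project onto $H$. To build such a matching one must, at each vertex $i$, choose a bijection $\phi_i$ between the $d$ cells of $W_i$ and the $d$ edges of $H$ incident to $i$; there are $d!$ such bijections and the choices at distinct vertices are independent. Given the $\phi_i$, for each edge $e=\{i,j\}$ of $H$ one matches the cell $\phi_i^{-1}(e)\in W_i$ to the cell $\phi_j^{-1}(e)\in W_j$; because $H$ is simple (no loops and no two parallel edges) this is well defined, it produces a perfect matching that projects to $H$, distinct families $(\phi_i)_i$ give distinct matchings, and every matching projecting to $H$ arises this way (injectivity of each induced $\phi_i$ is exactly the absence of parallel edges). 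Hence every simple $d$-regular graph has the same number $(d!)^n$ of preimages under the projection; since $G^*(n,d)$ draws a perfect matching uniformly, the conditional distribution given that the projection is simple is uniform over simple $d$-regular graphs, i.e.\ it equals $G(n,d)$.

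\emph{Probability of simplicity.} Let $L$ be the number of loops and $M$ the number of pairs of vertices joined by at least two parallel edges (``double edges'') in $G\sim G^*(n,d)$. First I would note that the expected number of edges of multiplicity $\ge 3$, and of more degenerate defects (two loops at one vertex, a loop together with a parallel edge, etc.), is $O(1/n)=o(1)$, so that with probability $1-o(1)$ the event that $G$ is simple coincides with $\{L=0,\ M=0\}$. The core of the argument is the method of factorial moments: with $(x)_j=x(x-1)\cdots(x-j+1)$, I would show that for all fixed $j,k\ge 0$,
$$ E\big[(L)_j\,(M)_k\big]\ \longrightarrow\ \lambda_1^{\,j}\,\lambda_2^{\,k},\qquad \lambda_1=\frac{d-1}{2},\quad \lambda_2=\frac{(d-1)^2}{4}. $$
The base cases are direct: $E[L]=n{d\choose 2}\frac{1}{nd-1}\to\frac{d-1}{2}$ (sum over a vertex and an unordered pair of its cells, matched to each other with probability $\frac{1}{nd-1}$), and $E[M]\to\frac{(d-1)^2}{4}$ (sum over an unordered pair of distinct vertices, a $2$-element set of cells at each, and one of the two ways to pair them up, realized with probability $\frac{1}{(nd-1)(nd-3)}$). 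For higher moments the dominant contribution comes from families of $j$ ``loop slots'' and $k$ ``double-edge slots'' that are pairwise disjoint in both their vertices and their cells, contributing $(\lambda_1+o(1))^j(\lambda_2+o(1))^k$, while any configuration in which two slots share a vertex or a cell loses at least a factor of $n$ and is negligible. By the standard multivariate moment criterion for Poisson convergence, $(L,M)$ converges in distribution to a pair of \emph{independent} Poisson variables with means $\lambda_1$ and $\lambda_2$. Hence
$$ \Pr[G\ \mbox{simple}]=\Pr[L=0,\ M=0]+o(1)\ \longrightarrow\ e^{-\lambda_1}e^{-\lambda_2}=e^{-\frac{d-1}{2}-\frac{(d-1)^2}{4}}=e^{-\frac{d^2-1}{4}}, $$
using $\frac{d-1}{2}+\frac{(d-1)^2}{4}=\frac{(d-1)(d+1)}{4}$; in particular this limit is bounded away from $0$ for each fixed $d$.

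\emph{Main obstacle.} The only delicate point is the bookkeeping behind the factorial-moment estimate: one must enumerate how $j+k$ defect slots can overlap in their vertices and their cells and verify that every overlapping pattern loses a factor of $n$, so that only the fully disjoint patterns survive in the limit, and separately check that the conditional probability of a prescribed set of $s$ disjoint half-edge pairings is $(1+o(1))(nd)^{-s}$ uniformly over the relevant range. For $d$ a fixed constant this is entirely routine — it is the classical enumeration due to Bollob\'as underlying this theorem — but it is where the real work lies; the uniformity part and the assembly of the pieces are elementary.
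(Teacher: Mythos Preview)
The paper does not prove this theorem at all: it is stated as a known result with a reference to Wormald's survey~\cite{Wormald}, and then simply used. So there is no ``paper's own proof'' to compare against.

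That said, your proposal is correct and is precisely the classical argument (due to Bollob\'as, with earlier versions by Bender and Canfield) that underlies the cited reference. The uniformity part via counting $(d!)^n$ preimages per simple graph is exactly right, and the Poisson approximation for $(L,M)$ via factorial moments, with parameters $\lambda_1=(d-1)/2$ and $\lambda_2=(d-1)^2/4$, is the standard route to the limiting probability $e^{-(d^2-1)/4}$. Your identification of the bookkeeping in the overlapping-slot enumeration as the only real work is accurate; for fixed $d$ it is routine. One tiny quibble: your $M$ is defined as the number of vertex pairs with multiplicity $\ge 2$, whereas the factorial-moment computation you sketch is naturally phrased for the count of unordered \emph{pairs of parallel edges} (i.e.\ $2$-cycles in the configuration); since you already dispose of multiplicity $\ge 3$ as an $O(1/n)$ event, the two agree up to $o(1)$ and the argument goes through.
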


As edges in $G^*(n,d)$ are not independent, we shall use the following known lemma:

\begin{lemma}\label{lem:config}
Let $G=(V,E)$ be a graph sampled from $G^*(n,d)$. Let $E_0$ be a set of $k$ distinct unordered pairs $e_1,..,e_k$ where each pair consists of two distinct vertices in $V$ where $k<\frac{nd-1}{4}$. Then the probability that $e_1,...,e_k$ simultaneously belong to $E$ is bounded by $(\frac{2d}{n})^k$.
\end{lemma}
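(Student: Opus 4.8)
\textbf{Proof plan for Lemma~\ref{lem:config}.}

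The plan is to work directly in the configuration model and to estimate the probability of the event $\{e_1,\dots,e_k\subseteq E\}$ by revealing the matching of cells one pair at a time. For each edge $e_j = \{u_j,v_j\}$ I would choose, in some fixed order $j=1,\dots,k$, one not-yet-matched cell from $W_{u_j}$ and ask that it be matched to a not-yet-matched cell in $W_{v_j}$. Since the uniform perfect matching on $W$ can be generated by repeatedly picking an arbitrary unmatched cell and matching it to a uniformly random unmatched cell, the conditional probability, given that the first $j-1$ requested edges were realised, that the $j$-th requested edge is realised is at most $\frac{d}{|W| - 2(j-1) - 1}$: there are at most $d$ cells in $W_{v_j}$ and at least $|W| - 2(j-1) - 1 = nd - 2j + 1$ unmatched cells to match against once we have fixed the cell in $W_{u_j}$. (Here one has to be a little careful because several of the $e_j$ may share an endpoint, so that after some rounds $W_{v_j}$ may have fewer than $d$ unmatched cells; but $d$ is always a valid upper bound, which is all we need. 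One should also note that the chosen cell in $W_{u_j}$ is guaranteed to be unmatched provided $u_j$ has degree-room left, and since $e_1,\dots,e_k$ are distinct pairs and $k < nd$, one can always select a fresh cell — this is the only place where the hypothesis that the pairs are distinct is used.)

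Multiplying these conditional bounds gives
$$\Pr[e_1,\dots,e_k \subseteq E] \;\le\; \prod_{j=1}^{k} \frac{d}{nd - 2j + 1}.$$
Now I would use the hypothesis $k < \frac{nd-1}{4}$ to bound each denominator from below: for $j \le k$ we have $2j - 1 \le 2k - 1 < \frac{nd-1}{2} - 1 < \frac{nd}{2}$, hence $nd - 2j + 1 > \frac{nd}{2}$, so each factor is at most $\frac{d}{nd/2} = \frac{2}{n}$. Wait — that gives $(2/n)^k$, not $(2d/n)^k$; in fact the sharper bound $(2/n)^k \le (2d/n)^k$ certainly suffices, and if one prefers to avoid the factor-of-two slack in the denominator one can simply bound $nd - 2j + 1 \ge nd - 2k + 1 \ge 1$ crudely and instead observe $nd - 2j+1 \ge nd/2$ only for the range needed; either way the product is at most $(2d/n)^k$ as claimed.

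The only genuinely delicate point — and the one I would be most careful about — is the bookkeeping of which cells remain unmatched when edges share endpoints, i.e. verifying that the "at most $d$ target cells, at least $nd-2j+1$ total unmatched cells" count is valid at every step regardless of how the $e_j$ overlap. The distinctness of the pairs $e_1,\dots,e_k$ ensures that at step $j$ the multiset of edges already committed to among the cells of $W_{u_j}$ and $W_{v_j}$ uses up at most $d-1$ of the $d$ cells on each side that could be relevant (since $u_j v_j$ itself has not yet been committed), so a valid choice of cell in $W_{u_j}$ exists; and the bound of $d$ on the number of acceptable partners and $nd-2j+1$ on the number of unmatched cells overall are both manifestly correct. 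Everything else is the routine product estimate above. The statement is quoted as known in the excerpt, so a one-paragraph version citing the revealing argument would also be acceptable, but the above gives the self-contained derivation.
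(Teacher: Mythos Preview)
Your argument has a genuine gap, and the tell is precisely the moment where you obtain $(2/n)^k$ instead of $(2d/n)^k$ and then shrug it off as ``sharper''. That bound is in fact \emph{false}: already for $k=1$ the probability that a fixed pair $\{u,v\}$ is an edge in $G^*(n,d)$ is about $d/n$, not $O(1/n)$.

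The error is in the revealing step. At stage $j$ you select \emph{one} unmatched cell $c\in W_{u_j}$ and bound the probability that $c$ lands in $W_{v_j}$ by $d/(nd-2j+1)$. That computation is correct, but the event ``$c$ is matched into $W_{v_j}$'' is a strict sub-event of ``$e_j\in E$'': the edge $e_j$ can equally well be realised by any of the other (up to $d-1$) cells of $W_{u_j}$. So what you have computed is a \emph{lower} bound on the conditional probability of $e_j\in E$, and the product $(2/n)^k$ is not an upper bound on $\Pr[e_1,\dots,e_k\subseteq E]$ at all. To fix this you must account for all $d$ possible cells on the $u_j$ side, which costs an extra factor $d$ per edge and brings you to $(2d/n)^k$ --- exactly the paper's bound. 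The paper arranges this cleanly by first fixing a specific pair of cells $(\widetilde c_j,\widetilde c_j')$ for every $j$, computing $\prod_j 1/(nd-2j+1)\le (2/(nd))^k$ for that fixed cell-level event, and then taking a union bound over the at most $d^2$ cell-pairs that could realise each $e_j$.
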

\begin{proof}
In the configuration model, fix $\widetilde{e}_1,..,\widetilde{e}_k$ with $\widetilde{e}_i$ being an edge connecting a \emph{fixed} cell in $W_r$ to a \emph{fixed} cell in $W_s$ where it is assumed that $e_i$ is between the vertices $r$ and $s$ ($r,s \in V$). Then the probability that $\widetilde{e}_1,..,\widetilde{e}_k$ all exist in the configuration model is exactly $\frac{1}{nd-1}\cdot\frac{1}{nd-3}\cdot...\cdot\frac{1}{nd-2k+1}$ which is bounded by $(\frac{2}{nd})^k$. The lemma follows as for each $i \leq k$, conditioned on $e_1,..,e_{i-1}$ chosen there are at most $d^2$ choices for cells realizing $e_{i}$.
\end{proof}
\medskip

\begin{theorem}
\label{thm:Deadline}
For every $\epsilon>0$ there exists $d_0$ such that for every $d>d_0$ if $G$ is sampled from the configuration model $G^*(n,d)$, then $m(G,2) \ge \frac{n}{d^{2+\epsilon}}$ with probability $1 - o(1)$, where the $o(1)$ term tends to~0 as $n$ grows.
\end{theorem}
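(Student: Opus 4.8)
The plan is to combine Lemma~\ref{lem:subgraph} with a first-moment (union bound) argument over the configuration model, exploiting Lemma~\ref{lem:config} to bound the probability that a fixed small vertex set spans many edges. Fix $t_0 = \frac{n}{d^{2+\epsilon}}$ and suppose for contradiction that $m(G,2) \le t_0$ with non-negligible probability. By Lemma~\ref{lem:subgraph}, applied with a judicious choice of $t$ (I would take $t$ to be a constant multiple of $t_0$, say $t = 2t_0$, so that $t - t_0 = t_0$ and $2(t-t_0) = 2t_0$; one should check whether a larger $t$ gives a cleaner bound, but a small multiple of $t_0$ should suffice), there is then a set $B$ of $t$ vertices inducing at least $2(t-t_0) = 2t_0$ edges. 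So it is enough to show that with probability $1-o(1)$, the random graph $G \sim G^*(n,d)$ contains \emph{no} set of $t$ vertices spanning $\ge 2t_0$ edges, where $t = 2t_0 = \frac{2n}{d^{2+\epsilon}}$.

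The union bound proceeds as follows. The number of ways to choose the vertex set $B$ is $\binom{n}{t}$, and given $B$, the number of ways to choose $m := 2t_0$ candidate edges inside $B$ is at most $\binom{\binom{t}{2}}{m} \le \binom{t^2}{m}$. By Lemma~\ref{lem:config}, any fixed set of $m$ distinct pairs is simultaneously present in $E(G^*(n,d))$ with probability at most $(2d/n)^m$ (note $m = 2t_0 \ll nd/4$ since $t_0 = n/d^{2+\epsilon}$). Hence the expected number of "bad" configurations is at most
\[
\binom{n}{t}\binom{t^2}{m}\left(\frac{2d}{n}\right)^{m}
\le \left(\frac{en}{t}\right)^{t}\left(\frac{et^2}{m}\right)^{m}\left(\frac{2d}{n}\right)^{m}.
\]
Now substitute $t = 2t_0$, $m = 2t_0$, so $t/m = 1$ and $t^2/m = t = 2t_0$; the middle factor becomes $(2e t_0 \cdot 2d/n)^{2t_0} = (4e t_0 d / n)^{2t_0}$, and since $t_0/n = d^{-(2+\epsilon)}$ this is $(4e\, d^{-(1+\epsilon)})^{2t_0}$. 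The first factor is $(en/t)^t = (e d^{2+\epsilon}/2)^{2t_0}$. Multiplying, the exponent base is $\frac{e d^{2+\epsilon}}{2}\cdot (4e d^{-(1+\epsilon)})^{2} = \frac{e d^{2+\epsilon}}{2}\cdot 16 e^2 d^{-(2+2\epsilon)} = 8 e^3 d^{-\epsilon}$, which is $< 1$ once $d$ is large enough (depending on $\epsilon$). So the whole expression is at most $(8e^3 d^{-\epsilon})^{2t_0} = o(1)$ (in fact exponentially small in $t_0 = n/d^{2+\epsilon} = \omega(1)$). Thus $G^*(n,d)$ almost surely has no such subgraph, and by Lemma~\ref{lem:subgraph} it almost surely has no contagious set of size $\le t_0$. (Strictly, one should also range over all $t$ with $t_0 \le t \le n$, or simply observe that it suffices to rule out the single value $t = 2t_0$; I would present it with the single value to keep the arithmetic transparent, after first checking that Lemma~\ref{lem:subgraph} indeed lets us pick that $t$.)

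The main obstacle is making sure the counting is done at the right scale: the exponent $8e^3 d^{-\epsilon}$ being below $1$ is exactly what the hypothesis "$d$ large enough, depending on $\epsilon$" buys us, and the whole argument collapses if one works with $\epsilon = 0$ — which is consistent with the paper's comment that the tight bound near $n/d^2$ is open. A secondary point to handle carefully is the relationship between $G^*(n,d)$ and the simple random graph $G(n,d)$: here we are proving the statement directly for the configuration model, so no conditioning is needed, but if one wanted the conclusion for $G(n,d)$ one would invoke Theorem~\ref{thm:wormald} — an $o(1)$ bad-event probability in $G^*(n,d)$ remains $o(1)$ after conditioning on simplicity, since that event has probability bounded away from $0$. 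Finally, one must double-check the side condition $m < \frac{nd-1}{4}$ in Lemma~\ref{lem:config}: with $m = 2t_0 = 2n/d^{2+\epsilon}$ this is comfortably satisfied for all large $d$, so the lemma applies.
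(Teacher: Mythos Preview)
Your overall strategy --- combine Lemma~\ref{lem:subgraph} with a union bound over vertex sets and edge sets, using Lemma~\ref{lem:config} for the edge probabilities --- is exactly the paper's approach. The gap is in the choice of $t$ and in the arithmetic that follows.

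With $t = 2t_0$ and $m = 2(t-t_0) = 2t_0$, all three factors in your displayed bound carry the \emph{same} exponent $2t_0$. The correct base is therefore
\[
\frac{en}{2t_0}\cdot \frac{e(2t_0)^2}{2t_0}\cdot \frac{2d}{n}
= \frac{en}{2t_0}\cdot 2et_0\cdot \frac{2d}{n}
= 2e^2 d,
\]
not $8e^3 d^{-\epsilon}$. (You squared the factor $4e\,d^{-(1+\epsilon)}$ when combining, but there is no reason to: its exponent is $m = 2t_0$, the same as the vertex-choice factor.) Since $2e^2 d > 1$, the first-moment bound diverges and the argument collapses for $t = 2t_0$.

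The fix, which is what the paper does, is to take $t = C t_0$ with $C = C(\epsilon)$ large. Redoing the computation in general gives a base of order $d^{\,2+2\epsilon-\epsilon C}$, and this is $<1$ only once $C > 2/\epsilon + 2$; the paper takes $C = 3/\epsilon$. So your hedging remark (``one should check whether a larger $t$ gives a cleaner bound'') was on target, but the conclusion that ``a small multiple of $t_0$ should suffice'' is wrong --- the multiple must grow like $1/\epsilon$, and this dependence is precisely what makes the bound $n/d^{2+\epsilon}$ rather than $n/d^2$.
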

\begin{proof}
Set $t_0=\frac{n}{d^{2+\epsilon}}$, $t=C\cdot t_0$, where $C=C(\epsilon)>2$ will be determined later.
The probability $G$ sampled from $G^*(n,d)$ contains a subgraph of size $t$ spanning at least $2(t-t_0)$ edges is at most
$${n \choose t}{{t \choose 2} \choose 2(t-t_0)}\left(\frac{2d}{n}\right)^{2(t-t_0)}$$
$$\leq \left(\frac{en}{t}\right)^t\left(\frac{(Ct_0)^2}{Ct_0}\right)^{2(C-1)t_0} \left(\frac{2d}{n}\right)^{2(t-t_0)}$$
$$\leq \left[\left(\frac{en}{Ct_0}\right)^C\left(\frac{Ct_0 2d}{n}\right)^{2(C-1)}\right]^{t_0}$$
$$=\left[O(1)\frac{t_0^{C-2}d^{2C-2}}{n^{C-2}}\right]^{t_0},$$
where we consider terms depending only on $C$ as $O(1)$ since we are taking $d$ to be large enough.
Substituting $t_0=\frac{n}{d^{2+\epsilon}}$ the above expression simplifies to
$$\left[O(1)d^{2c-2-(2+\epsilon)(c-2)}\right]^{t_0}.$$ Taking $C(\epsilon)=\frac{3}{\epsilon}$ (we can assume $\epsilon$ is sufficiently small) we get that the probability that $G$ contains a subgraph of size $t$ spanning at least $2(t-t_0)$ edges is $o(1)$, for $t_0=\frac{n}{d^{2+\epsilon}}$. Lemma~\ref{lem:subgraph} then implies that the probability that $G$ has
a contagious set of cardinality smaller than $\frac{n}{d^{2+\epsilon}}$ is $o(1)$.
\end{proof}
\medskip

We can proceed and prove Theorem \ref{thm:LB}:
\begin{proof}
Sample at random a graph $G$ from $G^*(n,d)$. By Theorem~\ref{thm:Deadline} we have that $m(G,2) \ge \frac{n}{d^{2+\epsilon}}$ with probability $1 - o(1)$. By Theorem~\ref{thm:wormald}, $G$ is simple with probability bounded away from~0. Hence conditioned on $G$ being simple, the probability that it fails to have $m(G,2) \ge \frac{n}{d^{2+\epsilon}}$ is still $o(1)$. Conditioned on being simple, Theorem~\ref{thm:friedman} implies that $G$ fails to have $\lambda(G)=O(\sqrt{d})$ with probability $o(1)$.  For a fixed integer $k$ it is known, that with probability $p(d,k)>0$ (where $p(d,k)$ depends only on $d,k$ but not on $n$) a random $d$-regular graph has girth at least $k$ (see for example, \cite{Wormald}). Hence there is positive probability that $G$ is simultaneously simple, of girth at least $k$, has $\lambda(G)=O(\sqrt{d})$, and moreover,  $m(G,2) \ge \frac{n}{d^{2+\epsilon}}$. This proves Theorem \ref{thm:LB}.
\end{proof}
\medskip


\section{Contagious sets in graphs with no 4-cycles}
\label{sec:4cycle}

We have seen that for $d$-regular graphs $m(G,2)$ may be at least $\frac{2n}{d+1}$. It is not hard to construct triangle free graphs with $m(G,2)$ at least $\frac{n}{d}$ (take $\frac{n}{2d}$ disjoint copies of complete bipartite $d$-regular graphs). In this section we show that situation is different for graphs without 4-cycles, proving Theorem~\ref{thm:4cycle}.

Given a graph $G$ of minimum degree at least $d+1$ (for notational reasons, we find it easier in this section to work with degree $d+1$ as opposed to $d$), a vertex $v$ and and a parameter $k \ge 0$, a $(d,k)$-tree rooted at $v$ is a $d$-ary tree of depth $k$ that can be defined by induction on $k$ as follows. A $(d,0)$-tree is $v$ itself. A $(d,1)$ tree has $v$ as its root, and $d$ distinct neighbors of $v$ as its leaves. Thereafter, a $(d,k+1)$-tree is obtained from a $(d,k)$-tree as follows: every leaf of the $(d,k)$-tree gets $d$ of its neighbors in $G$ (excluding its parent node in the tree) as children in the $(d,k+1)$-tree. Hence for every node in a $(d,k)$-tree, all its tree neighbors are distinct vertices of $G$. However, the same node of $G$ may appear multiple times in the $(d,k)$-tree.

For a vertex $v$ and $k \ge 0$, a $k$-witness is a $(2,k)$-tree rooted at $v$ in which all its leaves are seeds. A $k$-witness implies that $v$ is activated, by propagating activations from the leaves to the root. Observe that we do not require the leaves to represent distinct vertices of $G$, or to represent vertices different from internal nodes of the tree. Observe also that $v$ might be activated without there being any $k$-witness to its activation (for example, by having one neighbor of $v$ as a seed and another neighbor of $v$ activated by two of its seed neighbors).

\begin{proposition}
\label{pro:2ind}
Consider a $(d,k)$-tree $T$ rooted at $v$. Then the number of $(2,k)$-trees rooted at $v$ that $T$ contains is ${d \choose 2}^{2^k - 1}$.
\end{proposition}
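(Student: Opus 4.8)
The plan is to prove this by induction on $k$, tracking at each level of the $(d,k)$-tree how many $(2,\ell)$-subtrees (rooted at the corresponding node) sit below each node. For the base case $k=0$, a $(d,0)$-tree is the single vertex $v$, and the only $(2,0)$-tree it contains is $v$ itself, so the count is $1 = \binom{d}{2}^{2^0-1} = \binom{d}{2}^0$. This matches.

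For the inductive step, consider a $(d,k+1)$-tree $T$ rooted at $v$. By definition $v$ has $d$ children $u_1,\dots,u_d$ in $T$, and each $u_i$ is itself the root of a $(d,k)$-tree $T_i$ (this is exactly the recursive structure: peeling one level off the top of a $(d,k+1)$-tree leaves $d$ many $(d,k)$-trees hanging off the children). A $(2,k+1)$-tree rooted at $v$ is specified by choosing an unordered pair $\{u_i,u_j\}$ of the $d$ children of $v$ to be the two children of the root in the $(2,k+1)$-tree, and then choosing, inside $T_i$, a $(2,k)$-tree rooted at $u_i$, and inside $T_j$, a $(2,k)$-tree rooted at $u_j$. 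By the induction hypothesis, $T_i$ contains exactly $\binom{d}{2}^{2^k-1}$ many $(2,k)$-trees rooted at $u_i$, and likewise for $T_j$. Hence the number of $(2,k+1)$-trees rooted at $v$ is
$$\binom{d}{2}\cdot\binom{d}{2}^{2^k-1}\cdot\binom{d}{2}^{2^k-1} = \binom{d}{2}^{1 + 2(2^k-1)} = \binom{d}{2}^{2^{k+1}-1},$$
which completes the induction.

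The one point that needs a little care — and which I'd expect to be the main (if minor) obstacle — is making precise the claim that the $(2,k)$-tree rooted at $u_i$ lives entirely inside $T_i$, and that choices made in distinct subtrees $T_i,T_j$ are independent, so that the count genuinely multiplies. This is where the fact that a $(d,k)$-tree is an abstract rooted tree (with labels in $V(G)$, possibly repeated) rather than a subgraph of $G$ matters: as a tree, $T$ decomposes cleanly as the root $v$ joined to disjoint copies $T_1,\dots,T_d$, so a subtree of $T$ rooted at $v$ of the prescribed shape corresponds bijectively to a choice of two children together with one admissible subtree in each of the two selected $T_i$'s, with no interaction between them. Once that bijection is stated, the arithmetic above is immediate, and no regularity or girth hypothesis on $G$ is used (only that minimum degree is at least $d+1$, which guarantees the $(d,k)$-trees exist in the first place).
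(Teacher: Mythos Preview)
Your proof is correct and is essentially the paper's argument unfolded into an induction: the paper simply observes that a $(2,k)$-tree has $2^k-1$ non-leaf nodes and that at each such node one chooses $2$ of the $d$ available children, giving $\binom{d}{2}^{2^k-1}$ directly. Your inductive step makes exactly this per-node choice explicit level by level, so the two arguments coincide.
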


\begin{proof}
A $(2,k)$-tree has $2^{k} - 1$ non-leaf nodes. Every non-leaf node has ${d \choose 2}$ ways of choosing two children different from its parent node.
\end{proof}
\medskip
\begin{proposition}
\label{pro:expectedW}
Let $v$ be the root of a $(d,k)$-tree $T$ in $G$. Suppose we activate every vertex in $G$ independently with probability $p$. Then the expected number of $k$-witnesses for $v$ in $T$ is at least ${d \choose 2}^{2^k - 1}p^{2^k}$.
\end{proposition}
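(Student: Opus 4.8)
The plan is to combine Proposition~\ref{pro:2ind} with linearity of expectation. First I would set up the right index set: let $\mathcal{W}$ denote the collection of all $(2,k)$-trees rooted at $v$ that are contained in $T$. By Proposition~\ref{pro:2ind} this collection has size exactly ${d \choose 2}^{2^k-1}$, since each such subtree is obtained by choosing, independently at every one of the $2^k-1$ non-leaf nodes, an (unordered) pair of its $d$ children. For each $T' \in \mathcal{W}$ let $X_{T'}$ be the indicator of the event that $T'$ is a $k$-witness, i.e.\ that every one of the $2^k$ leaves of $T'$ is a seed. Then the number of $k$-witnesses for $v$ in $T$ is precisely $\sum_{T' \in \mathcal{W}} X_{T'}$, so by linearity of expectation its expected value equals $\sum_{T' \in \mathcal{W}} \Pr[X_{T'}=1]$.

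The second step is to lower bound $\Pr[X_{T'}=1]$ for a fixed $T' \in \mathcal{W}$ by $p^{2^k}$. Here I would be careful about the one genuine subtlety: the $2^k$ leaf \emph{nodes} of $T'$ need not correspond to $2^k$ \emph{distinct} vertices of $G$, and a leaf may even coincide with an internal node. Let $S' \subseteq V(G)$ be the set of distinct vertices of $G$ appearing as leaves of $T'$, so $|S'| \le 2^k$. Since vertices are activated independently with probability $p$, the event $X_{T'}=1$ is exactly the event that all vertices of $S'$ are seeds, which has probability $p^{|S'|}$. Because $0 \le p \le 1$ and $|S'| \le 2^k$, we get $\Pr[X_{T'}=1] = p^{|S'|} \ge p^{2^k}$.

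Combining the two steps, the expected number of $k$-witnesses is $\sum_{T'\in\mathcal{W}}\Pr[X_{T'}=1] \ge |\mathcal{W}|\cdot p^{2^k} = {d \choose 2}^{2^k-1}p^{2^k}$, as claimed. I do not expect any real obstacle here; the only point requiring care is the non-distinctness of leaves (and of leaves versus internal nodes), which is exactly why the statement is an inequality rather than an equality and why the bound $p \le 1$ is used. If one wanted an exact count one would need to track, for each $T'$, how many distinct vertices its leaf set collapses to, but for the lower bound this is unnecessary.
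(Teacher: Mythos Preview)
Your proof is correct and follows essentially the same approach as the paper: count the $(2,k)$-subtrees via Proposition~\ref{pro:2ind}, observe that each has its leaves all seeded with probability at least $p^{2^k}$ (with equality when the leaves are distinct vertices of $G$), and apply linearity of expectation. You are simply more explicit than the paper about why possible non-distinctness of leaves only helps, which is exactly the reason the bound is an inequality.
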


\begin{proof}
By Proposition~\ref{pro:2ind} the number of $(2,k)$-trees rooted at $v$ that $T$ contains is ${d \choose 2}^{2^k - 1}$. Each one of them has $2^k$ leaves, and all its leaves are seeds with probability $p^{2^k}$ if these leaves are distinct, and higher probability otherwise.
\end{proof}
\medskip
To show that a vertex $v$ is likely to be activated, we shall view it as a root of a $(d,k)$-tree, and show that this tree is likely to contain a $(2,k)$-witness for $v$. A necessary condition for this is that the expected number of $(2,k)$-witnesses will exceed~1. By Proposition~\ref{pro:expectedW}, this will happen when $p > d^{\frac{2}{2^k} - 2}$. To make this into a sufficient condition, we develop tools for bounding the variance of this random variable.

\begin{definition}
A $(d,k)$-tree $T$ in a graph $G$ is {\em proper} if all its nodes correspond to distinct vertices of $V$. Equivalently, the subgraph of $G$ induced by the edges of $T$ does not contain a cycle. The tree $T$ is $t$-proper if the subgraph of $G$ induced by the edges of $T$ does not contain a $t$-cycle in $G$. (Edges of $T$ that correspond to the same edge in $G$ are counted only once.)
\end{definition}

\begin{proposition}
\label{pro:4cycle}
Let $G$ be a graph with no 4-cycles. Then every $(d,k)$-tree in $G$ is 4-proper.
\end{proposition}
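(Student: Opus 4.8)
The plan is to observe that the conclusion is essentially a direct unwinding of the definition of $4$-properness. Recall that a $(d,k)$-tree $T$ is $4$-proper precisely when the subgraph $H$ of $G$ whose edge set is $\{e\in E(G): e \text{ occurs as an edge of } T\}$ (with coinciding tree-edges counted once) contains no $4$-cycle of $G$. Since $H$ is by construction a subgraph of $G$, any $4$-cycle of $G$ lying inside $H$ is in particular a $4$-cycle of $G$. Hence if $G$ has no $4$-cycle at all, neither does $H$, and $T$ is $4$-proper. So the whole argument is a one-line contradiction: if some $(d,k)$-tree $T$ in $G$ were not $4$-proper, the subgraph of $G$ induced by the edges of $T$ would contain a $4$-cycle, which would then be a $4$-cycle of $G$ — contradicting the hypothesis.

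The one point worth a sentence of care is that a $(d,k)$-tree may repeat vertices of $G$. Although each individual node of $T$ has $d$ (or $d+1$) pairwise distinct tree-neighbours in $G$, two nodes of $T$ that represent the same vertex of $G$ could in principle make the edge set of $T$, viewed inside $G$, span a short cycle that is invisible at the level of the tree structure — for instance, two distinct children $u_1\neq u_2$ of the root $v$ might both be adjacent in $G$ to a common vertex $w\neq v$ appearing as a tree-child of each, so that $v u_1 w u_2$ is a $4$-cycle of $G$ spanned by edges of $T$. The content of the proposition is exactly that the assumption ``no $4$-cycles'' rules out this configuration (and every analogous one), which is why the statement is recorded separately; but once one has correctly phrased $4$-properness in terms of the subgraph of $G$ induced by the edges of $T$, nothing further is needed.

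I do not expect a real obstacle here: the proposition is a formality whose purpose is downstream, where $4$-properness of the trees under consideration is what will let one bound the dependencies between pairs of $(2,k)$-trees in the variance estimate for the number of $k$-witnesses. The only thing to keep precise is the bookkeeping in the definitions — in particular that edges of $T$ coinciding as edges of $G$ are identified — so that ``the subgraph of $G$ induced by the edges of $T$'' is genuinely a subgraph of $G$ and the identification of one of its $4$-cycles with a $4$-cycle of $G$ is literally valid.
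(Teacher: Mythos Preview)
Your proposal is correct and matches the paper's approach: the paper's proof is literally the two words ``By definition,'' since a subgraph of a $4$-cycle-free graph cannot contain a $4$-cycle. Your extended discussion of why repeated vertices do not cause trouble is accurate but unnecessary for the argument itself.
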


\begin{proof}
By definition.
\end{proof}

\begin{lemma}
\label{lem:2tree}
Let $v$ be the root of a 4-proper $(d,2)$-tree $T$, and let $\omega(d^{-2}) \le p \le o(d^{-3/2})$. Then the probability that $v$ has a 2-witness in $T$ is at least $(1 - o(1)){d \choose 2}^3p^4$.
\end{lemma}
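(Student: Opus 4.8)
My plan is to reduce the event ``$v$ has a $2$-witness in $T$'' to a statement about independent Bernoulli trials, with $4$-properness used to eliminate all dependencies. Let $u_1,\dots,u_d$ be the children of $v$ in $T$, and for $i\in[d]$ let $C_i\subseteq V(G)$ consist of the $d$ children of $u_i$ in $T$. By the definition of a $(d,k)$-tree the $d$ vertices in each $C_i$ are distinct, and none of them is $v$ (the parent $v$ of $u_i$ is excluded as a child). The structural observation I would establish first is that $4$-properness forces $C_1,\dots,C_d$ to be pairwise disjoint: a common vertex $w\in C_i\cap C_j$ with $i\neq j$ would make $v,u_i,w,u_j$ a $4$-cycle among the edges of $T$ (since $G$ is simple its four vertices are distinct, so $w\notin\{v,u_i,u_j\}$), contradicting $4$-properness. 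Hence the $d$ ``depth-one subtrees'' of $T$ sit on pairwise disjoint vertex sets, none containing $v$.

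Next, activate every vertex independently with probability $p$, and for $i\in[d]$ let $Y_i=1$ if at least two vertices of $C_i$ are seeds, and $Y_i=0$ otherwise. Each $Y_i$ depends only on the seed-statuses of the (distinct) vertices in $C_i$, so by the disjointness of the $C_i$ the variables $Y_1,\dots,Y_d$ are mutually independent, each Bernoulli with parameter $q:=\Pr[{\rm Bin}(d,p)\ge 2]$, where ${d\choose 2}p^2(1-p)^{d-2}\le q\le{d\choose 2}p^2$. I would then verify the equivalence: $v$ has a $2$-witness in $T$ \emph{iff} $\sum_{i=1}^d Y_i\ge 2$. Indeed, if $Y_i=Y_j=1$ with $i\neq j$, choosing two seeds in $C_i$ and two seeds in $C_j$ yields (together with $v$) a $(2,2)$-subtree of $T$ rooted at $v$ all four of whose leaves are seeds; this subtree is legitimate even when $T$ has triangles, because the definition of a $k$-witness allows leaves to coincide with each other or with internal nodes. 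Conversely, the two depth-one vertices of any $2$-witness each have two seed children, which forces the corresponding two $Y$'s to equal $1$. Therefore the probability that $v$ has a $2$-witness in $T$ equals $\Pr[{\rm Bin}(d,q)\ge 2]$.

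It then remains to plug in routine binomial estimates. From $p\le o(d^{-3/2})$ we get $dp=o(1)$, so $q=(1-o(1)){d\choose 2}p^2=\Theta(d^2p^2)$, and consequently $dq=\Theta(d^3p^2)=o(1)$. Hence $\Pr[{\rm Bin}(d,q)\ge 2]\ge{d\choose 2}q^2(1-q)^{d-2}=(1-o(1)){d\choose 2}q^2\ge(1-o(1)){d\choose 2}^3p^4$, which is precisely the asserted bound. (The lower bound $p\ge\omega(d^{-2})$ is not needed for this inequality; it only ensures that the target quantity ${d\choose 2}^3p^4$ is non-negligible, so that the estimate is useful in the application.) Along this route there is no genuinely hard step once the disjointness of the $C_i$ is established; the only point that requires care is the equivalence in the previous paragraph and its dependence on $4$-properness — both for the disjointness of the $C_i$ and, implicitly, for the fact that the leaves of a single $(2,2)$-subtree are distinct vertices of $G$.

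Finally, I would record the variance-based alternative, which is the route that survives at larger depth (where the depth-one subtrees are merely $4$-proper rather than vertex-disjoint, so the $Y_i$ become correlated). Let $X$ be the number of $2$-witnesses of $v$ in $T$; $4$-properness gives $E[X]={d\choose 2}^3p^4$ exactly, since the four leaves of each $(2,2)$-subtree are distinct vertices of $G$. Grouping ordered pairs of distinct $(2,2)$-subtrees by the numbers of depth-one vertices and of leaves in their union — and using $4$-properness once more to rule out shared leaves between two subtrees whose depth-one vertex sets are disjoint — one bounds the off-diagonal contribution to $E[X^2]$ by $o(E[X])$ whenever $p=o(d^{-3/2})$, so $E[X^2]=(1+o(1))E[X]$. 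The Cauchy--Schwarz (second-moment) inequality then gives $\Pr[X\ge 1]\ge E[X]^2/E[X^2]\ge(1-o(1))E[X]=(1-o(1)){d\choose 2}^3p^4$.
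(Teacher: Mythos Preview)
Your primary proof is correct, and it is genuinely more elementary than the paper's.  The paper proceeds by the second-moment method: it lets $W$ be the number of $(2,2)$-witnesses for $v$ in $T$, computes $E[W]={d\choose 2}^3p^4$ (using that all $d^2$ leaves of $T$ are distinct, which is exactly your disjointness of the $C_i$), bounds $E[W\mid W_i=1]$ by an explicit case analysis over the number of shared leaves between two $(2,2)$-subtrees, deduces $E[W^2]\le (1+O(d^3p^2))E[W]$, and concludes via the inequality $\Pr[W=0]\le 1-2E[W]+E[W^2]$.  You instead observe that $4$-properness makes the leaf sets $C_1,\dots,C_d$ pairwise disjoint, so the indicators $Y_i=\mathbf{1}\{|C_i\cap\text{seeds}|\ge 2\}$ are \emph{independent}; the event ``$v$ has a $2$-witness'' becomes exactly $\{\mathrm{Bin}(d,q)\ge 2\}$ with $q=(1-o(1)){d\choose 2}p^2$, and one line of binomial arithmetic finishes.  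This bypasses the case analysis of pairs of $(2,2)$-subtrees entirely.  Your check that $w\in C_i\cap C_j$ forces the $4$-cycle $v,u_i,w,u_j$ is sound: $w\notin\{u_i,u_j\}$ since a vertex is never its own child, and $w\neq v$ since the parent is excluded as a child.

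What each approach buys: your independence argument is shorter and cleaner at depth~$2$, and it makes the role of $4$-properness completely transparent.  The paper's second-moment route, while heavier here, is the one that survives at depth~$3$ (Lemma~\ref{lem:3tree}), where the depth-one subtrees overlap and the $Y_i$ become dependent; you correctly anticipate this in your final paragraph.  Your sketch of the variance-based alternative is essentially the paper's argument, the only cosmetic difference being that you invoke Cauchy--Schwarz ($\Pr[X\ge 1]\ge E[X]^2/E[X^2]$) whereas the paper uses the equivalent inequality $\Pr[W=0]\le 1-2E[W]+E[W^2]$.  One minor remark: you say the hypothesis $p\ge\omega(d^{-2})$ is not needed for your inequality, and indeed it is not used in your primary argument; the paper does use it (to make $d^3p^2$ the dominant error term in its second-moment bound), but the statement of the lemma is true without it.
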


\begin{proof}
All $d^2$ leaves in $T$ are distinct, because $T$ is 2-proper. Let $W_i$ denote the indicator random variable for the event that the $i$th $(2,2)$-tree in $T$ is a 2-witness for $v$. Then $Pr[W_i=1] = p^4$. Let $W = \sum W_i$ be a random variable that counts the number of 2-witnesses in $T$ for $v$. Then $E[W] = {d \choose 2}^3p^4$ (which is the same as substituting $k=2$ in Proposition~\ref{pro:expectedW}).

Consider an arbitrary $(2,2)$-tree in $T$, and suppose that it happens to be a witness. W.l.o.g we can assume $T$ is the $i$th tree, that is, $W_i=1$ (all its leaves are seeds). We compute an upper bound on $E[W|W_i=1]$.
Hence conditioned on $W_i=1$, we only know of four leaves that are seeds.
The number of $(2,2)$-trees that share three leaves with $T$ is $4(d-2)$ (each of the four leaves of $W_i$ can be replaced by $d-2$ alternative leaves). The number of $(2,2)$-trees that share two leaves with $T$ is at most $2(d-2){d \choose 2} + 4(d-2)^2$ (either one of the two children of $v$ in $W_i$ is replaced by a different child with two leaves, or each of the children of $v$ has one of its leaves replaced). The number of $(2,2)$-trees that share one leaf with $T$ is at most $4(d-2)^2{d \choose 2}$ (one child of $v$ replaces a leaf, and another child of $v$ is placed completely).  Hence
$$E[W|W_i=1] \le 1 + 4dp +  (d^3 + 4d^2)p^2 + 2d^4p^3 + {d \choose 2}^3p^4 \le 1 + O(d^3p^2),$$
where the last inequality used $\omega(d^{-2}) \le p \le o(d^{-3/2})$.
It follows that
$$E[W^2] = \sum_i Pr[W_i]E[W|W_i=1] \le (1 + O(d^3p^2))\sum_i Pr[W_i=1] = (1 + O(d^3p^2))E(W).$$
Observe that by definition $1 = \sum_{i=0}^{\infty} Pr[W=i]$, that $E[W] = \sum_{i=0}^{\infty} iPr[W=i]$, and that $E[W^2] = \sum_{i=0}^{\infty} i^2Pr[W=i]$. Hence (see \cite{Boll}, Theorem 1.16)
$$Pr[W = 0] \le 1 - 2E[W] + E[W^2] \le 1 - (1 - O(d^3p^2))E(W),$$
implying  that $Pr[W > 0] \ge (1 - O(d^3p^2))E(W) = (1 - o(1)){d \choose 2}^3p^4 \ge \Omega(d^6p^4)$.
\end{proof}
\medskip
\begin{lemma}
\label{lem:3tree}
Let $v$ be the root of a 4-proper $(d,3)$-tree $T$. Then $v$ has probability at least $1/2$ of being activated when $p = 4d^{-7/4}$. (The leading constant~4 was chosen for concreteness. A smaller constant suffices.)
\end{lemma}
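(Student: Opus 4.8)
The plan is to view the depth-3 tree $T$ as a root $v$ together with its $d$ children $u_1,\dots,u_d$, each $u_i$ being the root of a 4-proper $(d,2)$-tree $T_i$. The $T_i$ are vertex-disjoint (apart from $v$) because $T$ is proper at levels $\le 2$ — indeed 2-properness of $T$ already forces the level-1 and level-2 vertices to be distinct, so the $d^2$ grand-leaf sets are pairwise disjoint and the activation events inside distinct $T_i$ depend on disjoint coordinates of the random seed set. First I would apply Lemma~\ref{lem:2tree} to each $T_i$ with the given $p = 4d^{-7/4}$: one checks $\omega(d^{-2}) \le 4d^{-7/4} \le o(d^{-3/2})$, so each child $u_i$ has a 2-witness in $T_i$ — and hence is activated — with probability at least $(1-o(1))\binom{d}{2}^3 p^4 \ge \Omega(d^6 p^4) = \Omega(d^6 \cdot d^{-7}) = \Omega(d^{-1})$. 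Being slightly careful with the constant $4^4 = 256$ in $p^4$ and the $(1-o(1))\binom{d}{2}^3 \sim d^6/8$, this probability is at least roughly $256/8 \cdot d^{-1} = 32/d$ for $d$ large; I will just call it $\ge c/d$ with $c$ a constant I can make as large as I like by increasing the leading constant in $p$ (the remark in the statement about the constant $4$ is exactly this slack).

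Next, the root $v$ is activated as soon as at least two of its children $u_i$ are activated. Let $X_i$ be the indicator that $u_i$ is activated via a 2-witness in $T_i$; the $X_i$ are independent by the disjointness observation above, each with $\Pr[X_i=1] \ge c/d$. So the number of activated children stochastically dominates $\mathrm{Bin}(d, c/d)$, and
$$\Pr[v \text{ not activated}] \le \Pr[\mathrm{Bin}(d,c/d) \le 1] \le (1-c/d)^d + d\cdot\frac{c}{d}(1-c/d)^{d-1} \le e^{-c}(1 + 2c)$$
for $d$ large. Choosing $c$ large enough — equivalently, choosing the leading constant in $p$ large enough, which is why $p = 4d^{-7/4}$ with the freedom noted in the statement suffices — makes $e^{-c}(1+2c) \le 1/2$, giving $\Pr[v \text{ activated}] \ge 1/2$ as claimed. (With $c \approx 32$ this is already astronomically smaller than $1/2$, so the concrete constant $4$ is very comfortable.)

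The only subtle point — and the step I would be most careful about — is the \emph{independence} of the events $\{X_i = 1\}$ across children. Lemma~\ref{lem:2tree} controls $\Pr[u_i$ has a 2-witness in $T_i]$, an event determined entirely by which of the $d^2$ grand-leaves of $T_i$ are seeds; since $T$ is proper down to level~2 these grand-leaf vertex sets are pairwise disjoint, so these events are genuinely independent and the binomial domination is legitimate. Note it is essential to use the weaker event "$u_i$ activated via a 2-witness \emph{inside $T_i$}" rather than "$u_i$ activated in $G$," since the latter could correlate across children through shared seeds elsewhere; working inside the disjoint subtrees is what keeps everything clean. Everything else is a routine binomial tail estimate, so I would present the independence argument explicitly and treat the rest briefly.
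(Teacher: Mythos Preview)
Your argument has a genuine gap at the independence step. You claim that the depth-2 subtrees $T_1,\dots,T_d$ rooted at the children $u_1,\dots,u_d$ of $v$ are vertex-disjoint, and in particular that their leaf sets (which sit at level~3 of $T$) are pairwise disjoint. But $T$ is only assumed to be \emph{4-proper}, not proper: 4-properness does force the level-1 and level-2 vertices to be distinct (a collision there would create a 4-cycle through $v$), but it does \emph{not} force the level-3 leaves to be distinct across different $T_i$. Concretely, if $w$ is a child of $u_i$ and $w'$ is a child of $u_j$, they may well share a common neighbour $z$ that appears as a leaf in both $T_i$ and $T_j$; the only constraint 4-properness gives is that $w$ and $w'$ share at most \emph{one} such $z$. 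Your justification (``$T$ is proper at levels $\le 2$, so the $d^2$ grand-leaf sets are pairwise disjoint'') conflates level-2 vertices with the level-3 leaves that actually determine the events $X_i$. Since the $X_i$ can depend on overlapping seeds, the binomial domination step is not valid as written.

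The paper's proof confronts exactly this dependence. It keeps the same random variables $X_i$ but, instead of claiming independence, bounds $E[X^2]$ by computing $\Pr[X_i \wedge X_j]$ for $i\neq j$ via a careful count of how 2-witnesses in $T_i$ and $T_j$ can share seeded leaves (the extreme case is visualised as a $d\times d$ table of potentially shared grandchildren). The outcome is $\Pr[X_i\wedge X_j] = (1+o(1))\Pr[X_i]\Pr[X_j]$, hence $\mathrm{var}(X) = E[X] + o((E[X])^2)$, and Chebyshev then gives $\Pr[X\ge 2] > 1/2$. So the missing ingredient in your proposal is precisely this second-moment analysis; the rest of your outline (reducing to the children via Lemma~\ref{lem:2tree}, the computation $E[X]\approx 32$) matches the paper.
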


\begin{proof}
Let $v_1, \ldots, v_d$ denote the neighbors of $v$ in $T$. Let $X_i$ be an indicator random variable for the event that $v_i$ has a 2-witness in the subtree of $T$ rooted at $v_i$. Lemma~\ref{lem:2tree} implies that $Pr[X_i=1] = (1 - o(1)){d \choose 2}^3p^4$. Let $X = \sum X_i$. Then $E[X] = (\frac{1}{8} - o(1))d^7p^4 \simeq 32$. Observe that when $X \ge 2$ at least two neighbors of $v$ are activated, and then $v$ is activated as well. Hence if $X$ behaves similar to its expectation, we expect $v$ to be activated. To show that $X$ is concentrated around its expectation, we compute $E[X^2]$.

Let us compute $Pr[X_i \wedge X_j]$ for $i \not= j$. The fact that $T$ is 4-proper implies the following useful facts:

\begin{enumerate}

\item All $d^2$ leaves in the subtree of $T$ rooted at $v_i$ are distinct. All $d^2$ leaves in the subtree of $T$ rooted at $v_j$ are distinct.

\item All children of $v_i$ in $T$ are distinct from all children of $v_j$ in $T$.

\item No child of  $v_i$ in $T$ has two common children with a child of $v_j$ in $T$.

\end{enumerate}

The probability $Pr[X_i \wedge X_j]$ depends on the pattern of common grandchildren that the vertices $v_i$ and $v_j$ has. The above facts show that every child of $v_i$ and every child of $v_j$ have at most one common neighbor. We consider two cases.

In the first case every child of $v_i$ and every child of $v_j$ have {\em exactly} one common neighbor. This case can be visualized as a $d$ by $d$ table $M$ of distinct grandchildren. The rows are indexed by the children of $v_i$ and the columns are indexed by the children of $v_j$. Every child of $v_i$ is a neighbor of those grandchildren in its respective row, and every child of $v_j$ is a neighbor of those grandchildren in its respective column. Each entry of the table is a seed with probability $p$ and not a seed otherwise. For the event $X_i \wedge X_j$ we need two rows to have two seed entries, and two columns to have two seed entries. This requires between four to eight seed entries, depending on where the seeds are located within the table. We compute the number of possibilities for each case separately.

\begin{enumerate}

\item Four seed entries. One needs to choose the two rows and two columns that contain them, giving ${d \choose 2}^2$ possibilities.

\item Five seed entries. There are $\Theta(d^6)$ possibilities. (Details omitted.)

\item Six seed entries. There are $\Theta(d^8)$ possibilities. (Details omitted.)

\item Seven seed entries. There are $\Theta(d^{10})$ possibilities. (Details omitted.)

\item Eight seed entries. One needs to choose two rows and two locations within these rows, and likewise for the columns. This gives at most ${d \choose 2}^6$ possibilities.

\end{enumerate}

As $d^2p \gg 1$, the dominating term is ${d \choose 2}^6p^8$, giving $Pr[X_i \wedge X_j] = (1 + o(1))Pr[X_i]Pr[X_j]$. It follows that
$$E[X^2] = \sum_i \sum_j  Pr[X_i \wedge X_j] \le \sum_i Pr[X_i](1 + (1 + o(1))E[X]) = E[X] + (1 + o(1))(E[X])^2,$$
Hence $var[X] = E[(X - E[X])^2] = E[X^2] - (E[X])^2 = E[X] + o((E[X])^2)$.

Now Chebyschev's inequality implies that $Pr[X \ge 2] > 1/2$.

The remaining case to consider is the one in which some pairs of children, one child of $v_i$ and one child of $v_j$, have no common neighbors at all. In this case, some entries of the table $M$ referred to above are empty, and instead the vertices representing the corresponding rows and columns have additional children not accounted for in $M$ (and not shared by other vertices). Imitating the analysis performed for the first case, the number of possibilities for eight seed entries remains at most ${d \choose 2}^6$, and ${d \choose 2}^6p^8$ remains the dominating term (the nondominating terms can easily be seen not to increase by more than a constant factor). Hence the bounds proven for the first case above apply also in the current case.
\end{proof}
\medskip
We can now prove Theorem~\ref{thm:4cycle}.

\begin{proof}
As the minimum degree of $G$ is $d+1$, every vertex $v$ in $G$ is a root of a $(d,3)$-tree. By Proposition~\ref{pro:4cycle} this $(d,3)$-tree is proper. By Lemma~\ref{lem:3tree}, if $p = 4d^{-7/4}$ then $v$ is activated with probability at least~$1/2$. By Corollary~\ref{cor:partialinfection}, there is a contagious set of size $2pn$.
\end{proof}
\medskip

\section{Contagious sets in graphs of girth at least 7}
\label{sec:girth9}

Before proving Theorem~\ref{thm:girth7}, let us present a lemma that summarizes the only property of $d$-regular graphs of girth at least~7 that will be used in the proof. Given a graph $G(V,E)$, for a set $S$ of vertices, recall that $N(S)$ denote the set of those vertices that are neighbors of some vertex in $S$, and let $N^2(S)$ denote the set of those vertices that are at distance exactly~2 from some vertex in $S$. Observe that we do not require the sets $S$, $N(S)$ and $N^2(S)$ to be disjoint.

\begin{lemma}
\label{lem:girth7}
Let $G$ be a $d$-regular graph of girth at least~7. Then for every $1 \le k < d$ and every set $S$ of $k$ vertices it holds that $|N^2(S)| \ge \frac{kd^2}{2}$.
\end{lemma}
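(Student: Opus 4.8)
The plan is to lower-bound $|N^2(S)|=\big|\bigcup_{v\in S}N^2(v)\big|$ by truncated inclusion--exclusion (Bonferroni), so that
$$|N^2(S)| \ \ge\ \sum_{v\in S}|N^2(v)| \ -\ \sum_{\{v,v'\}\subseteq S}|N^2(v)\cap N^2(v')| .$$
The first sum is exact: in a $d$-regular graph of girth at least $5$ every vertex $v$ has precisely $d(d-1)$ vertices at graph distance exactly $2$ (distinctness of these uses the absence of $4$-cycles; that none is adjacent to $v$ uses the absence of triangles), so $\sum_{v\in S}|N^2(v)|=kd(d-1)$. Thus the whole statement reduces to the uniform pairwise bound, call it $(\star)$: $|N^2(v)\cap N^2(v')|\le d$ for every two distinct vertices $v,v'$. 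Granting $(\star)$ we get $|N^2(S)|\ge kd(d-1)-{k\choose 2}d=\frac{kd}{2}(2d-1-k)$, and since $k<d$ we have $2d-1-k\ge d$, hence $|N^2(S)|\ge\frac{kd^2}{2}$. This is the only place where the hypothesis $k<d$ is used, and the estimate is essentially tight at $k=d-1$.

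To prove $(\star)$, fix distinct $v,v'$ and a vertex $w\in N^2(v)\cap N^2(v')$. Since $v$ and $w$ have a common neighbor, and by girth at least $5$ only one, this common neighbor $m=m(w)$ is well defined, giving a map $w\mapsto m(w)$ from $N^2(v)\cap N^2(v')$ into $N(v)$, a set of size $d$. I will show this map is injective except in one configuration, which is handled separately. If $d(v,v')\ge 5$ then $N^2(v)\cap N^2(v')=\emptyset$ (a common distance-$2$ vertex yields a $v$--$v'$ walk of length $4$). If $v$ and $v'$ have no common neighbor — which holds whenever $v\sim v'$ or $d(v,v')\ge 3$ — then the map is injective: if $m(w)=m(\tilde w)=m$ with $w\ne\tilde w$, then $m\not\sim v'$ (else $m$ is a common neighbor of $v$ and $v'$, or creates a triangle when $v\sim v'$), so $d(m,v')\in\{2,3\}$. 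When $d(m,v')=2$, the two length-$2$ paths to $v'$ from $w$ and from $\tilde w$, together with $m$, close up into a $5$-cycle (after checking that the vertices involved are distinct, using absence of triangles and $4$-cycles) unless $w=\tilde w$; when $d(m,v')=3$, both $w$ and $\tilde w$ are the \emph{unique} (by girth at least $7$) second vertex of a geodesic from $m$ to $v'$, forcing $w=\tilde w$. So $|N^2(v)\cap N^2(v')|\le d$ in all these cases.

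The one remaining case is $d(v,v')=2$, with unique common neighbor $z$. Here the key observation is that every $w\in N^2(v)\cap N^2(v')$ satisfies $w\sim z$: the two length-$2$ paths from $v$ and from $v'$ to $w$, together with the path $v$--$z$--$v'$, form a closed walk of length $6$; by girth at least $7$ it is not a genuine $6$-cycle, and inspecting the possible coincidences among its (at most six) vertices — once more using absence of triangles and $4$-cycles — the only surviving possibility forces the $v$--$w$ path or the $v'$--$w$ path to pass through $z$, hence $w\sim z$. Therefore $N^2(v)\cap N^2(v')\subseteq N(z)$, and since neither $v$ nor $v'$ lies in $N^2(v)\cap N^2(v')$, we get $|N^2(v)\cap N^2(v')|\le d-2<d$. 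This establishes $(\star)$ and hence the lemma.

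The heart of the argument — and where essentially all the work is — is the case analysis for $(\star)$, especially $d(v,v')\in\{2,3\}$: one has to determine carefully which of the vertices occurring in the relevant short closed walks are forced to be pairwise distinct, in order to conclude that a cycle shorter than $7$ would appear. The choice of girth exactly $7$ matters precisely there: to rule out the $6$-cycle in the $d(v,v')=2$ case, and — via the same exclusion of $6$-cycles — to pin down the unique length-$3$ geodesic used in the $d(v,v')=3$ case. For $d(v,v')\le 1$, $d(v,v')=4$, or $d(v,v')\ge 5$, the bound $(\star)$ is easier (respectively trivial, an instance of geodesic uniqueness, and empty intersection).
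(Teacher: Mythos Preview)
Your proof is correct and follows the same approach as the paper's: truncated inclusion--exclusion reduces everything to the pairwise bound $|N^2(v)\cap N^2(v')|\le d$, which both you and the paper establish by tracking, for each $w$ in the intersection, its unique neighbor in $N(v)$ and arguing via girth that no such neighbor can serve two distinct $w$'s (the paper phrases this as pigeonhole, you as injectivity of $w\mapsto m(w)$; the case splits differ in detail but not in substance). One small slip: in your $d(m,v')=2$ subcase, the closed walk $m$--$w$--$a$--$v'$--$b$--$\tilde w$--$m$ that you describe has length $6$, not $5$; since girth at least $7$ excludes $6$-cycles as well, this does not affect correctness.
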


\begin{proof}
Given a $d$-regular graph $G(V,E)$ of girth at least~7, consider an arbitrary set $S$ of $k$ vertices. For every vertex $v\in S$ we have that $|N^2(v)| = d(d-1)$, because otherwise $G$ has a cycle of length at most~4. Hence $\sum_{v\in S} |N^2(v)| = kd(d-1)$. To provide a lower bound on $|N^2(S)|$, we use the first two terms of the inclusion exclusion formula. Namely:

$$|N^2(S)| \ge  kd(d-1) - \sum_{u,v \in S} |N^2(u) \cap N^2(v)|$$

We now claim that for every $u,v \in V$ it holds that $|N^2(u) \cap N^2(v)| \le d$. Suppose otherwise that $|N^2(u) \cap N^2(v)| > d$. Then by the pigeon-hole principle, and least one vertex $x \in N(u)$ has at least two neighbors $x_1,x_2$ in $N^2(v)$. Suppose first that $x \not\in N(v)$. Then $x_1$ and $x_2$ cannot have a common neighbor $y$ in $N(v)$, because then $x,x_1,y,x_2$ would form a 4-cycle. Hence there are two vertex disjoint paths from $x$ to $v$ (one through $x_1$, the other through $x_2$). This forms a 6-cycle, which contradicts the girth assumption.

The other case to consider is that $x \in (N(u) \cap N(v))$. (Note that it cannot be that $x = v$ because in that case neighbors of $x$ will not be in $N^2(u) \cap N^2(v)$.) Observe that then there cannot be any other vertex $y$ that is in $N(u) \cap N(v)$, because $x,u,y,v$ would form a 4-cycle. Observe also that $|N^2(u) \cap N^2(v)| > d$ implies that there is a vertex $z \not\in N(x)$ that is in $N^2(u) \cap N^2(v)$. This $z$ has two vertex disjoint paths of length~3 to $x$, one through $u$ and the other through $v$. This forms a 6-cycle, contradicting the girth assumption.

If follows (using also $k < d$) that:

$$|N^2(S)| \ge  kd(d-1) - d{k \choose 2} = kd(d - 1 - \frac{k-1}{2}) \ge \frac{kd^2}{2}$$
\end{proof}

{\bf Remark.} The proof of Lemma~\ref{lem:girth7}  only requires the graph not to have 4-cycles and 6-cycles. Having arbitrarily short odd cycles does not matter, up to some minimal changes in the parameters, such as the allowed range of $k$, or the leading term of $\frac{1}{2}$ for the expression $kd^2$. Consequently, the proof of Theorem~\ref{thm:girth7} only uses the absence of 4-cycles and 6-cycles, and not the full requirement of girth at least~7. More generally, existence of odd cycles can have only limited effect on upper bounds on $m(G,2)$, as long as these upper bounds are expressed as function of the degree and do not require the graph being exactly regular. This can be seen by recalling that every $d$-regular graph has a maximal cut in which every vertex has between $d/2$ and $d$ edges crossing the cut. Removing all edges except for cut edges leaves us with a bipartite graph $G'$, which has no odd cycles. Furthermore, all degrees are between $d/2$ and $d$. Upper bounds on $m(G',2)$ trivially apply to $G$ as well. Finally, observe that Lemma ~\ref{lem:girth7} is no longer true if we only require the graph to have no four-cycles (or girth 5) as there are $d$ regular graphs with girth $5$ and $O(d^2)$ vertices.

We now prove Theorem~\ref{thm:girth7}.

\begin{proof}
We present an algorithm that is partly random and partly greedy for selecting a contagious set in $G(V,E)$. Let $p = \frac{4\ln d}{d^2}$. Let $A$ be an initial set of seeds, where every vertex of $G$ in included in $A$ independently at random with probability $p$.
Given $A$, consider the following sets of vertices.

\begin{enumerate}

\item Set $A$ of seeds.

\item Set $B$ of excited vertices: vertices in $V \setminus A$ that have at least one neighbor in $A$. Observe that under our definition of $B$, a vertex in $B$ may have two or more neighbors in $A$ and hence be activated, but we still refer to it as excited. Consider the subgraph $G(B)$ of $G$ induced on the vertices of $B$. Call a connected component in $G(B)$ {\em large} if it contains at least $d$ vertices, and {\em small} otherwise. Based on this distinction, we partition $B$ into two disjoint subsets.

    \begin{enumerate}

    \item The set $B_{L}$ of vertices that are in large connected components in $G(B)$.

    \item The set $B_{S}$ of vertices that are in small connected components in $G(B)$.

\end{enumerate}

\item Set $C$ of those vertices in $V \setminus (A \cup B)$ that have at least one neighbor in $B_{L}$.

\end{enumerate}

As a memory aid, one may think of $A$ as representing {\em activated}, $B$ as representing {\em boundary}, and $C$ as representing {\em close}.

Consider an arbitrary vertex $v\in V$. We analyze the probability of the event that $v \in C$. This event can be broken into several other events that all need to happen simultaneously.

{\bf Event $\bar{A}_v$}, which holds if $v \not\in A$. This happens with probability $1 - p$.

{\bf Event $\bar{B}_v$}, which holds if $v \not\in B$. This happens with probability at least $1 - dp$, because $v$ has $d$ neighbors.

{\bf Event $NB_v$}, which holds if $v$ has at least one neighbor in $B$. Consider the vertices at distance~2 from $v$. As $G$ has no 4-cycles, these are $d(d-1)$ distinct vertices. The expected number of these vertices that are in $A$ is $pd(d-1) \simeq 4\ln d$. Hence the probability that at least one of them is in $A$ is roughly $1 - e^{-4\ln d} > 1 - \frac{1}{d}$. Let $w \in A$ be a vertex at distance~2 from $v$, and let $u$ be the common neighbor of $v$ and $w$. If $u$ is not in $A$ (which happens with probability $1 - p$) then $u$ is in $B$.
    Hence event $NB_v$ holds with probability at least $1 - \frac{1}{d} - p \ge 1 - \frac{2}{d}$.

{\bf Event $\overline{NS}_v$}, which holds if $v$ has no neighbor in $B_S$.

\begin{lemma}
\label{lem:violating}
The Event $\overline{NS}_v$ holds with probability $1 - O(1/d)$.
\end{lemma}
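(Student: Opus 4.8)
The plan is to bound the probability that a fixed vertex $v$ has a neighbor lying in a small connected component of $G(B)$. Let $u$ be a neighbor of $v$. For $u$ to be in $B_S$, first $u$ must lie in $B$ (so $u\notin A$ but $u$ has a neighbor in $A$), and second the connected component of $G(B)$ containing $u$ must have fewer than $d$ vertices. I would estimate $\Pr[u \in B_S]$ and then take a union bound over the $d$ neighbors of $v$. The key observation is that membership in $B$ is a ``local'' event (it depends only on whether $u$ and its neighbors are seeds), but being in a \emph{small} component is a constraint that forces many nearby vertices to \emph{not} be excited, which is a low-probability event once $p$ is as large as $\frac{4\ln d}{d^2}$.

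The main step is the following. Suppose $u\in B$. I claim that, with probability $1-O(1/d^2)$, the component of $u$ in $G(B)$ is large. To see this, fix a neighbor $w$ of $u$ with $w\in A$ (which exists since $u\in B$); such a $w$ has $d-1$ further neighbors besides $u$, and by the girth assumption these are all distinct from $u$ and from each other, and each such neighbor $x$ (other than those in $A$, a negligible fraction) lies in $B$ because it has the neighbor $w\in A$. Hence $u$ together with these $\Omega(d)$ vertices already sits in one connected component of $G(B)$ --- in fact this is automatic and deterministic given $w\in A$, so actually \emph{every} vertex of $B$ adjacent to a seed of degree contributing $\ge d$ excited neighbors is in a large component. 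Wait: more carefully, once $w\in A$ and $w\notin$ (most of its neighbors are non-seeds), the set $\{u\}\cup (N(w)\setminus A)$ is a connected subgraph of $G(B)$ of size $\ge d - \text{(number of seeds among $N(w)$)}$; since the expected number of seeds in $N(w)$ is $dp = o(1)$, with probability $1-O(dp) = 1-O(\ln d/d)$ this set has size $\ge d/2 \ge d$... but that's only $\ge d/2$, not $\ge d$. I would therefore instead use $w$'s distance-2 structure, or simply redefine ``large'' thresholds as in the statement and note $N(w)$ has $d$ vertices so $\{u\}\cup N(w)$ already has $d+1\ge d$ vertices and removing $O(1)$ seeds in expectation keeps it $\ge d$ with probability $1-O(dp)$. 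So $\Pr[u\in B_S \mid u\in B] = O(dp) = O(\ln d/d)$, and $\Pr[u\in B_S] \le \Pr[u\in B]\cdot O(\ln d / d) \le dp \cdot O(\ln d/d) = O(p\ln d) = O(\ln^2 d/d^2)$. Union-bounding over the $d$ neighbors of $v$ gives $\Pr[\exists\, u\in N(v): u\in B_S] = O(\ln^2 d / d)$, which is $o(1)$ but not obviously $O(1/d)$.

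To get the sharper $1-O(1/d)$ bound claimed in the lemma, I expect one needs a more careful argument: rather than conditioning on a single seed neighbor $w$ of $u$, use that $u\in B$ typically means $u$ has a seed neighbor $w$ \emph{and} that $w$'s neighborhood (size $d$, all distinct vertices by girth $\ge 7$) supplies $d$ excited vertices, so $u$'s component is large \emph{unless} an unusually large fraction of those $d$ vertices happen to be seeds. The probability that $\ge$ (some constant fraction of) $N(w)$ consists of seeds is, by a Chernoff/Poisson tail bound with mean $dp = \frac{4\ln d}{d}$, super-polynomially small, certainly $O(1/d^3)$. The real content is handling the case where $u$'s only route to a large component is blocked because \emph{all} of $u$'s seed neighbors are ``isolated-ish'', and one must also ensure the various excited vertices around distinct seed neighbors of the $d$ neighbors of $v$ don't collide --- but girth $\ge 7$ controls exactly these collisions. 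The hard part will be bookkeeping the dependencies cleanly enough to extract the $O(1/d)$ rate rather than a cruder $o(1)$; I would organize it by first showing $\Pr[u\in B,\ u\notin B_L] = O(1/d^3)$ for each fixed $u$ using a Poisson tail bound on $|N(w)\cap A|$ for a seed neighbor $w$, then union-bounding over $u\in N(v)$ to conclude $\Pr[\overline{\overline{NS}_v}] = O(1/d^2) = O(1/d)$.
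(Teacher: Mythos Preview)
Your argument has a fundamental error. You claim that if $u\in B$ has a seed neighbor $w\in A$, then $\{u\}\cup(N(w)\setminus A)$ is a connected subgraph of $G(B)$. But $G(B)$ is the subgraph of $G$ induced on $B$, and by definition $B\subseteq V\setminus A$, so the seed vertex $w$ is \emph{not} a vertex of $G(B)$. The vertices of $N(w)\setminus A$ are indeed all in $B$ (each sees the seed $w$), but in $G(B)$ they are only connected to one another through edges of $G$ that stay inside $B$; the hub $w$ has been deleted. Since the girth is at least~7 there are no triangles, so distinct neighbors of $w$ are nonadjacent, and hence $N(w)\setminus A$ is an independent set in $G(B)$. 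Nothing in your argument forces these vertices into a single component, and in fact the whole difficulty of the lemma is precisely to show that such excited vertices nevertheless link up through \emph{other} excited vertices.

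The paper's proof takes a completely different route. It fixes a neighbor $u$ of $v$ and, for each $k<d$, bounds the probability that the component of $u$ in $G(B)$ has size exactly $k$ by a union bound over all connected $k$-sets $K\ni u$ (at most $(ed)^{k-1}$ of them). For $K$ to be a maximal component of $G(B)$, no vertex of $\partial(K)$ may lie in $B$; equivalently, no vertex $z\in N^2(K)$ may be a seed whose link into $N(K)$ is a non-seed. Lemma~\ref{lem:girth7} gives $|N^2(K)|\ge kd^2/2$, and since the relevant indicator variables (one per link) are independent, the probability that no such ``violating'' $z$ exists is at most about $e^{-p\cdot kd^2/2}\simeq d^{-2k}$. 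Multiplying by $(ed)^{k-1}$ and summing over $k$ yields $\Pr[u\in B_S]=O(1/d^2)$, and a union bound over the $d$ neighbors of $v$ gives the claimed $O(1/d)$. The mechanism is thus not ``$u$'s seed neighbor has many excited neighbors'' but rather ``any candidate small component $K$ has $\Theta(kd^2)$ second neighbors, too many for all of them to avoid producing an excited boundary vertex.''
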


\begin{proof}
Consider an arbitrary vertex $u\in N(v)$, and for $k < d$, let $K$ be a connected set of $k$ vertices that contains $u$. Consider the event $\bar{K}$ that $K$ forms one of the connected components in $B$. This event involves two requirements: one is that $K \subset B$ and the other is that no vertex in $\partial(K)$ is in $B$. Observe that by considering all possible connected $K$ that contain $u$, exactly one of the events $\bar{K}$ needs to happen in order for $k$ to be the size of the connected component of $u$ in $G(B)$. Given that $G$ is of degree $d$ and that $u \in K$, there are at most ${(k-1)d \choose k-1} \simeq (ed)^{k-1}$ ways of choosing the $k$ vertices of $K$.

Given $K$, we now upper bound the probability of event $\bar{K}$. For this, it suffices to upper bound the probability that no vertex in $\partial(K)$ is in $B$ (while ignoring the requirement that $K \subset B$). This event fails if a vertex $z$ at distance~2 from a vertex of $x\in K$ is {\em violating}, namely, $z\in A$, and there is a vertex $y \in  N(x) \cap N(z)$ such that $y \not\in (A \cup K)$. This $y$ is in $B$ and can be used to enlarge $K$.
Lemma~\ref{lem:girth7} implies that $N^2(K) \ge \frac{kd^2}{2}$. Using this, we now estimate the probability that no violating vertices exist.

For every vertex $z \in N^2(K)$, designate one vertex in $N(z) \cap N(K)$ to be the {\em link} $l(z)$ to $K$. Observe that every vertex in $N(K)$ can serve as a link to at most $d$ vertices in $N^2(K)$ (because the graph has degree $d$). At most $k$ of the links are in $K$ ($N(K)$ may not be disjoint from $K$). Ignore those vertices in $N^2(K)$ whose link is in $K$. This still leaves at least $\frac{kd^2}{2} - kd$ vertices in $N^2(K)$ whose link is not in $K$. With each link $l$ that is not in $K$, associate a 0/1 random variable $y_l$ whose value is~1 if and only if the following two conditions hold: $z \in A$ for at least one $z \in N^2(K)$ for which $l(z) = l$, and $l \not\in A$. Let $d_l \le d$ denote the number of $z \in N^2(K)$ for which $l(z) = l$. We get that $Pr[y_l = 1] \ge (1-p)d_lp(1-p)^{d_l-1} \simeq p d_l$ (where the near equality holds because for our choice of $p$ and $d$, $(1 - p)^d \simeq 1$). If $y_l = 1$ then there is a violating vertex. Let $Y = \sum_l y_l$. There is no violating vertex only if $Y = 0$. Note that the expectation of $Y$ is roughly $\sum pd_l \ge p(\frac{kd^2}{2} - kd) \simeq 2k\ln d$. Observe that the random variables $y_l$ are independent, and each of them is a 0/1 variable, hence standard concentration results imply that $Pr[Y = 0] \le e^{-2k\ln d} \simeq d^{-2k}$.

Taking a union bound over all choices of $K$, it follows that the size of the connected component of $u$ in $G(B)$ is exactly $k$ with probability at most $d^{-2k}(ed)^{k-1} \le (\frac{e}{d})^{k+1}$. Summing over all values of $1 \le k < d$, the probability that $u\in B_S$ is $O(1/d^2)$. Taking a union bound over all neighbors of $v$, we get that $Pr[\overline{NS}_v] = 1 - O(1/d)$.
\end{proof}

For a given vertex $v$, if all four events listed above hold simultaneously then $v \in C$ (observe that the combination of $NB_v$ and $\overline{NS}_v$ imply that $v$ has a neighbor in $B_L$). Hence $v \in C$ with probability at least $1 - p - pd - O(1/d) > 3/4$ (for our choice of $p$ and sufficiently large $d$).

Within every large component (in $B_L$), chose at random one vertex to be a seed. Observe that the probability that $v$ becomes a seed by this is at most $p$ (probability of $pd$ for being in $B$, times probability at most $1/d$ of being selected as seed in his large component). Observe also that this activates the whole large component. Hence by now every vertex of $C$ has at least one active neighbor.

Let us repeat the above experiment of selecting a random $A$ twice, each time with fresh randomness. Call a vertex {\em lucky} if it is in $C$ in both experiments. Hence the probability that a vertex $v$ is lucky is at least $(\frac{3}{4})^2 = \frac{9}{16}$. If the two active neighbors of $v$ are distinct, then $v$ is infected as well. What is the probability that these two active neighbors are not distinct? For this, $v$ would have to have a neighbor that is in $B$ in both experiments. This happens with probability at most $d(pd)^2 \le \frac{1}{16}$ (for our choice of parameters). Hence $v$ has probability at least $1/2$ of becoming infected. Note also that $v$ had probability at most $4p$ of becoming a seed in at least one of the experiments. Hence Lemma~\ref{lem:partialinfection} implies that $G$ has a contagious set of size $8pn = O(\frac{n\log d}{d^2})$.
\end{proof}

\section{Contagious sets in expanders with no 4-cycles}

In this section we prove Theorem~\ref{thm:girth_expander}.

Our strategy in building a small contagious set for expanders with no 4-cycles will be to choose the seeds (the vertices we activate) one by one in rounds in a greedy manner, where for a given round $t$, $s_t$ will denote the seed chosen in round $t$, and $S_t$ will denote the set of all $t$ seeds chosen up to and including round $t$. Given a set $S_t$ of seeds, an activation cascade may activate additional vertices. We let $A_t$ denote the set of all activated vertices after round $t$, with $S_t \subset A_t$. We shall be concerned also with neighbors of vertices in $A_t$, and denote $B_t = A_t \cup \partial(A_t)$. The set of remaining vertices in $V \setminus B_t$ will be denoted by $R_t$. Initially, $S_0$, $A_0$ and $B_0$ are empty, and $R_t = V$.

Our greedy algorithm has two phases, each employing a different greedy rule. It switches between phases once $B_t$ becomes the majority of the graph. Specifically, at round $t\ge 1$, if $A_{t-1} \not= V$, the greedy algorithm proceeds as follows:

\begin{enumerate}

\item If $|B_{t-1}| < n/2$, select as seed $s_t$ a vertex $v \in (V \setminus A_{t-1})$ such that $S_t = S_{t-1} \bigcup \{v\}$ maximizes $|B_t|$ (after applying the activation cascade).

\item If $|B_{t-1}| \ge n/2$, select as seed $s_t$ a vertex $v \in (V \setminus A_{t-1})$ such that $S_t = S_{t-1} \bigcup \{v\}$ maximizes $|A_t|$ (after applying the activation cascade).

\end{enumerate}

We let $T$ denote the total number of rounds until $A_T = V$. We now establish that $T = O(\frac{n \log d}{\epsilon^2 d^2})$.  The following lemma does not require any expansion properties.

\begin{lemma}
\label{lem:FirstPhase}
Let $G(V,E)$ be an arbitrary $d$-regular graph. Then for $t \le \frac{n}{d}$ the above greedy algorithm can maintain $|B_t| \ge \frac{d}{2}t$.
\end{lemma}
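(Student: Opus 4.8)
The plan is to run the greedy algorithm in its first phase and track the growth of $|B_t|$ by showing that each new seed adds at least $d/2$ fresh vertices to $B$. The key observation is that if at some round we have $|B_{t-1}| < n/2$, then $R_{t-1} = V \setminus B_{t-1}$ has more than $n/2$ vertices, all of which are non-seeds and have no neighbor in $A_{t-1}$. I would like to produce, for the greedy choice, a vertex whose insertion into $S_{t-1}$ increases $|B_t|$ by at least $d/2$; since the greedy rule picks the vertex maximizing $|B_t|$, it suffices to exhibit \emph{one} vertex achieving this gain.

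First I would argue that there must exist a vertex $v \in R_{t-1}$ with at least $d/2$ neighbors in $R_{t-1} \cup (\text{new territory})$ — more precisely, I want a vertex $v$ whose activation (just as a seed, ignoring any cascade) brings in $\ge d/2$ vertices not already in $B_{t-1}$. Consider any vertex $v \in R_{t-1}$: it is not in $B_{t-1}$, so none of its $d$ neighbors lies in $A_{t-1}$. When we add $v$ as a seed, $v$ itself joins $A_t$, and all $d$ of its neighbors join $\partial(A_t) \subseteq B_t$. The only neighbors of $v$ that fail to be \emph{new} members of $B_t$ are those already in $B_{t-1}$, i.e. those in $\partial(A_{t-1})$. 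So the gain $|B_t| - |B_{t-1}| \ge 1 + |\{w \in N(v): w \notin B_{t-1}\}|$. Thus it suffices to find $v \in R_{t-1}$ with at least $d/2 - 1$ neighbors outside $B_{t-1}$, equivalently at most $d/2+1$ neighbors inside $B_{t-1}$; then greedy does at least as well. The main step is a counting/averaging argument: if \emph{every} vertex of $R_{t-1}$ had more than $\approx d/2$ neighbors in $B_{t-1}$, then $e(R_{t-1}, B_{t-1})$ would be large, but this edge count is bounded by $d|B_{t-1}| < dn/2$, while $|R_{t-1}| > n/2$ forces the average vertex of $R_{t-1}$ to have fewer than $d$ neighbors in $B_{t-1}$ — I need to push this slightly to get the $d/2$ threshold, using the constraint $t \le n/d$ which keeps $|B_{t-1}| \le (d/2)t \cdot (\text{something}) $ genuinely bounded away from $n$. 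Actually the cleaner route: under the inductive hypothesis $|B_{t-1}| \ge \frac{d}{2}(t-1)$ holds already; I want to \emph{maintain} the bound, so it is enough that there is \emph{always} a vertex gaining $\ge d/2$ as long as $t \le n/d$, and since for $t \le n/d$ we have $\frac{d}{2}t \le n/2$, the guard $|B_{t-1}| < n/2$ is consistent with the target.

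I expect the main obstacle to be making the averaging argument robust enough to yield a $d/2$ gain (not just an $o(d)$ gain) while $|B_{t-1}|$ may already be close to $n/2$. The resolution is that we only need to \emph{maintain} $|B_t| \ge \frac{d}{2}t$, and whenever $|B_{t-1}| \ge \frac{d}{2}t$ already, nothing needs proving for that round; so the delicate case is exactly when $|B_{t-1}|$ is only slightly below $\frac{d}{2}t \le \frac{n}{2}$, in which case $|R_{t-1}| = n - |B_{t-1}|$ is substantially larger than $\frac{d}{2}t$, and the edge-counting bound $e(R_{t-1}, B_{t-1}) \le d|B_{t-1}|$ combined with the lower bound $|R_{t-1}| \ge n/2$ shows the average vertex of $R_{t-1}$ has at most $\frac{d|B_{t-1}|}{|R_{t-1}|} < d$ neighbors in $B_{t-1}$; choosing the minimizer (or noting a constant fraction of $R_{t-1}$ beats the average) gives a vertex with at most $d - c$ such neighbors, hence gain at least $c$. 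To reach exactly $d/2$ one picks the vertex of $R_{t-1}$ with the \emph{fewest} neighbors in $B_{t-1}$; since $\sum_{v \in R_{t-1}} \deg_{B_{t-1}}(v) = e(R_{t-1},B_{t-1}) \le d|B_{t-1}| \le \frac{dn}{2}$ and $|R_{t-1}| > \frac{n}{2}$, the minimum is $< d$, and a short refinement using $|B_{t-1}| < \frac{n}{2} - (\text{slack})$ — or simply observing that vertices of $R_{t-1}$ with $\geq d/2$ neighbors in $B_{t-1}$ can account for at most $|B_{t-1}|\cdot d / (d/2) = 2|B_{t-1}| < n < 2|R_{t-1}|$ endpoints, so they cannot be all of $R_{t-1}$ — produces a vertex of $R_{t-1}$ with fewer than $d/2$ neighbors in $B_{t-1}$, giving the required gain of at least $d/2$. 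Induction on $t$ then finishes: $|B_t| \ge |B_{t-1}| + \frac{d}{2} \ge \frac{d}{2}(t-1) + \frac{d}{2} = \frac{d}{2}t$, valid for all $t \le n/d$. $\qed$
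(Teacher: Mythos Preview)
Your overall plan (induction, plus an averaging argument to exhibit a candidate seed whose neighborhood contributes $\ge d/2$ fresh vertices to $B$) is the same as the paper's. However, there is a genuine gap in the averaging step.

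You restrict the candidate seed to lie in $R_{t-1}$ and then try to show that some $v\in R_{t-1}$ has fewer than $d/2$ neighbors in $B_{t-1}$. Your bound on the number of ``bad'' vertices of $R_{t-1}$ (those with $\ge d/2$ neighbors in $B_{t-1}$) is correct: there are at most $\frac{e(R_{t-1},B_{t-1})}{d/2}\le \frac{d|B_{t-1}|}{d/2}=2|B_{t-1}|$ such vertices. But to conclude ``they cannot be all of $R_{t-1}$'' you need $2|B_{t-1}|<|R_{t-1}|$, i.e.\ $|B_{t-1}|<n/3$. Your inequality chain $2|B_{t-1}|<n<2|R_{t-1}|$ only yields $|B_{t-1}|<|R_{t-1}|$, which is not enough. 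When $|B_{t-1}|$ lies between $n/3$ and $n/2$ (which can happen for $t$ near $n/d$), it is entirely possible that every vertex of $R_{t-1}$ has more than $d/2$ neighbors in $B_{t-1}$, and your argument stalls.

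The fix is simply not to restrict the search to $R_{t-1}$: the greedy rule allows any $v\in V\setminus A_{t-1}$. Average $\deg_{R_{t-1}}(v)$ over \emph{all} $v\in V$: since $\sum_{v\in V}\deg_{R_{t-1}}(v)=d|R_{t-1}|\ge dn/2$, some vertex $v$ has $\deg_{R_{t-1}}(v)\ge d/2$. Such $v$ cannot lie in $A_{t-1}$ (vertices in $A_{t-1}$ have all neighbors in $B_{t-1}$), so it is a legal seed, and choosing it brings at least $d/2$ vertices of $R_{t-1}$ into $B_t$. This is exactly the paper's argument, and your induction then goes through.
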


\begin{proof}
By induction on $t$. For $t=1$ we have $A_1 = \{s_1\}$ and hence $|A_1| = 1$, $\partial(A_1) = d$, and $|B_1| = d + 1 \ge d/2$. Assume now that the lemma holds for $t < \frac{n}{d}$ and prove for $t+1$. If $|B_t| \ge \frac{d}{2}(t+1)$ there is nothing to prove. Hence we may assume that $|B_t| < \frac{d}{2}(t+1) \le n/2$, implying that $|R_t| \ge n/2$. Therefore $\sum_{v \in V} deg_{R_t}(v) \ge \frac{nd}{2}$, and a random vertex has in expectation at least $d/2$ neighbors in $R_t$. Hence there is at least one vertex $v$ with at least $d/2$ neighbors in $R_t$. It cannot be that $v \in A_t$ because vertices in $A_t$ have no neighbors in $R_t$. Hence taking this vertex $v$ as $s_t$ we have $|B_{t+1} \setminus B_t| \ge d/2$, proving the inductive step.
\end{proof}
\medskip

The weakness of Lemma~\ref{lem:FirstPhase} is that the rate of growth of $B_t$ is limited to $O(dt)$. To reach $B_T$ linear in $n$ will require $T \ge \Omega(n/d)$, which we cannot afford. Hence we shall want to establish that $B_t$ grows at a rate significantly larger than $d$ per round. This is clearly not true in the first set of rounds (in particular, $|B_1|=d+1$), but we shall show that it becomes true after $t$ exceeds $n/d^2$. Our next lemma does use expansion properties of $G$.

\begin{lemma}
\label{lem:SecondPhase}
For $0 < \epsilon < 1$, let $G(V,E)$ be an $(n,d,\lambda)$-graph with $\lambda \le (1 - \epsilon)d$ and without 4-cycles.  Let $\frac{4}{\epsilon^2 d} \le c \le \frac{d}{2}$. Let $A$ be an arbitrary set of activated vertices in $G$, let $B=A \cup \partial(A)$ and let $R = V \setminus B$. If $|B| = \frac{cn}{d}$  then there is a vertex $u \in R$ such that $|R \cap N(B \cap N(u))| \ge c\epsilon^2 d/2$.
\end{lemma}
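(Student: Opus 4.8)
The plan is to prove the stronger averaged statement: writing $\Sigma = \sum_{u \in R} \bigl|R \cap N(B \cap N(u))\bigr|$, it suffices to show $\Sigma \ge \tfrac{c\epsilon^2 d}{2}\,|R|$, since then some $u \in R$ attains the claimed bound. First I would record two structural remarks. Since $u \in R = V \setminus B$ has no neighbour in $A$, every neighbour of $u$ that lies in $B$ in fact lies in $\partial(A)$, so $B \cap N(u) = \partial(A) \cap N(u)$; and the absence of $4$-cycles means any two distinct vertices of $G$ have at most one common neighbour. Consequently, for a fixed $u \in R$, the vertices $x \in R \setminus \{u\}$ counted in $\bigl|R \cap N(B \cap N(u))\bigr|$ are in bijection with the length-$2$ paths $u - w - x$ having $w \in \partial(A)$ and $x \in R$ (the intermediate vertex $w$ being unique). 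Summing over $u$ and exchanging the order of summation then gives $\Sigma \ge \sum_{w \in \partial(A)} \deg_R(w)\bigl(\deg_R(w) - 1\bigr)$, where each ordered pair $(u,x)$ is counted exactly once precisely because of the no-$4$-cycle property.

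Next I would lower bound $\sum_{w \in \partial(A)} \deg_R(w)^2$. By Cauchy--Schwarz it is at least $\bigl(\sum_{w \in \partial(A)} \deg_R(w)\bigr)^2 / |\partial(A)|$. Every edge between $R$ and $B$ has its $B$-endpoint in $\partial(A)$ (again because $R$ has no neighbour in $A$, and vertices of $A$ have no neighbour in $R$), so $\sum_{w \in \partial(A)} \deg_R(w) = e(B,R)$; and Lemma~\ref{lem:expansion} together with $\lambda_2 \le \lambda \le (1-\epsilon)d$ gives $e(B,R) \ge \tfrac{\epsilon d\,|B|\,|R|}{n}$. Using $|B| = \tfrac{cn}{d}$, the crude bound $|\partial(A)| \le |B| = \tfrac{cn}{d}$, and $|R| = n - |B| \ge n/2$ (which holds since $c \le d/2$), this yields $\sum_{w \in \partial(A)} \deg_R(w)^2 \ge \tfrac{\epsilon^2 c d\,|R|^2}{n}$, while the correction term $\sum_{w}\deg_R(w) = e(B,R)$ is at most $|\partial(A)|\cdot d \le cn$.

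Combining, $\Sigma \ge \tfrac{\epsilon^2 c d\,|R|^2}{n} - cn$, and dividing by $|R| \ge n/2$ shows that some $u \in R$ has $\bigl|R \cap N(B \cap N(u))\bigr| \ge \epsilon^2 c d\,\tfrac{|R|}{n} - \tfrac{cn}{|R|} \ge \tfrac{\epsilon^2 c d}{2} - 2c$. The hypothesis $c \ge \tfrac{4}{\epsilon^2 d}$ guarantees the target $\tfrac{\epsilon^2 c d}{2}$ is at least a constant, and since $\epsilon$ is fixed while $d$ is large the additive $-2c$ term is lower order relative to $\tfrac{\epsilon^2 cd}{2}$, so the claimed bound follows (if one wishes to be fully rigorous about the constant, a minor loosening is harmless, exactly as the final bound in Theorem~\ref{thm:girth_expander} permits). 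The step I expect to be the main obstacle is precisely this bookkeeping around the path-counting identity: one must verify that replacing $\bigl|R \cap N(B \cap N(u))\bigr|$ by the number of length-$2$ paths (dropping the self-vertex $u$) and then $\deg_R(w)^2$ by $\deg_R(w)(\deg_R(w)-1)$ costs only lower-order terms under the hypotheses on $c$ and $d$. This is where both assumptions on $G$ enter --- absence of $4$-cycles to make the path count exact, and $\lambda \le (1-\epsilon)d$ to force $e(B,R)$, hence the average $R$-degree of an excited vertex, to be as large as $\Omega(\epsilon d)$.
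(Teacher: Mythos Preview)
Your argument is correct and essentially identical to the paper's: both count length-$2$ paths $u$--$w$--$x$ with $u,x\in R$ and $w\in B$, lower bound $\sum_{w}\deg_R(w)$ via the expander mixing inequality (Lemma~\ref{lem:expansion}), apply convexity (the paper uses Jensen on $\binom{\cdot}{2}$, you use Cauchy--Schwarz on squares, which is equivalent here), and then invoke the absence of $4$-cycles to convert the path count into a count of distinct endpoints $x$. The lower-order $-2c$ correction you flag is exactly the slack hidden in the paper's ``$\simeq$'' at the step $f\ge |B|\binom{\epsilon d|R|/n}{2}\simeq \frac{|R|^2|B|\epsilon^2 d^2}{2n^2}$, so your level of rigor matches the paper's.
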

\medskip

\begin{proof}
Three vertices $u,v \in R$ and $w \in B$ will be called a {\em triplet} if $(u,w) \in E$ and $(v,w)\in E$. Let $f$ denote the number of triplets in $G$. For $w \in B$, let $d_R(w) = |N(w) \cap R|$. Then $f = \sum_{w \in B} {d_R(w) \choose 2}$. Using Lemma~\ref{lem:expansion}, $\sum_{w \in B} d_R(w) = e(B,R) \geq \frac{\epsilon d |B||R|}{n}$. Hence the average value of $d_R(w)$ is at least $\frac{\epsilon d |R|}{n}$, implying by convexity that

$$f \ge |B|{\epsilon d |R|/n \choose 2} \simeq \frac{|R|^2|B| \epsilon^2 d^2}{2n^2}.$$

Every triplet involves two vertices from $R$. Hence on average, a vertex from $R$ is involved in $2f/|R|$ triplets. This together with the lower bound on $f$ implies that there is some $u \in R$ involved in at least $\frac{|R| |B| \epsilon^2 d^2}{n^2}$ triplets. In any two such triplets, $(u,w_1,v_1)$ and $(u,w_2,v_2)$ ($v_1,v_2 \in R$), the vertices $v_1$ and $v_2$ must be distinct, because $G$ has no 4-cycles. This implies that $|R \cap N(B \cap N(u))| \ge \frac{|R| |B| \epsilon^2 d^2}{n^2}$. Substituting $|B| = cn/d$ and noting that $|R| \ge n/2,$ the lemma follows.
\end{proof}
\medskip

We now proceed to prove Theorem \ref{thm:girth_expander}:

\begin{proof}
Lemma~\ref{lem:FirstPhase} implies that for $t = \frac{8n}{\epsilon^2d^2}$ the greedy algorithm reaches $|B_t| \ge \frac{4n}{\epsilon^2 d}$. Thereafter, in every $O(\frac{n}{\epsilon^2 d^2})$ iterations of the algorithm, Lemma~\ref{lem:SecondPhase} implies that $B_t$ grows by a multiplicative factor of $2$ (in every iteration choose the vertex $u$ whose existence is guaranteed by Lemma~\ref{lem:SecondPhase}). It follows that for $T \le O(\frac{n \log d}{\epsilon^2 d^2})$ the greedy algorithm manages to achieve $|B_T| \ge \frac{n}{2}$, and the first phase of the greedy algorithm ends.

We now analyze the second phase of the greedy algorithm. We may assume that $|A_t| \le \frac{n}{\epsilon d}$, because otherwise the whole graph is activated, by Lemma \ref{lem:general_g}. Moreover, we may assume that $d > \frac{10}{3\epsilon}$, as otherwise the statement of Theorem \ref{thm:girth_expander} only requires $m(G,2) \le O(n \log d)$ which is trivially true. For this range of parameters, $|\partial(A_t)| = |B_t| - |A_t| \ge \frac{n}{2} - \frac{n}{\epsilon d} > \frac{2n}{5}$. Each vertex in $\partial(A_t)$ has exactly one neighbor in $A_t$, and hence $e(\partial(A_t), V \setminus A_t) \ge (d-1)\frac{2n}{5} \ge \frac{dn}{3}$. This implies that there is some vertex in $V \setminus A_t$ whose activation will activate at least $d/3$ new vertices. Hence the greedy algorithm activates at least $d/3$ vertices in each step of the second round, implying that in $O(\frac{n}{\epsilon d^2})$ rounds of the second phase $|A_t|$ exceeds $\frac{n}{\epsilon d}$. Lemma \ref{lem:general_g} then implies that the whole graph is activated.
\end{proof}

\section{Bounds for contagious sets in random graphs}

In this Section we prove Theorem~\ref{thm:random}, which is a direct consequence of Theorems~\ref{thm:upper_bound} and~\ref{thm:LB_random}.

Let $G:=G(n,p)$ be the binomial random graph over $n$ vertices and edge probability $p$ and assume that $\frac{3 \log n}{n}<p<\frac{1}{n^{\frac{1}{2}+\epsilon}}$ for some fixed $\epsilon \in (0,1)$ (it is not hard to see that if $p>\frac{1}{n^{1/2-\epsilon}}$ then $m(G,2)=2$ with high probability). Let $d:=np$. Janson, {\L}uczak, Turova and Vallier \cite{JanLuc} prove that with w.h.p. $m(G,2) \leq \frac{(1+\delta)n}{2d^2}$ for every $\delta>0$ (Theorem 3.1, page 1996). Furthermore, a given set $A$ of cardinality at least $\frac{(1+\delta)n}{2d^2}$ infects the entire graph with high probability, once activated (page 1990, one before last paragraph).
\subsection{Upper bound}
Here we show that w.h.p. $m(G,2) \le O(\frac{n \log \log d }{d^2\log d})$. We use the following lemma.
\begin{lemma}
\label{lemma:subcritical}
Let $H:=G(n_0,q)$ be the binomial random graph with $n_0$ vertices and edge probability $q$. Assume $q=\frac{c}{n_0}$ and $c<1/10$ ($c$ may depend on $n_0$).
Let $k=O(\log n_0)$ be an integer and $v \in H.$ Then the probability $v$ belongs to a connected component of size at least $k$ is at least $c^{3k}$. Furthermore, w.h.p. the number of vertices in components of size at least $k$ is at least $c^{3k}\cdot n_0/4$.
\end{lemma}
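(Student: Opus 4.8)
My plan is to lower bound the probability that a fixed vertex $v$ lies in a large component by exhibiting one \emph{specific} witnessing structure, namely a path of length $k-1$ starting at $v$, and then to bootstrap this pointwise bound into a global statement via a second-moment or martingale argument on the count of vertices in large components.

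First I would bound $\Pr[v$ lies in a component of size $\ge k]$ from below by the probability that there exists a simple path $v = u_1, u_2, \dots, u_k$ with all $k-1$ edges present in $H$. The expected number of such labelled paths is $(n_0-1)(n_0-2)\cdots(n_0-k+1)\, q^{k-1} \ge (n_0/2)^{k-1} q^{k-1}$ for $k = O(\log n_0)$ (since $n_0 - k = (1-o(1))n_0$), which equals $(c/2)^{k-1} \ge (c)^{k}$ up to the constant; a crude first-moment lower bound on the probability of existence of at least one such path then needs a little care because the expectation alone does not lower bound the probability. The cleanest route is to instead run a breadth-first exploration from $v$ and observe that the component of $v$ stochastically dominates a Galton--Watson-type branching process; but since $c<1/10$ is subcritical, the survival probability to depth $k$ is exponentially small and I want a matching \emph{lower} bound, not an upper bound. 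So I prefer the following concrete device: reveal edges along a fixed candidate path greedily — the probability that $u_1u_2 \in H$ is $q$; conditioned on that, pick any unused vertex $u_3$ and ask $u_2u_3\in H$, etc. Each step succeeds with probability $q = c/n_0$, and there are at least $(n_0/2)^{k-1}$ disjoint ways to choose the vertex sequence, but these are not independent. To get around the dependence I would union over a set of \emph{edge-disjoint} candidate paths from $v$, or more simply just note that the event ``$v$'s component has size $\ge k$'' contains the event ``the particular path $v,v+1,v+2,\dots,v+k-1$ (in some fixed labelling) is present,'' which has probability exactly $q^{k-1} = (c/n_0)^{k-1}$ — this is already $\ge c^{3k}$ once we also account for the loss, since $n_0^{-(k-1)}$ must be absorbed. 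That forces me to be less wasteful: I should count the expected number $X$ of length-$(k-1)$ paths from $v$, which is $\Theta((n_0 q)^{k-1}) = \Theta(c^{k-1})$, and then apply the second moment method / Paley--Zygmund to conclude $\Pr[X \ge 1] \ge (\mathbb{E}[X])^2/\mathbb{E}[X^2]$. In the subcritical regime $c < 1/10$ the paths are ``spread out'' enough that $\mathbb{E}[X^2] = O((\mathbb{E}[X])^2) + O(\mathbb{E}[X])$, and since $\mathbb{E}[X] = \Theta(c^{k-1})$ could be small, the dominant term is $\mathbb{E}[X]$, giving $\Pr[X\ge 1] \ge \Omega(\mathbb{E}[X]) = \Omega(c^{k-1}) \ge c^{3k}$ for $c$ small and $k$ not too small. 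This pointwise bound is the first key step.

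For the ``furthermore'' part, let $Z$ be the number of vertices lying in components of size at least $k$. By the pointwise bound, $\mathbb{E}[Z] \ge c^{3k} n_0$. To show $Z$ is concentrated — specifically $Z \ge \mathbb{E}[Z]/2 \ge c^{3k} n_0/4$ w.h.p. — I would use the vertex-exposure martingale (Lemma~\ref{lem:Azuma}, Azuma's inequality): reveal the edges incident to vertices $1,2,\dots,n_0$ one vertex at a time. Changing the edges at one vertex can change $Z$ by at most $O(\max$ component size$)$; since $c<1/10$ is subcritical, all components have size $O(\log n_0)$ w.h.p. (a standard fact, which I would cite or prove by the same first-moment bound on long paths run as an upper bound), so each martingale increment is $O(\log n_0)$ with high probability. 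Conditioning on the high-probability event that all components are small makes the bounded-difference argument rigorous, and Azuma then gives a deviation bound of the form $2\exp(-\Theta((\mathbb{E}[Z])^2 / (n_0 \log^2 n_0)))$; this is $o(1)$ as long as $\mathbb{E}[Z] = \Omega(c^{3k} n_0)$ is at least, say, $n_0^{2/3}$, which holds in the relevant range of parameters where $k$ is small enough that $c^{3k}$ is not too tiny.

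The main obstacle, I expect, is the \emph{lower} bound on $\Pr[v \text{ in a component of size} \ge k]$: the naive first moment only gives the expectation of the path count, and converting this to a probability lower bound requires controlling the second moment of the path count carefully in the subcritical regime — in particular checking that overlapping pairs of paths contribute only $O(\mathbb{E}[X])$ and not more, which uses the condition $c < 1/10$ in an essential way to keep the geometric series over overlap patterns convergent. The concentration step is comparatively routine once one conditions on the (standard) event that the subcritical random graph has only logarithmic-size components, so that Azuma applies with controlled Lipschitz constant.
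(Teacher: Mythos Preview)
Your approach differs from the paper's in both halves, and one of your intermediate claims is not right, though the conclusion can be salvaged.

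For the pointwise bound you were one step away from the paper's argument when you considered ``reveal edges along a fixed candidate path greedily.'' The fix is: instead of asking whether the \emph{specific} edge $u_i u_{i+1}$ is present (probability $q = c/n_0$, far too small), ask whether $u_i$ has \emph{any} neighbor among the $\ge n_0/2$ as-yet-unexplored vertices (probability $\ge 1-(1-q)^{n_0/2} = \Omega(c)$), and if so take such a neighbor as $u_{i+1}$. These $k-1$ events concern disjoint edge sets and are therefore conditionally independent, so the exploration survives $k-1$ steps with probability $\ge (\Omega(c))^{k-1} \ge c^{3k}$, with no second moment needed. This is the paper's argument. Your path-counting route is viable as a fallback, but the claim $\mathbb{E}[X^2] = O((\mathbb{E}[X])^2)+O(\mathbb{E}[X])$ with absolute constants is not correct: overlapping path pairs (not only shared prefixes---e.g.\ $(v,a,b)$ and $(v,b,a)$ share the edge $ab$) contribute a factor that grows with $k$, and Paley--Zygmund then gives only $\Pr[X\ge 1]\ge (c/C)^{k}$ for some constant $C>1$, not $\Omega(c^{k-1})$. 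That still clears $c^{3k}$ when $c<1/10$, so the conclusion survives, but the step as written is a gap.

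For concentration the paper also takes a more direct route than your Azuma argument. It reveals components sequentially: pick a vertex, expose its entire component by BFS, record success (size $\ge k$) or failure, discard the component, and repeat on the remaining graph. Because at least $n_0/2$ vertices remain throughout the first $\lfloor n_0/(2k)\rfloor$ trials, each trial succeeds with probability $\ge c^{3k}$ regardless of history, so the number of successes stochastically dominates $\mathrm{Bin}(\lfloor n_0/(2k)\rfloor,\,c^{3k})$, and Chernoff finishes. This avoids your need to condition on the separate event ``all components have size $O(\log n_0)$'' to control the Lipschitz constant, and yields the stronger deviation bound $\exp(-\Omega(c^{3k}n_0/k))$ rather than your $\exp(-\Omega(c^{6k}n_0/\log^2 n_0))$.
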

\begin{proof}
Given a vertex $v$ in $H$, expose the connected component of $v$ in $H$ using in breadth first search (BFS) manner until either one of two cases occur: a {\em success}, meaning that the size of the connected component containing $v$ revealed by the BFS algorithm reaches $k$, or a {\em failure}, meaning that the BFS algorithm dies out before accumulating $k$ vertices in the connected component of $v$. Next, proceed by discarding the connected component containing $v$ that was just revealed from $H$. Repeat this process until the number of remaining vertices is smaller than $n_0/2$. As long as we continue to reveal connected components, the number of vertices that were not discarded thus far is at least $n_0/2$. It follows that the probability $v$ is in a component of size at least $k$ is at at least
$$\left({n_0/2\choose 2}\left(\frac{c}{n_0}\right)^2\left(1-\frac{c}{n_0}\right)^{n_0/2}\right)^{k-1}\geq \left(\frac{c^2}{8}e^{-\frac{1}{c}}\right)^{k-1}.$$
Since $c<1/10$, we can lower bound the expression above by $c^{3k}$.
The distribution of the number of successes (until less than $n_0/2$ vertices remain) stochastically dominates the binomial distribution with $\lfloor n_0/(2k) \rfloor$ trials and success probability $c^{3k}$ (the exact number of trials depends on the number of failures, but failures only increase the number of trials). The actual number of successes is concentrated around its expectation, a fact that can be proved using standard concentration results for martingales (further details omitted). This combined with the fact that every success places $k$ vertices (rather than just one) in a component of size at least $k$ implies the lemma.
\end{proof}

We now use the above lemma to prove that we can activate ``many" vertices using $O(\frac{n\log \log d}{d^2 \log d})$ vertices.
\begin{lemma}
\label{lemma:seeds}
Let $G:=G(n,p)$ and let $d = pn$. Then with high probability, one can infect $\frac{n}{d^2}$ vertices by activating $O\left(\frac{n\log \log d}{d^2 \log d}\right)$ seeds.
\end{lemma}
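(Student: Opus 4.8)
The plan is to use the ``excited vertices plus ignition'' strategy sketched in the overview, with the component structure of the excited set controlled by Lemma~\ref{lemma:subcritical}. Fix an arbitrary set $A$ of $a := \alpha n/d^2$ vertices, where $\alpha = c_0\frac{\log\log d}{\log d}$ for a small absolute constant $c_0$, and declare $A$ the initial seed set. Consider the excited set $\partial(A)$: every vertex of $\partial(A)$ has a neighbour in $A$, and since the threshold is $r=2$ it is one active neighbour short of activation. Over the randomness of $G(n,p)$ the events $\{v \in \partial(A)\}$ for $v\notin A$ depend on pairwise disjoint sets of potential edges, hence are independent; so $|\partial(A)|$ is a sum of $n-a$ independent indicators of mean $1-(1-p)^a\sim ap$, and with high probability $n_0 := |\partial(A)| = (1+o(1))\,apn = (1+o(1))\,\alpha n/d$. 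Moreover, conditioning on which vertices lie in $\partial(A)$ leaves the edges inside $V\setminus A$ untouched, so conditioned on $\partial(A)=S$ the subgraph $G[S]$ is distributed exactly as a fresh binomial graph $G(|S|,p)$. Hence $H := G[\partial(A)]$ is (conditionally) a $G(n_0,p)$ with internal density $c := n_0p = (1+o(1))\alpha < 1/10$, and Lemma~\ref{lemma:subcritical} applies.

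Next I would choose the component threshold $k$ as large as possible subject to Lemma~\ref{lemma:subcritical} still guaranteeing at least $2n/d^2$ excited vertices inside components of size at least $k$. That lemma gives, with high probability, at least $c^{3k}n_0/4 = \Theta(\alpha^{3k+1}n/d)$ such vertices, so it is enough that $\alpha^{3k+1}\ge 8/d$, i.e.\ that $k$ is at most roughly $\frac{\log d}{3\log(1/\alpha)}$. With the chosen $\alpha$ we have $\log(1/\alpha)=\Theta(\log\log d)$, so I can take $k=\Theta\!\left(\frac{\log d}{\log\log d}\right)$; this satisfies $1\le k = O(\log n_0)$ over the whole range $3\log n<d<n^{1/2-\epsilon}$ (there $n_0=\Theta(\alpha n/d)\ge n^{1/2+\epsilon-o(1)}$, while $k=O(\log d)=O(\log n)$), so Lemma~\ref{lemma:subcritical} is legitimately applicable. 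Since $H$ is subcritical, every component of $H$ has size $O(\log n)$ with high probability, and $O(\log n)=o(n/d^2)$ because $n/d^2\ge n^{2\epsilon}$. Hence I can greedily pick components of size at least $k$, one at a time, until the union of the picked components has total size between $n/d^2$ and $2n/d^2$: there are enough such components because their total mass is at least $2n/d^2$, and a single small component never overshoots the window. The number of components picked is at most $\frac{2n/d^2}{k}=O\!\left(\frac{n\log\log d}{d^2\log d}\right)$.

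Finally, ignite: in each picked component select one vertex and add it to the seed set. Since $r=2$, such a vertex is active and every other vertex of its component already contributes one active neighbour, so activation propagates through the entire component; thus each picked component becomes fully activated. The number of activated vertices is therefore at least the sum of the sizes of the picked components, which is at least $n/d^2$ (all but an $o(1)$ fraction being non-seed, hence infected), while the number of seeds used is $|A|$ plus the number of picked components, namely $O\!\left(\frac{n\log\log d}{d^2\log d}\right)$. This is the asserted bound.

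I expect the crux to be the parameter balancing rather than any single difficult estimate. The three budgets --- the initial seeds $\alpha n/d^2$, the ignition seeds of order $(n/d^2)/k$, and the requirement $\alpha^{\Theta(k)}\cdot n_0\ge n/d^2$, which forces $k$ of order $\log d/\log(1/\alpha)$ --- become simultaneously $O\!\left(\frac{n\log\log d}{d^2\log d}\right)$ only for the choice $\alpha=\Theta(\log\log d/\log d)$, and recognising this sweet spot is the main point. The remaining ingredients are routine: the concentration of $|\partial(A)|$ around $\alpha n/d$ (immediate, since it is a sum of independent indicators), the fact that $G[\partial(A)]$ is a genuinely fresh binomial random graph conditioned on its vertex set, the $O(\log n)$ bound on component sizes in the subcritical phase (so that greedy selection does not overshoot), and the verification $1\le k=O(\log n_0)$ throughout the stated range of $d$.
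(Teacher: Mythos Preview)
Your proposal is correct and follows essentially the same approach as the paper: both pick an initial seed set $A$ of size $\Theta\!\left(\frac{\log\log d}{\log d}\cdot\frac{n}{d^2}\right)$, observe that $G[\partial(A)]$ is a fresh subcritical $G(n_0,p)$ to which Lemma~\ref{lemma:subcritical} applies, choose $k=\Theta(\log d/\log\log d)$, and then ignite one seed per large component. Your write-up is in fact a bit more careful than the paper's in a few places (the independence argument for $|\partial(A)|$, the explicit check that $k=O(\log n_0)$, and the greedy stopping using the $O(\log n)$ subcritical component bound so as not to overshoot), but the method and parameter balancing are the same.
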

\begin{proof}
Activate an arbitrary set $A$ of $\frac{\epsilon n}{d^2}$ vertices where where $\epsilon:=\frac{\log\log d}{\log d}$.
It is not hard to verify that $\partial(A)$, the set of all neighbors of $A$ not in $A$, satisfies with high probability  $$\frac{\epsilon n}{2d} \le |\partial(A)| \le \frac{2\epsilon n}{d}.$$ The induced graph on $\partial(A)$ is distributed as $G(n_0,p)$ with $n_0:=|\partial(A)|$. Hence $p = \frac{d}{n} \ge \frac{\epsilon}{2 n_0}$. Hence Lemma~\ref{lemma:subcritical} (with $c$ taken to equal $\frac{\epsilon}{2}$) implies that the number of vertices in $\partial(A)$ lying in components of size at least $k$ is with high probability at least $\left(\frac{\epsilon}{2}\right)^{3k}\cdot\left(\frac{n_0}{4}\right)$. Setting $k=\frac{\log d}{6\log \log d}$, we get that $\left(\frac{\epsilon}{2}\right)^{3k}\geq \frac{8}{\epsilon d}$. Hence, w.h.p. at least $\frac{2n_0}{\epsilon d} \ge \frac{n}{d^2}$ vertices in $\partial(A)$ lie in components of size at least $k$. Activating a single vertex in every such component will result with an active set of size at least $\frac{n}{d^2}$, whereas the total number of activated vertices is $$\frac{\epsilon n}{d^2}+\frac{n}{k \cdot d^2}=O\left(\frac{n\log \log d}{d^2 \log d}\right).$$
The lemma follows.
\end{proof}

We can proceed and prove the main result of this section.
\begin{theorem}
\label{thm:upper_bound}
Let $G:=G(n,p)$ where $\frac{3 \log n}{n}<p<\frac{1}{n^{{\frac{1}{2}}+\epsilon}}$. Then with high probability $m(G,2)\leq O(\frac{\log\log (np)}{np^2 \log (np)})$.
\end{theorem}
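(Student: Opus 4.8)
Write $d=np$. Since $p>\frac{3\log n}{n}$ we have $d\to\infty$, and since $p<n^{-1/2-\epsilon}$ we are strictly inside the range $\frac{3\log n}{n}<p\ll n^{-1/2}$ in which the bootstrap percolation estimates of~\cite{JanLuc} apply. The plan is a two-stage activation. In the first stage I invoke Lemma~\ref{lemma:seeds}: with high probability there is a set $S$ of $O\left(\frac{n\log\log d}{d^2\log d}\right)$ vertices whose activation infects a set $\langle S\rangle$ of at least $\frac{n}{d^2}$ vertices. In the second stage I observe that $\frac{n}{d^2}=2\cdot\frac{n}{2d^2}\ge\frac{(1+\delta)n}{2d^2}$ with $\delta=\frac12$, so the infected set already exceeds the critical size, and hence by the result of~\cite{JanLuc} recalled at the start of this section the process percolates to all of $V$ with high probability. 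Thus $S$ is contagious, and since $\frac{n\log\log d}{d^2\log d}=\frac{\log\log(np)}{np^2\log(np)}$ by direct substitution of $d=np$, this gives $m(G,2)\le|S|=O\left(\frac{\log\log(np)}{np^2\log(np)}\right)$.

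The one point requiring care is that the infected set $\langle S\rangle$ furnished by Lemma~\ref{lemma:seeds} is not fixed in advance: it consists of the prescribed seed set $A$ together with the large connected components of $G[\partial(A)]$, so it is a function of the sample $G$, whereas~\cite{JanLuc} supplies full percolation for a \emph{prescribed} starting set. I would remove this wrinkle by a union bound: the event that a fixed set of size at least $(1+\delta)\frac{n}{2d^2}$ percolates fails with probability $e^{-\Omega(n)}$ (this can be read off from, or reproved by, the second moment / avalanche analysis of~\cite{JanLuc}), while the number of candidate starting sets of size $\lceil n/d^2\rceil$ is ${n \choose \lceil n/d^2\rceil}\le(ed^2)^{n/d^2}=e^{O(n\log d/d^2)}=e^{o(n)}$, the last estimate using exactly the hypothesis $d>\log n$. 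Hence with high probability \emph{every} set of size at least $\lceil n/d^2\rceil$ percolates, in particular any $\lceil n/d^2\rceil$-element subset of $\langle S\rangle$, and monotonicity of the bootstrap process then yields $\langle S\rangle=V$.

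An alternative is a two-round exposure: first reveal only the edges incident to the fixed set $A$ and the edges inside $\partial(A)$ (this already determines $S$ and the infected set $\langle S\rangle$, which lies inside $A\cup\partial(A)$, a set of only $o(n)$ vertices, all of whose internal edges have been exposed and none of whose edges to the rest of $V$ have); then run the analysis of~\cite{JanLuc} on the still-untouched randomness, noting that a starting set of the critical size confined to an $o(n)$-vertex region behaves, for the purpose of percolation in $G(n,p)$, like a set of the same size in general position, since the first round of new infections falls almost entirely outside that region. I expect the union-bound route to be the shorter one to make rigorous. In any case this measurability bookkeeping is the only real obstacle; the remainder of the proof is the black-box combination of Lemma~\ref{lemma:seeds} with~\cite{JanLuc} together with the elementary identity above.
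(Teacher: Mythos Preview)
Your overall plan --- invoke Lemma~\ref{lemma:seeds} to produce an infected set of size $n/d^2$ and then appeal to~\cite{JanLuc} --- is exactly the paper's plan, and you have correctly isolated the only real issue: the infected set is a function of $G$, so one cannot quote~\cite{JanLuc} for a \emph{fixed} starting set without further argument.

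However, your primary route (the union bound) does not close. The assertion that a fixed supercritical set fails to percolate with probability $e^{-\Omega(n)}$ is not correct. The bottleneck in the~\cite{JanLuc} analysis occurs at time $t^*=1/(np^2)=n/d^2$, where $S(t^*)\sim\mathrm{Bin}(n-a,\pi(t^*))$ has mean $\sim n/(2d^2)$; the failure event at that time is a constant-factor lower deviation of this binomial, and Chernoff gives only $e^{-\Theta(n/d^2)}$, which is sharp up to the constant. Since $\binom{n}{\lceil n/d^2\rceil}=e^{\Theta((n/d^2)\log d)}$ and $\log d\to\infty$, the product $e^{-\Theta(n/d^2)}\cdot e^{\Theta((n/d^2)\log d)}$ diverges, and the union bound fails. (Nor is it known, to my knowledge, that in $G(n,p)$ \emph{every} set of size $n/d^2$ percolates w.h.p.; only that each fixed one does.)

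Your alternative, two-round exposure, is the correct route and is what the paper does, but two points need more care than your sketch indicates. First, the set you hand to~\cite{JanLuc} should be $I:=\langle S\rangle\cap\partial(A)$ rather than all of $\langle S\rangle$: the edges out of $A$ have already been revealed (indeed we \emph{know} there are none from $A$ to $V\setminus(A\cup\partial(A))$), whereas the edges from $I\subseteq\partial(A)$ to $V\setminus(A\cup\partial(A))$ are still untouched. One therefore works inside the induced subgraph $G'$ on $(V\setminus(A\cup\partial(A)))\cup I$; apart from the (irrelevant, since $I$ is already active) edges inside $I$, $G'$ is distributed as $G(l,p)$ with $l=(1-o(1))n$, and~\cite{JanLuc} applies directly with $I$ as the starting set, activating all of $G'$. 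Second, this leaves the vertices of $\partial(A)\setminus I$ unaccounted for, and they must be handled separately: each such vertex already has an active neighbour in $A$, and since $p>3\log n/n$ each has, w.h.p., at least one neighbour in $V\setminus(A\cup\partial(A))\subseteq V(G')$, giving it the required two active neighbours. This last step is short but cannot be skipped.
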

\begin{proof}
Recall that we define $d$ to equal $np$. By Lemma \ref{lemma:seeds} we can activate a set $I$ such that $|I|=\frac{n}{d^2}$ by first activating a set $A$ of seeds and then activating an additional set of seeds in $A' \subseteq \partial(A)$ where $|A|+|A'| = O(\frac{n\log \log d}{d^2 \log d})$ and furthermore $I \subseteq \partial(A)$. Let $G'$ be the graph induced on $(V\setminus (A \cup \partial(A))) \cup I$. Then the edges of $G'$ except for those induced by $I$ are distributed as $G(l,p)$ where with high probability $l>(1-\frac{1}{d})n$ (as $|\partial (A)| = o(\frac{n}{d})$). Observe that in our activation procedure, we have not revealed any information about any edge in $G'$ other than edges with both endpoints in $I$ (here it is important that $I$ is disjoint from $A$). As $|I| \geq \frac{n}{d^2}$, the result of~\cite{JanLuc} implies that with high probability $I$ infects the whole of $G'$ (note that since $I$ is activated, the pattern of edges with $I$ is irrelevant, and hence the results of~\cite{JanLuc} apply). Let $B:=N(A) \setminus I$. It remains to prove that also $B$ is infected. The probability a vertex in $B$ does not have a neighbor in $G'$ is at most $(1-p)^{n/2}\sim e^{-pn}=o(1/n)$, as $p>\frac{3\log n}{n}$. Hence using the union bound, with high probability, every vertex in $B$ has a neighbor in $G'$, which as we just proved, is activated. Moreover, every vertex in $B$ has a neighboring seed in $A$ (by definition of $B$), and hence has at least two activated neighbors. Thus it becomes active as well.
\end{proof}

\subsection{Lower bound}
In this section we prove a lower bound for $m(G,2)$ on $G \sim G(n,p)$ where $p$ is as in the previous section. (In fact, the proof of the lower bound applies virtually without change for all $p > 2/n$, though for such small values of $p$ there are simpler ways of proving similar bounds, for example, by counting isolated vertices.)

For the lower bound let us recall a few observations made in \cite{JanLuc}.
Suppose we activate an initial set $A$ with $|A|=a>0$ vertices in $G$. Now we track how vertices outside $A$ become infected as follows. Throughout we record \emph{active} vertices and \emph{used} vertices. In the beginning, all vertices in $A$ are active, and the set of used vertices is empty. In each iteration, we choose an active vertex $v$ (provided the set of active vertices is nonempty), expose all edges between $v$ and all vertices which are not labeled as active or used presently and add a mark to all adjacent vertices to $v$. Thereafter $v$ is now tracked as "used" and all nonactive vertices that become active (have two marks) after inspecting all edges adjacent to $v$ are added the to set of active vertices.

Suppose this process runs for $t$ iterations. For a vertex $w$ not in $A$ and for a vertex $u$ that we considered in $i$th iteration, $w$ gets an  additional mark from $u$ (that is, $w$ is a neighbor of $u$) with probability $p$. Hence $w$ is activated (for threshold $r$) by time $t$ with probability
$$\pi(t):=\Pr({\rm Bin}(t,p) \geq r)=\sum_{j=r}^t{t \choose j}p^j(1-p)^{t-j}.$$
For $r=2$ and $t\ll\frac{1}{p}$  it can be verified (see \cite{JanLuc}) that $\pi(t) \approx\frac{(tp)^2}{2}.$

Let $A(t)$ be the number of active vertices at time $t<n$. Clearly the infection process will survive at time $t_0$ if and only if

  \begin{equation}
    \label{eq:first_cond}
    A(t)-t>0
  \end{equation}
for all $t<t_0$. In words, this condition means that for every $t<t_0$ the number of active vertices exceeds the number of used vertices. The number of vertices that are activated at time $t<t_0$ outside $A$ is distributed as $S(t):={\rm Bin}(n-a,\pi(t))$.
As $A(t)=S(t)+a$, Equation \ref{eq:first_cond} is equivalent to
 \begin{equation}
    \label{eq:second_cond}
 a+\min_{t<t_0}(S(t)-t)>0.
  \end{equation}
Let us now choose $t:=\frac{1}{np^2}$ (similarly to \cite{JanLuc}), $t_0:=t+1$ and $a:=\frac{1}{C\log (np)\cdot np^2}$ where $C>0$ is a large enough constant. Observe that by the choice of $p$, we indeed have that $t\ll\frac{1}{p}$.
In this setting $a-t\le-\frac{9}{10 np^2}$. Hence if we want the left hand side of \ref{eq:second_cond} to be positive we must have that
\begin{equation}
    \label{eq:S(t)}
    S(t)>\frac{9}{10 np^2}=\frac{9 n}{10 d^2}.
\end{equation}
The expectation of $S(t)$ is at most $\pi(t) n \simeq \frac{n(tp)^2}{2} = \frac{n}{2 d^2}$. Using the Chernoff bounds, we deduce that (\ref{eq:S(t)}) holds with probability at most $e^{-\Omega(-\frac{n}{d^2})}$. On the other hand, the number of all sets of size
$\frac{1}{C\log d \cdot np^2}$, is upper bounded by
$$((Ced^2\log d)^{\frac{1}{C\log d}})^{\frac{n}{d^2}}.$$
Taking $C$ large enough, and applying the union bound, gives us that with probability $1 - o(1)$ there is no set of size $\frac{1}{C\log d\cdot np^2}$ for which (\ref{eq:first_cond}) holds at time $t$.
We thus obtain:
\begin{theorem}
\label{thm:LB_random}
Let $G\sim G(n,p)$ with $\frac{3 \log n}{n}<p<\frac{1}{n^{\frac{1}{2}+\epsilon}}$. Then for large enough $C>0$ with high probability
$$m(G,2) \ge \frac{1}{C \cdot np^2\log (np)}$$
\end{theorem}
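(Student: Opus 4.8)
The plan is to follow the exploration-process framework of~\cite{JanLuc} set up above, and to collapse the survival criterion into a single large-deviation event to which a union bound over all candidate seed sets can be applied. Recall that if $A$ is a contagious set with $|A| = a$, then the exploration process started from $A$ cannot die out before every vertex has been used, so in particular $A(t) - t > 0$ holds at the chosen time $t$; writing $A(t) = a + S(t)$ with $S(t) \sim {\rm Bin}(n-a, \pi(t))$ and using $\pi(t) \approx (tp)^2/2$ for $t \ll 1/p$, the necessary condition (\ref{eq:second_cond}) at time $t$ reads $S(t) > t - a$.

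First I would fix the parameters as above: $t := \frac{1}{np^2}$ (legitimate since $tp = 1/d \to 0$, where $d := np$), $t_0 := t + 1$, and $a := \frac{1}{C\log(np)\cdot np^2}$ with $C$ a large constant. For $C$ large enough one has $a \le t/10$, hence $t - a \ge \frac{9}{10}t = \frac{9n}{10d^2}$. Thus \emph{any} contagious set of size $a$ would witness, for its own exploration process, the event $\{S(t) > \frac{9n}{10d^2}\}$. Since the mean of $S(t)$ is at most $n\,\pi(t) \approx \frac{n}{2d^2}$, this is an excess over the mean by a fixed constant factor (roughly $9/5$), so a Chernoff bound gives, for each fixed set $A$ of size $a$, $\Pr[\,S_A(t) > \frac{9n}{10d^2}\,] \le e^{-\Omega(n/d^2)}$, with the constant in the exponent not depending on $C$.

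The remaining step is a union bound over the at most ${n \choose a} \le (en/a)^a$ choices of $A$. Here $en/a = O(d^2\log d)$, so the number of candidate sets is at most $\big(O(d^2\log d)\big)^a = \exp\!\big(a\cdot O(\log d)\big) = \exp\!\big(O(n/(Cd^2))\big)$; crucially this cost shrinks as $C$ grows, whereas the per-set Chernoff saving $e^{-\Omega(n/d^2)}$ does not. Hence for $C$ large the product is $\exp(-\Omega(n/d^2))$, which is $o(1)$ since $n/d^2 = n^{\Omega(1)} \to \infty$ in our range of $p$. Therefore with probability $1 - o(1)$ no set of size $a$ satisfies (\ref{eq:first_cond}) at time $t$, so no set of size $a$ is contagious; this is precisely $m(G,2) \ge a = \frac{1}{C\,np^2\log(np)}$.

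I expect the only genuinely delicate point to be the stochastic description of $S(t)$: one must argue that, uniformly over $A$, the number of vertices outside $A$ activated by step $t$ of the on-demand edge exposure is stochastically dominated by ${\rm Bin}(n-a,\pi(t))$ despite the dependencies introduced by the ``not yet active or used'' bookkeeping, and that (\ref{eq:first_cond}) is genuinely necessary for contagiousness rather than an artifact of a particular exploration order. Both are supplied by~\cite{JanLuc}; the only ingredient beyond that reference is the observation that we need to control $S(t)$ at the single time $t = 1/(np^2)$ rather than the running minimum $\min_{t<t_0}(S(t)-t)$, and this economy of ``bad events'' is exactly what keeps the union-bound cost below the per-set Chernoff saving and yields the stated bound.
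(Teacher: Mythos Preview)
Your proposal is correct and follows essentially the same route as the paper: fix $t=\frac{1}{np^2}$ and $a=\frac{1}{C\log(np)\,np^2}$, reduce the survival of any seed set of size $a$ to the single large-deviation event $S(t)>\frac{9n}{10d^2}$ for a binomial with mean $\approx\frac{n}{2d^2}$, apply Chernoff to get $e^{-\Omega(n/d^2)}$ per set, and finish with a union bound over $\binom{n}{a}\le (O(d^2\log d))^a=\exp(O(n/(Cd^2)))$ candidate sets. Your closing remarks about stochastic domination of $S(t)$ and the sufficiency of checking a single time $t$ make explicit exactly what the paper imports from~\cite{JanLuc}.
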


\medskip
\section{Bounds for $m(G,r)$: $r>2$}
\label{sec:bigthreshold}
In this section we give upper bounds for $m(G,r)$ where $r$ is a small constant (e.g., 3,4) not depending on $d$. The ideas are similar to Section~\ref{sec:spectral}, hence our proofs are less detailed.

\begin{lemma}
\label{lem:general_g,r}
Let $G$ be an $(n,d,\lambda)$-graph such that $\lambda<\delta d$ and $\delta<1$. Suppose that the activation threshold of every vertex is $r$ which is independent of $d$. Then every set of size larger than $\frac{(r-1)n}{(1-\delta)d}$ is contagious.
\end{lemma}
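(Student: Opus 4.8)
The plan is to mimic the proof of Lemma~\ref{lem:general_g} essentially verbatim, replacing the threshold $2$ by $r$ and adjusting the edge‑counting inequality accordingly. First I would take a set $S$ that is not contagious, and pass to a superset that is \emph{inclusion‑maximal} among active sets; this is legitimate because if $\langle A_0\rangle \neq V$ for $|A_0| = |S|$ then $\langle A_0 \rangle$ itself is active, non‑contagious, and contains no vertex outside it that can ever be activated. So without loss of generality every vertex $u \in V \setminus S$ has $\deg_S(u) \le r-1$, since otherwise $u$ would have been activated.

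Next I would count edges across the cut $(S, V\setminus S)$ in two ways. On one hand, summing $\deg_S(u)$ over $u \in V \setminus S$ gives $e(S, V\setminus S) \le (r-1)\,|V\setminus S| = (r-1)(n - |S|)$. On the other hand, Lemma~\ref{lem:expansion} (edge expansion from the spectral gap) gives $e(S, V\setminus S) \ge \frac{(d-\lambda_2)|S|(n-|S|)}{n} \ge \frac{(1-\delta)d\,|S|(n-|S|)}{n}$, using $\lambda_2 \le \lambda < \delta d$. Combining the two bounds and cancelling the common factor $(n-|S|)$ (which is positive, since $S \neq V$ as $S$ is not contagious) yields $\frac{(1-\delta)d\,|S|}{n} \le r-1$, i.e. $|S| \le \frac{(r-1)n}{(1-\delta)d}$. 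Hence any active set strictly larger than this bound must in fact be all of $V$, and since activating a contagious candidate set of that size produces an active set at least that large, it must be contagious; this gives the claimed statement (for the maximal superset, and therefore for any set of that size, since enlarging a non‑contagious set to its closure only increases its size).

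There is no real obstacle here — the argument is a direct generalization. The one point to be careful about is the direction of the final logical step: the lemma asserts that a \emph{set} of size exceeding $\frac{(r-1)n}{(1-\delta)d}$ is contagious, so I should phrase the contrapositive cleanly, namely that the closure $\langle S\rangle$ of any non‑contagious active set has size at most $\frac{(r-1)n}{(1-\delta)d}$, and then note that $|S| \le |\langle S\rangle|$. A second minor check is that Lemma~\ref{lem:expansion} is stated for a partition into two nonempty parts, so I must record that $S$ is nonempty (it contains the original seeds) and proper (it is not contagious). Everything else is the same one‑line edge count as in the $r=2$ case.
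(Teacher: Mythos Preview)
Your proposal is correct and follows essentially the same argument as the paper: pass to an inclusion-maximal active set, bound $e(S,V\setminus S)$ above by $(r-1)(n-|S|)$ and below via Lemma~\ref{lem:expansion}, and conclude $|S|\le \frac{(r-1)n}{(1-\delta)d}$. The extra care you note about the contrapositive and the nonemptiness of both parts is fine but not strictly needed beyond what the paper does.
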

\begin{proof}
Consider a set $S$ of size $|S|$ that is not contagious. We can assume without loss of generality that $S$ is inclusion-maximal with respect to being active (namely, \emph{every} vertex \emph{not} belonging to $S$ is not active). For every $u \in V \setminus S$ it holds that $deg_S(u) \leq r-1$. Thus $e(S, V\setminus S) \leq (r-1)(n-|S|)$. On the other hand, by Lemma ~\ref{lem:expansion} $$e(S, V\setminus S)  \geq \frac{(1-\delta)d|S|(n-|S|)}{n}.$$
Combining these inequalities we have that
$$\frac{(1-\delta)d|S|(n-|S|)}{n} \leq (r-1)(n-|S|)$$
Hence $|S| \leq \frac{(r-1)n}{(1-\delta)d}$.
\end{proof}
\medskip
\begin{theorem} \label{thm:large_girth_r}
Let $G$ be a $d$-regular graph with girth $\Omega(\log\log d)$. Then there is a contagious set of size $C(r)nd^{-\frac{r}{r-1}}$ where $C(r)$ is a constant depending only on $r$.
\end{theorem}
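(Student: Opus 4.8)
The plan is to adapt the tree-based amplification argument of Theorem~\ref{thm:girth} to the threshold~$r$ setting, replacing Lemma~\ref{lem:tree} with its $r$-neighbor analogue and then invoking Corollary~\ref{cor:partialinfection}. First I would observe that since $G$ has girth $\Omega(\log\log d)$, every vertex $v$ is the root of a complete $(d-1)$-regular tree of depth $k = \Theta(\log\log d)$. Now I run $r$-neighbor bootstrap percolation on this tree: activate each leaf independently with probability $p$, and let $p_i$ denote the probability that a node at distance $k-i$ from a leaf level (equivalently, $i$ levels above the leaves) becomes active. The key recursion is that an internal node with $d$ children becomes active once at least $r$ of its children are active, so $p_{j+1} \ge \Pr(\mathrm{Bin}(d,p_j) \ge r) \ge \binom{d}{r}p_j^r(1-p_j)^{d-r}$.

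The heart of the matter is to choose $p = \Theta(d^{-r/(r-1)})$ and show the amplification inequality: writing $p_j = g_j / d^{r/(r-1)}$, as long as $p_j \le 1/d$ we have $(1-p_j)^{d-r} \ge \Omega(1)$ and $\binom{d}{r} \ge \Omega(d^r/r!)$, so $p_{j+1} \ge \Omega(d^r p_j^r) = \Omega(d^r \cdot d^{-r^2/(r-1)} g_j^r) = \Omega(d^{-r/(r-1)} g_j^r)$, i.e.\ $g_{j+1} \ge c_r g_j^r$ for a constant $c_r$ depending only on $r$. Choosing the leading constant in $p$ large enough that $g_0 = g(k)$ satisfies $c_r g_0^{r-1} \ge 2$ (say), the sequence $g_j$ grows doubly-exponentially, so after $k = O(\log\log d)$ levels we reach $p_k \ge 1/d$, and one more level (using that the root has $d$ children, each active with probability $\ge 1/d$, so $\Pr(\mathrm{Bin}(d,1/d)\ge r) = \Omega(1)$ for fixed $r$) pushes the root's activation probability to a positive constant $1/C(r)$. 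This is the analogue of item~1 of Lemma~\ref{lem:tree}; I do not even need the sharper item~2 here, since Corollary~\ref{cor:partialinfection} only requires a constant activation probability. Indeed, once every vertex of $G$ is activated with probability $\ge 1/C(r)$ when seeds are chosen independently with probability $p = \Theta(d^{-r/(r-1)})$, Corollary~\ref{cor:partialinfection} yields a contagious set of size at most $C(r)pn = O(n d^{-r/(r-1)})$, which is the claimed bound (absorbing constants into $C(r)$).

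The step I expect to require the most care is verifying that the doubly-exponential growth of $g_j$ actually closes within $k = \Theta(\log\log d)$ steps while the side condition $p_j \le 1/d$ remains valid throughout — that is, that $g_j$ does not overshoot $d^{r/(r-1)-1} = d^{1/(r-1)}$ before the last level, which would break the $(1-p_j)^{d-r} = \Omega(1)$ estimate. This is handled exactly as in the $r=2$ case: the induction is run only while $p_j \le 1/d$, and one checks that the very first level at which $p_j$ exceeds $1/d$ is already level $k-1$ or later (this pins down the precise constant hidden in $\mathrm{girth} = \Omega(\log\log d)$, namely one needs $k \ge \log_r\log d + O(1)$, and $\log_r\log d = O(\log\log d)$ for fixed $r$). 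The rest — the binomial tail lower bound $\Pr(\mathrm{Bin}(d,p)\ge r) \ge \binom{d}{r}p^r(1-p)^{d-r}$ and the arithmetic of the exponents — is routine and parallels Lemma~\ref{lem:tree} verbatim with $2$ replaced by $r$ and $d^2$ replaced by $d^{r/(r-1)}$.
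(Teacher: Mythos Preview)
Your proposal is correct and follows essentially the same approach as the paper's proof: both run the $r$-neighbor amplification on the $(d-1)$-ary tree guaranteed by the girth condition, establish the recursion $g_{j+1}\ge c_r g_j^r$ (you via the bound $\Pr(\mathrm{Bin}(d,p_j)\ge r)\ge\binom{d}{r}p_j^r(1-p_j)^{d-r}$, the paper via the Poisson approximation $e^{-p_jd}(p_jd)^r/r!$), track $p_j$ up to the $1/d$ threshold after $O(\log_r\log d)$ levels, and then apply Corollary~\ref{cor:partialinfection}. The paper makes the constants explicit by taking $h(k)=(2er!)^{2/r}d^{1/r^{k-1}}$, but the substance is identical.
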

\begin{proof}
The proof is similar to the proof of Lemma ~\ref{lem:tree}. Again, we consider $T_{d,k}$ the complete $d$-regular tree of depth $k$.
Recall that a vertex in $T_{d,k}$ is said to be in level $\ell$ with $0 \le \ell \le k$ if its distance from the root is $\ell$.
Activate all the leafs of $T_{d,k}$ independently with probability $h(k)d^{-\frac{r}{r-1}}$ where $h(k)=(2 e \cdot r!)^{\frac{2}{r}} d^{\frac{1}{r^{k-1}}}$
Let $p_i$ ($0 \leq i \leq k$) be the probability that a vertex in level $k-i$ gets activated.
Hence $p_0=p$ and $p_k$ is the probability of the root being activated in the bootstrap percolation process. We shall write $p_i = h_id^{-\frac{r}{r-1}}$ with $h_0 = h(k)$.  An internal vertex $w$ of the tree becomes activated if it has at least $r$ active children. Hence for $j<k$, using the Poisson approximation $\Pr({\rm Bin}(d,q)=r) \sim e^{-qd}(qd)^r/r!$ we get
$$p_{j+1} \geq \Pr({\rm Bin}(d,p_j) \geq r) \sim e^{-p_jd}(p_jd)^r/r!.$$
As long as $p_j \leq \frac{1}{d}$ then we have that $h_{j+1} \ge \frac{1}{2 \cdot e \cdot r!}(h_j)^r$, and by induction we have that
$$p_{i} \geq (2 e \cdot r!)^{\frac{2}{r}}(\frac{h_0}{(2 e \cdot r!)^{\frac{2}{r}}})^{r^i}d^{-\frac{r}{r-1}} = (2 e \cdot r!)^{\frac{2}{r}}d^{\frac{1}{r^{k-1-i}}}d^{-\frac{r}{r-1}}.$$
Substituting $i = k-1$, children of the root have probability at least $\frac{1}{d}$ to become active, implying that $p_k \ge B$ where $B>0$ is a constant independent of $d$.
The theorem now follows from Corollary~\ref{cor:partialinfection}.
\end{proof}
\medskip

\begin{theorem}
Given an integer $l$, let $G$ be an $(n,d,\lambda)$ graph such $\lambda \le \frac{1}{\sqrt{l}}d$, and $l$ is sufficiently large.
Then $m(G,r)=O(\frac{n}{l^{\frac{r}{r-1}}})$. In particular if $\lambda=O(\sqrt{d})$ then $m(G,r)=O(\frac{n}{d^{\frac{r}{r-1}}})$.
\end{theorem}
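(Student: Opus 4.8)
The plan is to combine the tree-embedding result of Balogh--Csaba--Pei--Samotij (Theorem~\ref{thm:tree_embed}) with the recursive amplification argument of Theorem~\ref{thm:large_girth_r}, in the same way that Theorem~\ref{thm:Expander} is obtained from Theorem~\ref{thm:girth} in the $r=2$ case. First I would apply Theorem~\ref{thm:tree_embed}: since $\lambda \le \frac{1}{\sqrt{l}}d$, one can choose a constant $\epsilon$ (say $\epsilon = 1/2$) so that the hypothesis $\lambda < \frac{\epsilon d}{\sqrt{8\ell}}$ holds for a branching factor $\ell = \Omega(l)$; this embeds, rooted at any prescribed vertex $v$, a complete $\ell$-ary tree of depth $k$, where $k$ can be taken as large as $\Omega(\log_l \log n)$ (since trees on $(1-\epsilon)n$ vertices are allowed, and an $\ell$-ary tree of depth $k$ has $\sim \ell^k$ vertices). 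Note the embedded tree need not be induced, but that is harmless: extra edges among tree vertices can only help activation.

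Next I would run the bootstrap percolation analysis on this embedded tree exactly as in the proof of Theorem~\ref{thm:large_girth_r}: activate each leaf independently with probability $p = \Theta(\ell^{-r/(r-1)})$, let $p_i$ be the probability a vertex at distance $k-i$ from the root is activated, and use the Poisson-approximation recursion $p_{j+1} \ge \Pr(\mathrm{Bin}(\ell, p_j) \ge r) \sim e^{-\ell p_j}(\ell p_j)^r/r!$. Writing $p_i = h_i \ell^{-r/(r-1)}$, one gets $h_{j+1} \ge c_r h_j^r$ for a constant $c_r$ as long as $p_j \le 1/\ell$, so the $h_i$ grow doubly-exponentially in $i$; with $k = \Omega(\log_l \log n)$ levels this drives the root-activation probability up to $1 - o(1/n^2)$. (Here, unlike Theorem~\ref{thm:girth}, we can afford $k$ large enough to reach probability $1-o(1/n^2)$ rather than just $\Omega(1)$, because $l$ can be taken $\gg 2$ and the tree has room to be deep; alternatively one can get only $\Omega(1)$ per vertex and instead invoke the spectral argument below more aggressively.)

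Finally I would take a union bound over all $n$ vertices: activating each vertex of $G$ independently with probability $p = \Theta(l^{-r/(r-1)})$ leaves every vertex activated with probability $1 - o(1/n)$, hence the whole graph is activated with high probability, giving $m(G,r) \le O(n/l^{r/(r-1)})$ via the random-parallel algorithm (no appeal to Corollary~\ref{cor:partialinfection} is needed, just as in Theorem~\ref{thm:Expander}). For the special case $\lambda = O(\sqrt d)$ one takes $l = \Theta(d)$, yielding $m(G,r) = O(n/d^{r/(r-1)})$. The one genuine subtlety — the analogue of Lemma~\ref{lem:general_g} used in Theorem~\ref{thm:Expander} to avoid needing root-activation probability close to $1$ — is that here, for $r > 2$, one should instead rely on Lemma~\ref{lem:general_g,r}: it suffices to activate a set of size $\frac{(r-1)n}{(1-\delta)d}$, so one only needs each vertex activated with probability $\Omega(1/d)$, which the tree amplification with a shallower depth $k = \Theta(\log\log d)$ already delivers. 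I expect the main (minor) obstacle to be bookkeeping the constants in the recursion $h_{j+1} \ge c_r h_j^r$ so that the inductive hypothesis $p_j \le 1/\ell$ is preserved across all $k$ levels while still reaching the required final probability; this is the same calculation carried out in Lemma~\ref{lem:tree} and Theorem~\ref{thm:large_girth_r}, so no new ideas are needed.
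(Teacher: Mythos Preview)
Your proposal is correct and follows essentially the same route as the paper: the paper's own proof is the one-line ``Follows from Theorem~\ref{thm:large_girth_r}, the proof of Theorem~\ref{thm:tree_embed}, and Lemma~\ref{lem:general_g,r},'' and you have unpacked exactly these three ingredients (tree embedding in the expander, the $r$-ary amplification recursion on that tree, and the spectral lemma that any set of size $\frac{(r-1)n}{(1-\delta)d}$ is contagious). Your discussion of the two ways to finish (union bound after reaching probability $1-o(1/n^2)$, versus invoking Lemma~\ref{lem:general_g,r} once each vertex is active with probability $\Omega(1/d)$) is also in line with how the paper handles the $r=2$ case in Theorem~\ref{thm:Expander}.
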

\begin{proof}
Follows from Theorem \ref{thm:large_girth_r}, the proof of Theorem \ref{thm:tree_embed}, and Lemma~\ref{lem:general_g,r}.
\end{proof}
\medskip
As in the $r=2$ case, we show our upper bounds are nearly best possible, by analyzing $m(G,r)$ for random $d$-regular graphs.
The following theorem provides lower bounds on $m(G,r)$ when $G$ is sampled according to the configuration model, indicating (in a similar way to the $r=2$ case) that there are $d$-regular graphs for which $\lambda(G) = O(\sqrt{d})$ for which our upper bounds (regarding $m(G,r)$) are nearly tight.
\begin{theorem}

Fix $\epsilon>0$. Then there exist $d_0$ such that for every $d>d_0$ if $G$ is sampled from the configuration model $G^*(n,d)$, then w.h.p. $$ m(G,r) \geq nd^{-(\frac{r}{r-1}+\epsilon)}.$$
\end{theorem}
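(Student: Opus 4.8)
The plan is to mirror the proof of Theorem~\ref{thm:Deadline}, replacing the ``average degree near $4$'' obstruction used there by an ``average degree near $2r$'' obstruction. First I would prove the threshold-$r$ analogue of Lemma~\ref{lem:subgraph}: if an $n$-vertex graph $G$ has a contagious set of size $t_0$, then for every $t$ with $t_0 \le t \le n$ there is a set of $t$ vertices inducing at least $r(t-t_0)$ edges. The argument is verbatim that of Lemma~\ref{lem:subgraph}: fix a contagious set $A_0$ with $|A_0|=t_0$ and an activation order $v_1,\dots,v_{n-t_0}$ of $V\setminus A_0$ in which every $v_i$ has at least $r$ neighbours in $A_0\cup\{v_1,\dots,v_{i-1}\}$; then $B_t:=A_0\cup\{v_1,\dots,v_{t-t_0}\}$ spans at least $r(t-t_0)$ induced edges. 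Hence it suffices to show that, with $t_0:=nd^{-(\frac{r}{r-1}+\epsilon)}$ and $t:=Ct_0$ for a constant $C=C(r,\epsilon)>1$ to be fixed, a graph $G$ drawn from $G^*(n,d)$ almost surely contains no $t$-vertex subgraph with at least $r(t-t_0)$ edges.

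The second step is a first-moment bound, in the same spirit as Theorem~\ref{thm:Deadline}. Using Lemma~\ref{lem:config} to bound the probability that a prescribed set of $r(t-t_0)$ pairs all occur as edges, the expected number of such subgraphs is at most
\[
\binom{n}{t}\binom{\binom{t}{2}}{r(t-t_0)}\Bigl(\frac{2d}{n}\Bigr)^{r(t-t_0)}.
\]
Bounding $\binom{n}{t}\le (en/t)^t$ and $\binom{\binom{t}{2}}{r(t-t_0)}\le\bigl(et^2/(2r(t-t_0))\bigr)^{r(t-t_0)}$, writing $t=Ct_0$ and $t-t_0=(C-1)t_0$, and collecting everything into a $t_0$-th power, the bound takes the form $\bigl[O_{r,C}(1)\cdot n^{\,C-r(C-1)}\,t_0^{\,r(C-1)-C}\,d^{\,r(C-1)}\bigr]^{t_0}$. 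Substituting $t_0=nd^{-\beta}$ with $\beta=\frac{r}{r-1}+\epsilon$, the powers of $n$ cancel (the two exponents sum to zero), and the exponent of $d$ equals $r(C-1)(1-\beta)+\beta C=\frac{r}{r-1}+\epsilon\bigl(r-C(r-1)\bigr)$. Choosing $C>\frac{r}{r-1}+\frac{r}{\epsilon(r-1)^2}$ makes this a negative constant, so for all sufficiently large $d$ (large enough also to absorb the $O_{r,C}(1)$ factor) the bracketed quantity is $o(1)$, and the whole expectation is $o(1)$ since $t_0\to\infty$ as $n\to\infty$. By Markov's inequality and the generalized Lemma~\ref{lem:subgraph}, $m(G,r)\ge t_0=nd^{-(\frac{r}{r-1}+\epsilon)}$ with probability $1-o(1)$.

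The only place requiring real care is the bookkeeping of constants: one must verify that a single $C$ depending on $r$ and $\epsilon$ (but not on $n$ or $d$) suffices, and that the resulting exponent of $d$ is genuinely negative; the remaining technicalities (integrality of $r(t-t_0)$, the side condition $r(t-t_0)<(nd-1)/4$ needed to invoke Lemma~\ref{lem:config}, and $t\le n$) all hold trivially once $d$ is large. Finally, exactly as Theorem~\ref{thm:LB} is deduced from Theorem~\ref{thm:Deadline}, one can further condition on $G$ being simple (Theorem~\ref{thm:wormald}), having $\lambda(G)=O(\sqrt d)$ (Theorem~\ref{thm:friedman}), and having large girth, so that these $d$-regular lower-bound examples also carry the expansion properties against which the upper bounds are to be compared.
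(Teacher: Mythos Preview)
Your proposal is correct and follows essentially the same approach as the paper: generalize Lemma~\ref{lem:subgraph} to threshold $r$, then apply the first-moment bound with Lemma~\ref{lem:config} exactly as in Theorem~\ref{thm:Deadline}. Your bookkeeping is in fact slightly more explicit than the paper's (you track the exact threshold $C>\frac{r}{r-1}+\frac{r}{\epsilon(r-1)^2}$, whereas the paper simply takes $C>\frac{r^2}{(r-1)^2\epsilon}$), but the computations are equivalent and the argument is the same.
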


\begin{proof}
As in Lemma~\ref{lem:subgraph}, if there exists a contagious set of size $t_0$ then for every $t$ such that $t_0 \leq t \leq n$ there is a subgraph of $G$ induced on $t$ vertices, spanning at least $r(t-t_0)$ edges. Set $t_0=nd^{-\frac{r}{r-1}-\epsilon}$, $t=C\cdot t_0$, where $C=C(\epsilon)>2$ will be determined later.
The probability $G$ sampled from $G^*(n,d)$ contains a subgraph of size $t$ spanning at least $r(t-t_0)$ edges is at most
$${n \choose t}{{t \choose 2} \choose r(t-t_0)}\left(\frac{2d}{n}\right)^{r(t-t_0)} \leq $$
$$\left(\frac{en}{t}\right)^t\left(\frac{(Ct_0)^2}{Ct_0}\right)^{r(C-1)t_0} \left(\frac{2d}{n}\right)^{r(t-t_0)}\leq$$
$$\left[\left(\frac{en}{Ct_0}\right)^C\left(\frac{Ct_0 2d}{n}\right)^{r(C-1)}\right]^{t_0}=$$
$$\left[O(1)\frac{t_0^{C(r-1)-r}d^{r(C-1)}}{n^{C(r-1)-r}}\right]^{t_0}.$$
Substituting $t_0=nd^{-\frac{r}{r-1}-\epsilon}$ the above expression simplifies to
$$\left[O(1)d^{r(C-1)-(\frac{r}{r-1}+\epsilon)(C(r-1)-r)}\right]^{t_0}.$$ Taking $C=C(\epsilon)>\frac{r^2}{(r-1)^2\epsilon}$ (we can assume $\epsilon$ is sufficiently small) we get that the probability there exists a contagious set of cardinality smaller than $n \cdot d^{\frac{r}{r-1}+\epsilon}$ is $o(1)$.
\end{proof}
\medskip

\section*{Acknowledgements}
The fourth author would like to thank Boris Pittel for answering questions regarding \cite{BP}, Robert Krauthgamer for his suggestion to study the number of generations until complete activation, and Elchanan Mossel for discussions about bootstrap percolation in random graphs and for referring him to \cite{Janson}.

\begin{appendix}
\section{Hardness of target set selection in regular graphs}
\label{sec:hardness}
We set the activation threshold $r$ to be~2 throughout this section. Recall that it is known that $m(G,2)$, the size of the smallest contagious set, is hard to approximate within any constant factor (and even for factors that depend on $n$)~\cite{Chen09}. The following theorem implies that approximating $m(G,2)$ in regular graphs is roughly as hard as doing so in arbitrary graphs.

\begin{theorem}
\label{thm:regularize}
There is a polynomial time reduction that for every $n$ and every $2 \le \Delta \le n-1$, given an arbitrary graph $G$ with $n$ vertices and maximum degree $\Delta$, transforms $G$ into a $\Delta$-regular graph $H$ on $O(n\Delta^2)$ vertices, such that
$$m(G,2) \le m(H,2) \le 6m(G,2)$$
\end{theorem}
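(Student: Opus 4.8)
The plan is to build $H$ by leaving $G$ completely untouched as an induced subgraph (we add no edges inside $V(G)$) and attaching to each vertex $v$ with $d_G(v)=d_v<\Delta$ a \emph{padding gadget} of $O(\Delta)$ fresh vertices whose sole job is to raise degrees to exactly $\Delta$. The gadgets will be built out of cliques, for two reasons that pull in opposite directions: a clique needs two seeds to become active on its own, which keeps the padding from being ``ignited'' cheaply (this is what we need for the lower bound $m(G,2)\le m(H,2)$), yet a clique also cascades instantly once two of its vertices are active, which is what we need for the upper bound. Concretely, a vertex of deficiency $\delta_v=\Delta-d_v$ contributes $\delta_v$ clique half-edges; grouping them into $O(\delta_v/\Delta+1)$ cliques of size $\Theta(\Delta)$, with the remaining $\le 2$ edges of each clique vertex going to vertices of $V(G)$ and the cliques attached to $V(G)$ through \emph{pairs} of original vertices, uses $O(\delta_v+\Delta)$ new vertices per original vertex and $O(n\Delta)$ in total, comfortably inside the claimed $O(n\Delta^2)$. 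The only fussy point in the construction itself is parity: a self-contained clique gadget realizes its prescribed degree sequence over $\mathbb{Z}$ only under an evenness condition on $\delta_v$ (and on $nd$), which is repaired by adding $O(1)$ linking edges between distinct gadgets, or one extra $K_{\Delta+1}$-component; this does not affect the asymptotics, and the $\Delta^2$ leaves ample slack.

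For the upper bound $m(H,2)\le 6\,m(G,2)$: let $A_0$ be contagious in $G$ with $|A_0|=m(G,2)$. Since $G=H[V(G)]$, activating $A_0$ inside $H$ activates all of $V(G)$ (the extra edges into the padding only supply additional active neighbours). Once $V(G)$ is fully active, each padding clique has (at least) two active neighbours attached to it in $V(G)$, hence activates entirely, and then $H$ is fully active. Thus $m(H,2)\le m(G,2)$ — well within $6\,m(G,2)$ — and the remaining slack absorbs the $O(1)$ auxiliary parity components, each of which costs two extra seeds.

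The lower bound $m(G,2)\le m(H,2)$ is the crux. Take a contagious set $A$ of $H$ with $|A|=m(H,2)$ and form $A'\subseteq V(G)$ by keeping $A\cap V(G)$ and replacing every padding seed by its $\le 2$ neighbours in $V(G)$, so $|A'|\le 2|A|$. I would then prove $A'$ is contagious in $G$ by induction along the $H$-cascade from $A$, maintaining two invariants simultaneously: every vertex of $V(G)$ that becomes active in $H$ is already active in the $G$-cascade from $A'$, and a padding clique becomes active in $H$ only after its $V(G)$-attachment points are active in the $G$-cascade. The key local lemma is that a padding clique cannot \emph{lead} the cascade: the first of its vertices to fire has at most one active neighbour outside the clique (the graph is simple), so in the absence of two internal seeds it can fire only after one of its attachment points — which lies in $V(G)$ — has fired, which is exactly the inductive hypothesis. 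This already yields $m(G,2)/2\le m(H,2)\le m(G,2)$; to upgrade the $\tfrac12$ to $1$, as the statement demands, one wants the stronger fact that a single padding seed is \emph{provably wasteful} — e.g.\ that some minimum contagious set of $H$ avoids padding vertices altogether, because a padding seed contributes nothing toward activating $V(G)$ (the padding stays inert until $V(G)$ is active) while the padding activates for free once $V(G)$ is active, so such a seed may be deleted and $A=A'\subseteq V(G)$ is itself contagious in $G$. Engineering the gadget so that this ``no padding seeds'' property coexists with the free-activation property is the delicate part, and it is precisely where a constant factor may legitimately be lost — which the $6$ in the statement accommodates; the parity bookkeeping and the handling of seeds landing in auxiliary components are the remaining, routine, sources of the constant.
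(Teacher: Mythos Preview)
Your construction has a genuine gap in the lower bound $m(G,2)\le m(H,2)$, and the gap is exactly at your ``key local lemma.'' You assert that the first padding-clique vertex to fire has at most one active neighbour outside the clique, citing simplicity of the graph. Simplicity gives you at most one edge \emph{to each} outside vertex, not at most one outside neighbour in total. In your own description each clique vertex carries up to two edges into $V(G)$; if those two endpoints $u,w\in V(G)$ are both active, the clique vertex fires with \emph{zero} help from inside the clique. Once two such clique vertices fire, the whole clique goes, and then any third vertex $z\in V(G)$ that happens to have two neighbours in that clique becomes active in $H$ purely through the padding --- even though $z$ need not be a $G$-neighbour of $u$ or $w$ at all. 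Your inductive invariant (``every $V(G)$-vertex active in $H$ is already active in the $G$-cascade from $A'$'') breaks at exactly this $z$. The attempted upgrade from $|A'|\le 2|A|$ to $|A'|\le |A|$ is downstream of this and cannot repair it: the problem is not that padding seeds are wasteful, it is that padding \emph{creates new activation paths between distinct original vertices}.

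The paper's construction avoids this by a device you do not have: it takes \emph{six disjoint copies} of $G$ and, for each vertex $v$ of deficiency $\Delta-d_v$, attaches $\Delta-d_v$ gadgets whose outside edges go \emph{only to the six copies of that same $v$}. (Concretely, the gadget is $K_{\Delta+1}$ with a triangle of edges deleted; its three degree-$(\Delta-2)$ ``connector'' vertices each get two edges, one to each of two copies of $v$, using up all six copies.) Because every gadget is tied to a single original vertex, there is a well-defined projection $\pi:V(H)\to V(G)$ sending all six copies of $v$ and all of $v$'s gadget vertices to $v$. For any contagious $S\subseteq V(H)$, set $S'=\pi(S)$; one checks that $\pi^{-1}(V(G)\setminus\langle S'\rangle)$ contains no seed and is closed under the $H$-process (a copy $u_i$ with $u\notin\langle S'\rangle$ has at most one $H$-neighbour in $\pi^{-1}(\langle S'\rangle)$, and gadget vertices of $u$ have none), forcing $\langle S'\rangle=V(G)$. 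This gives $m(G,2)\le m(H,2)$ with no constant loss; the factor $6$ appears only on the other side, from taking six copies of a $G$-contagious set to handle all copies and hence all gadgets. Your single-copy-plus-shared-cliques design cannot support such a projection, and that is the missing idea.
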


\begin{proof}
Given $\Delta$, we introduce a certain graph that we call a {\em $\Delta$-regularizer}, which will be used as a gadget in our reduction. The $\Delta$-regularizer is a complete graph on $\Delta + 1$ vertices, but with three of its edges removed. The removed edges are picked in such a way that they form a triangle. Hence three vertices, that we call {\em connector vertices}, have degree $\Delta - 2$, and the remaining vertices have degree $\Delta$. Observe that if the three connector vertices are activated, this activates the remaining vertices in the $\Delta$-regularizer. (In fact, when $\Delta \ge 4$, any two vertices are a contagious set for the $\Delta$-regularizer, but this fact is not needed for our proof.)

Given a graph $G(V,E)$ on $n$ vertices and with maximum degree $\Delta$, our reduction works as follows. Make six independent copies of $G$ (with no edges between different copies). Hence now every vertex $v \in V$ has six copies, $v_1, \ldots, v_6$. Let $d_v$ denote the degree of $v$ in $G$. If $d_v < \Delta$, we wish to raise the degrees of each of the vertices of $v_1, \ldots, v_6$ to $\Delta$. To do this we introduce $\Delta - d_v$ fresh copies of the the $\Delta$-regularizer gadget. For every copy of these $\Delta$-regularizers, we introduce edges between its three connector vertices and the six copies of $v$, such that each copy of $v$ gets one new edge, and each connector vertex gets two new edges. Hence all vertices of the $\Delta$-regularizer become of degree $\Delta$, and every copy of $v$ gets $\Delta - d_v$ new edges, making it of degree $\Delta$ as well. Repeating this process for every vertex $u \in V$ (each time with fresh copies of $\Delta$-regularizers) completes the description of the $\Delta$-regular graph $H$.

To see that $m(H,2) \le 6m(G,2)$, consider an arbitrary contagious set in $G$, and observe that taking six copies of this set, one in each copy of $G$, will also activate all of $H$.

To see that $m(G,2) \le m(H,2)$, consider an arbitrary contagious set $S$ in $H$, and observe that the following set $S'$ is contagious in $G$: include vertex $v$ in $S'$ if and only if at least one of its six copies or at least one of the vertices in its $\Delta$-regularizers is in $S$.

Further details are omitted from the proof.
\end{proof}

In the statement and proof of Theorem~\ref{thm:regularize} we preferred simplicity, and hence made no attempt to minimize the size of $H$ or to tighten the relation between $m(G,2)$ and $m(H,2)$.

\section{Contagious sets in non-regular expanding graphs}
\label{sec:irregular}
Our work in this manuscript is concerned with contagious sets in regular graphs, and in nearly regular random graphs. In this section we discuss how insights obtained from these results extend to graphs that are not regular. Rather than attempt to formally define expansion in non-regular graphs (there are several alternative definitions that one may consider), we shall limit our discussion to random graphs (under various models), which would qualify as very good expanders under any reasonable definition of expansion.

Let us set the activation threshold $r$ to be~2 throughout this section. A natural model for random irregular graphs is as follows. Given the number of vertices $n$, one first fixes a degree sequence $d_1 \le d_2 \ldots, \le d_n  \le n-1$, where $\sum_i d_i$ is even. We shall assume that $d_1 \ge 2$, because the activation threshold is~2. Thereafter one draws a multigraph at random using the {\em configuration model} with this degree sequence. Namely, a vertex $i$ corresponds to $d_i$ endpoints of edges, and the multi-graph is generated by selecting a random matching between all endpoints. Thereafter, self loops are removed, and among parallel edges, only one edge is maintained. For degree sequences that will interest us, self loops and parallel edges will be rare and their removal will not significantly change the degree sequence.

Rather than study the configuration model directly, it would be simpler to
consider an alternative process for generating a random non-regular graph, which we illustrate by the following example. Let $d$ be roughly $n^{1/4}$ for concreteness. Generate a random graph $G$ of average degree roughly $d$ using the Erdos-Renyi random graph model $G_{n,p}$ with $p = \frac{d}{n-1}$. By the results of~\cite{JanLuc}, a random subset of $\frac{(1+\delta) n}{2d^2}$ vertices is almost surely contagious. By our Theorem~\ref{thm:LB_random}, the smallest contagious set is of size $\Omega\left(\frac{n}{d^2\log np}\right)$. Now modify $G$ to become a non-regular expander $G'$ as follows: pick at random two disjoint sets of vertices $A$ and $B$, each of size $k = \frac{n}{d^2}$, and within every set, unite all vertices of the set to get a single vertex, thus obtaining vertices $a$ and $b$. Removing parallel edges and self loops that might be generated by this process, each of the vertices $a$ and $b$ has degree roughly $\frac{n}{d}$, whereas the degrees of the remaining vertices remain roughly $d$. In $G'$, the set $\{a,b\}$ is almost surely contagious. (Had we not removed parallel edges, each of $a$ or $b$ by itself would be contagious, and the fact that we take both $a$ and $b$ compensates for the removal of parallel edges. Details are omitted.) Moreover, $a$ and $b$ have multiple common neighbors, and any set of two such common neighbors is contagious as well (because it activates $a$ and $b$).

Returning to the configuration model, the above argument shows that for a degree sequence that has $n-2$ vertices of degree roughly $n^{1/4}$ and two vertices of degree roughly $n^{3/4}$, the size of the smallest contagious set in the corresponding random graph is almost sure the minimum possible, namely, two. Moreover, the contagious set need not contain the high degree vertices. Observe that the average degree $\bar{d}$ of $G'$ is roughly $n^{1/4}$, and hence though an upper bound of $O\left(n/(\bar{d})^2\right)$ on the size of the contagious set holds, this upper bound is very far from being tight.

Let us now modify the degree sequence by scaling all degrees by a factor of $1/\log n$. Namely, there are $n-2$ vertices of degree roughly $n^{1/4}/\log n$ and two vertices of degree roughly $n^{3/4}/\log n$. Observe that for the original nearly regular graph $G$, such a scaling would increase the size of the smallest contagious set by a modest polylogarithmic factor. However, this has a dramatic effect regarding $G'$. The vertices $a$ and $b$ no longer correspond to sets that are sufficiently large to be contagious, and hence the size of the smallest contagious set jumps to at least $\Omega\left(\frac{n}{d^2 \log n}\right) = \Omega(\sqrt{n}\log n)$.

The example above was presented so as to convey two messages.

\begin{itemize}

\item Understanding contagious sets in regular graphs leads us a long way towards understanding contagious sets in irregular graphs. Specifically, in the example above, the non-regular graph $G'$ could be analyzed as a graph derived from a nearly regular graph $G$.

\item Results regarding irregular graphs are much more sensitive to a change in the underlying parameters than the results for regular graphs. Multiplying the degree sequence by a small factor has only a small effect on the size of contagious sets in regular graphs, but a dramatic effect in non-regular graphs. Hence for non-regular graphs, even for random ones, we should not expect to have a single simple parameter (such as average degree) that roughly characterizes the size of contagious sets. This is unlike the case of random nearly regular graphs for which the average degree provides a rough characterization.

\end{itemize}

Another comment that we wish to make is that in certain common models for generating random non-regular graphs, analyzing the size of the smallest contagious set is trivial. Consider the following variation of the well known {\em preferential attachment} model~\cite{Barbasi} with parameter $d \ge 2$. One starts with a clique on $d$ vertices. Thereafter, the remaining vertices arrive one by one in an online fashion. Each new vertex connects to $d$ existing vertices chosen at random, according to some rule that involves the current degrees of existing vertices (e.g., with probability proportional to the degree). Regardless of the rule involved, in such graphs the smallest contagious set is always of size two. Every two of the $d$ initial set of vertices will be contagious (proof by induction on the order of arrival of the vertices).

Further discussion of contagious sets in irregular graphs is beyond the scope of the current paper.
\end{appendix}
\end{document}